\newcommand{\subscript}[2]{$#1 _ #2$}
\title[Anisotropic Dynamical Horizon]{Anisotropic Dynamical Horizons\\ Arising in Gravitational Collapse}     
\date{\today}
\author[An]{Xinliang An}
\address{Department of Mathematics\\ National University of Singapore\\ 119076, Singapore}   \email{matax@nus.edu.sg}
\author[Han]{Qing Han}
\address{Department of Mathematics\\
University of Notre Dame\\
Notre Dame, IN 46556, USA} \email{qhan@nd.edu}
\theoremstyle{plain}
\newtheorem{lemma}{Lemma}[section]
\newtheorem{proposition}[lemma]{Proposition}
\newtheorem{theorem}[lemma]{Theorem}
\newtheorem{corollary}[lemma]{Corollary}
\theoremstyle{definition}
\newtheorem{remark}[lemma]{Remark}
\numberwithin{equation}{section}
\begin{document}

\newcommand{\ub}{\underline{u}}
\newcommand{\Cb}{\underline{C}}
\newcommand{\Lb}{\underline{L}}
\newcommand{\Lh}{\hat{L}}
\newcommand{\Lbh}{\hat{\Lb}}
\newcommand{\phib}{\underline{\phi}}
\newcommand{\Phib}{\underline{\Phi}}
\newcommand{\Db}{\underline{D}}
\newcommand{\Dh}{\hat{D}}
\newcommand{\Dbh}{\hat{\Db}}
\newcommand{\omb}{\underline{\omega}}
\newcommand{\omh}{\hat{\omega}}
\newcommand{\ombh}{\hat{\omb}}
\newcommand{\Pb}{\underline{P}}
\newcommand{\chib}{\underline{\chi}}
\newcommand{\chih}{\hat{\chi}}
\newcommand{\chibh}{\hat{\chib}}

\newcommand{\alb}{\underline{\alpha}}
\newcommand{\zeb}{\underline{\zeta}}
\newcommand{\beb}{\underline{\beta}}
\newcommand{\etb}{\underline{\eta}}
\newcommand{\Mb}{\underline{M}}
\newcommand{\oth}{\hat{\otimes}}

\newcommand{\ka}{\kappa}
\newcommand{\x}{\mathcal X}
\newcommand{\h}{\mathcal H}
\newcommand{\xx}{\mathcal X^*}
\newcommand{\xxx}{\mathcal X^{**}}
\newcommand{\y}{\mathcal Y}
\newcommand{\m}{\mathring}
\newcommand{\n}{\mathcal N}
\newcommand{\Om}{\Omega}
\newcommand{\muu}{\mu^*}
\newcommand{\nin}{\noindent}

\gdef\dxd{\displaystyle}
\gdef\x{\xi}
\def\al{\aligned}
\def\eal{\endaligned}
\def\be{\begin{equation}}
\def\ee{\end{equation}}


\def\a {\alpha}
\def\b {\beta}
\def\ab {\alphab}
\def\bb {\betab}
\def\nab {\nabla}

\def\ub {\underline{u}}
\def\th {\theta}
\def\Lb {\underline{L}}
\def\Hb {\underline{H}}
\def\chib {\underline{\chi}}
\def\chih {\hat{\chi}}
\def\chibh {\hat{\underline{\chi}}}
\def\omegab {\underline{\omega}}
\def\etab {\underline{\eta}}
\def\betab {\underline{\beta}}
\def\alphab {\underline{\alpha}}
\def\Psib {\underline{\Psi}}
\def\hot{\widehat{\otimes}}
\def\Phib {\underline{\Phi}}
\def\thb {\underline{\theta}}
\def\t {\tilde}
\def\st {\tilde{s}}
\def\h {\hat}

\def\rhoc{\check{\rho}}
\def\sigmac{\check{\sigma}}
\def\Psic{\check{\Psi}}
\def\kappab{\underline{\kappa}}
\def\betabc {\check{\underline{\beta}}}

\def\d {\delta}
\def\f {\frac}
\def\i {\infty}
\def\l {\bigg(}
\def\r {\bigg)}
\def\S {S_{u,\underline{u}}}
\def\o{\omega}
\def\O{\Omega}\
\def\be{\begin{equation}\begin{split}}
\def\en{\end{split}\end{equation}}
\def\at{a^{\frac{1}{2}}}
\def\af{a^{\frac{1}{4}}}
\def\od{\omega^{\dagger}}
\def\ombd{\underline{\omega}^{\dagger}}
\def\K{K-\frac{1}{|u|^2}}
\def\ut{\frac{1}{|u|^2}}
\def\s{\frac{\delta a^{\frac{1}{2}}}{|u|}}
\def\Kb{K-\frac{1}{(u+\underline{u})^2}}
\def\bf{b^{\frac{1}{4}}}
\def\bt{b^{\frac{1}{2}}}
\def\de{\delta}
\def\ls{\lesssim}
\def\om{\omega}
\def\Om{\Omega}
\def\lo{\lambda_1}
\def\lt{\lambda_2}
\def\phib{\bar{\phi}}
\def\bR{\bar{R}}
\def\tR{\tilde{R}}


\newcommand{\e}{\epsilon}
\newcommand{\et} {\frac{\epsilon}{2}}
\newcommand{\ef} {\frac{\epsilon}{4}}
\newcommand{\LH} {L^2(H_u)}
\newcommand{\LHb} {L^2(\underline{H}_{\underline{u}})}
\newcommand{\M} {\mathcal}
\newcommand{\TM} {\tilde{\mathcal}}
\newcommand{\p}{\phi}
\newcommand{\q}{\underline{\psi}\hspace{1pt}}
\newcommand{\Li}{_{L^{\infty}(S_{u,\underline{u}})}}
\newcommand{\Lt}{_{L^{2}(S)}}
\newcommand{\da}{\delta^{-\frac{\epsilon}{2}}}
\newcommand{\db}{\delta^{1-\frac{\epsilon}{2}}}
\newcommand{\D}{\Delta}


\renewcommand{\div}{\mbox{div }}
\newcommand{\curl}{\mbox{curl }}
\newcommand{\trchb}{\mbox{tr} \chib}
\def\trch{\mbox{tr}\chi}
\newcommand{\tr}{\mbox{tr}}

\newcommand{\Ls}{{\mathcal L} \mkern-10mu /\,}
\newcommand{\eps}{{\epsilon} \mkern-8mu /\,}

\newcommand{\xib}{\underline{\xi}}
\newcommand{\psib}{\underline{\psi}}
\newcommand{\rhob}{\underline{\rho}}
\newcommand{\thetab}{\underline{\theta}}
\newcommand{\gammab}{\underline{\gamma}}
\newcommand{\nub}{\underline{\nu}}
\newcommand{\lb}{\underline{l}}
\newcommand{\mub}{\underline{\mu}}
\newcommand{\Xib}{\underline{\Xi}}
\newcommand{\Thetab}{\underline{\Theta}}
\newcommand{\Lambdab}{\underline{\Lambda}}
\newcommand{\vphb}{\underline{\varphi}}

\newcommand{\ih}{\hat{i}}

\newcommand{\tcL}{\widetilde{\mathscr{L}}}

\newcommand{\sRic}{Ric\mkern-19mu /\,\,\,\,}
\newcommand{\sL}{{\cal L}\mkern-10mu /}
\newcommand{\sLh}{\hat{\sL}}
\newcommand{\sg}{g\mkern-9mu /}
\newcommand{\seps}{\epsilon\mkern-8mu /}
\newcommand{\sd}{d\mkern-10mu /}
\newcommand{\sR}{R\mkern-10mu /}
\newcommand{\snab}{\nabla\mkern-13mu /}
\newcommand{\sdiv}{\mbox{div}\mkern-19mu /\,\,\,\,}
\newcommand{\scurl}{\mbox{curl}\mkern-19mu /\,\,\,\,}
\newcommand{\slap}{\mbox{$\triangle  \mkern-13mu / \,$}}
\newcommand{\sGamma}{\Gamma\mkern-10mu /}
\newcommand{\somega}{\omega\mkern-10mu /}
\newcommand{\somb}{\omb\mkern-10mu /}
\newcommand{\spi}{\pi\mkern-10mu /}
\newcommand{\sJ}{J\mkern-10mu /}
\renewcommand{\sp}{p\mkern-9mu /}
\newcommand{\su}{u\mkern-8mu /}

\begin{abstract}

For the study of $3+1$ dimensional Einstein vacuum equations (EVEs), substantial progress has been made recently on the problem of trapped surface formation. However, very limited knowledge of existence and associated properties is acquired on the boundary of the emerged trapped region, i.e., the apparent horizon, which is composed of marginally outer trapped surfaces (MOTSs) and is of great physical importance. In this paper, concerning this emerged apparent horizon we prove a folklore conjecture relating to both cosmic censorship and black hole thermodynamics. In a framework set up by Christodoulou and under a general anisotropic condition introduced by  Klainerman, Luk and Rodnianski, for $3+1$ EVEs we prove that \textit{in the process of gravitational collapse, a smooth and spacelike apparent horizon (dynamical horizon) emerges  from general (both isotropic and anisotropic) initial data. }This dynamical horizon censors singularities formed in gravitational collapse from non-trapped local observers near the center, and it also enables the extension of black hole thermodynamical theory along the apparent horizon to anisotropic scenarios. Our analysis builds on scale-critical hyperbolic method and non-perturbative elliptic techniques. New observations and equation structures are exploited. Geometrically, we furthermore construct explicit finger-type single and multi-valley anisotropic apparent horizons. They are the first mathematical examples of the anisotropic MOTS and the anisotropic apparent horizon formed in dynamics, which have potential applications in geometric analysis, black hole mechanics, numerical relativity and gravitational wave phenomenology.  
\end{abstract}  

\maketitle

\tableofcontents
\section{Introduction} 
We study the evolution of the $3+1$ dimensional Einstein vacuum equations (EVEs)  
\begin{equation}\label{EVEs} 
\mbox{Ric}_{\mu\nu}=0
\end{equation}
with anisotropic characteristic initial data.  {\color{black}In 2005, Ashtekar and Galloway \cite{AG} wrote, \textit{``There is a general expectation that under `favorable' circumstances a MTT (marginally trapped tube)\footnote{apparent horizon} that forms during a gravitational collapse becomes spacelike - i.e.,a DH (dynamical horizon)-soon after collapse.''} In this paper, we prove this folklore conjecture and connect it to the cosmic censorship. 

In 1965, Penrose \cite{Penrose} wrote, \textit{``We are thus presented with what is perhaps the most fundamental unanswered question of general-relativistic collapse theory, namely: does there exists a `cosmic censor' who forbids the appearance of naked singularity, clothing each one in an absolute event horizon?"} The above question raised by Penrose was later formulated into \textit{weak cosmic censorship conjecture}; namely, \textit{arising from generically regular initial data, there is no singularity which is visible from infinity}. In 1979 Penrose \cite{Penrose2} further formulated a more local version of ``cosmic censor", and some of his arguments were formulated into ``strong cosmic censorship conjecture". In \cite{Penrose2} he wrote, \textit{``It would seem, therefore, that if cosmic censorship is a principle of Nature, it should be formulated in such a way as to preclude locally naked singularities."} 

Our result shows that, with general (isotropic and anisotropic) initial data prescribed as in Theorem \ref{main thm},  a local version of weak cosmic censorship holds. The first singularity formed in gravitational collapse is completely hidden by the apparent horizon and it is invisible to any local observer outside. Hence, there is no locally naked singularity in the exterior the apparent horizon, where Penrose's stronger local form of cosmic censorship is proved to hold. In our setting, since we allow anisotropic initial data, spacetime dynamics could differentiate from spherically symmetric case,  and close to the center of gravitational collapse, some locally naked singularities may form. The causal curves emitting from these locally naked singularities might be visible to local or far-away observers. In our proof, we show that both scenarios are \textit{not} possible and no locally naked singularity exterior to the apparent horizon can get formed. Mathematically speaking, we control our spacetime up to and slightly beyond the apparent horizon. Before the apparent horizon, there is no singularity. The ``center" along the apparent horizon is a spacetime singularity. And since we prove that a \textit{spacelike} apparent horizon emerges from that ``center", the future causal curves from the ``center" are \textit{all} censored by the apparent horizon and the ``center" (singularity) is completely hidden in the black hole region! This confirms the local version of cosmic censorship.   

Moreover, in our paper we also provide explicit constructions for single and multi-valley anisotropic apparent horizons. They are of finger-type shapes and are also the \textit{first} rigorously mathematically proved examples of these kinds, which have potential applications in geometric analysis, black hole mechanics, numerical relativity, and gravitational wave phenomenology. In fluid dynamics and oceanography, finger-type configurations, such as Saffman-Taylor fingers \cite{ST} and salt fingers \cite{Melvin, Schmitt}, are describing pattern formation for the interface between two fluids in a porous medium and the mixing process of warm salty water with colder fresh water. This mixing process has large-scale consequences for the structure of the ocean. Our constructed  anisotropic apparent horizons are the finger-type interface between the black-hole region and the non-black-hole region. In particular, our multi-valley anisotropic apparent horizon is the counterpart of the $N$-finger solution observed in experiments and numerical simulations in fluid dynamics.} 

\subsection{Motivations}
In 2009, Christodoulou published a breakthrough result. In a 589-page monumental work \cite{Chr:book}, Chrsitodoulou showed that in the Einstein vacuum spacetime a trapped surface can form dynamically from arbitrarily dispersed initial data. A trapped surface is a two dimensional sphere with both future null expansions negative, and it indicates a strong field (black hole) region is formed. Together with the Penrose's incompleteness theorem, a trapped surface also predicts future geodesically incompleteness.   

Later works by Klainerman and Rodnianski \cite{KR:Trapped} and by {\color{black}the first author} \cite{An12} simplified \cite{Chr:book} in a systematical way by introducing signatures for short-pulse and signatures for decay rates. The work by {\color{black}the first author} and Luk \cite{AL} further extended Christodoulou's result to the scale-critical regime. Recently, {\color{black}the first author} \cite{An19} gave an alternative proof of \cite{Chr:book} and \cite{AL} by using only signatures for decay rates and by employing a spacetime scaling argument. 

In these aforementioned works, to form a trapped surface in the gravitational collapse, we require that the initial mass input of each angle of $\mathbb{S}^2$ satisfies a uniform non-zero lower bound and a uniform upper bound. In another word, the characteristic initial data of EVEs are not spherically symmetric,  but they are also not very anisotropic. (The details of anisotropicity will be explained in Theorem \ref{prop-anisotropic-phi-one} and Theorem \ref{prop-anisotropic-supersolution-phi-multi}.) In \cite{KLR}, Klainerman, Luk, and Rodnianski studied the case of fully anisotropic initial data and showed that a trapped surface can also form dynamically. Their initial data allow that the initial mass input of most (almost all) angles of $\mathbb{S}^2$ to be \textit{zero} and along angles only in a small region they have a uniform non-zero lower bound. 

So far, we have a relatively satisfactory theory of trapped surface formation. The next challenging question is to find the outer boundary of these trapped surfaces, which is called the apparent horizon. It is closely tied to the black hole thermodynamics in physics.

In 1970s, Bekenstein and Hawking found that black hole spacetimes are associated with an amazing property. At each fixed time $t$, the area of the black-hole event horizon could be viewed as an entropy of the spacetime (at time $t$) and it is increasing toward the future. In general relativity, we could define two types of boundaries for a black hole region: the event horizon and  the apparent horizon. For dynamical spacetimes, these two horizons are not necessarily the same. To define the event horizon, global information of the whole spacetime up to $t=\infty$ is required.\footnote{ The event horizon is defined as the future boundary of the causal past set of the future null infinity.} On the other side, an apparent horizon could be defined and detected locally, and it is composed of marginally outer trapped surfaces (MOTSs), which is the outer boundary of the trapped region and where the outgoing future null expansion is zero. The apparent horizon is widely used in the mathematical and numerical explorations of gravitational collapse. 

A natural task is to develop black hole thermodynamics along an apparent horizon. This remained to be an open question until 2003. In \cite{AK03, AK}, Ashtekar and Krishnan showed that if a smooth spacelike apparent horizon forms in gravitational collapse, then at each fixed time, one can define the area of the apparent horizon to be an entropy and it is indeed \textit{increasing} toward the future. Hence, the second law of black hole mechanics holds.  

However, a key mathematical step is missing in the argument by Ashtekar and Krishnan. \textit{Does the smooth spacelike apparent horizon exist? Can it emerge in gravitational collapse from general initial data?} These questions are important. Due to physical intuitions, it is conjectured that \textit{the smooth spacelike apparent horizon exists in gravitational collapse and it arises from general (both isotropic and anisotropic) initial data.}  

The above physical thinking motives the first author to prove the existence of the apparent horizon based on the trapped surface formation results obtained in \cite{Chr:book} and the other works aforementioned. By solving a new quasilinear elliptic equation with carefully designed gradient estimate, and by employing a limit argument to the scale-critical result in \cite{AL} established by An and Luk, the first author in \cite{An17} showed that smooth apparent horizon  emerges from not-very-anisotropic characteristic initial data. If the initial data are moreover required to be close to isotropic, the corresponding apparent horizon is indeed spacelike and the second law of black hole mechanics (area law) along the apparent horizon is established. 

In this paper,  we generalize the above conclusion in \cite{An17} to the general and fully-anisotropic scenarios. We adopt a different form of the quasilinear elliptic equation for marginally outer trapped surfaces lying on an incoming null cone. By employing \textit{non-perturbative} analytic tools, such as Moser's iteration, the Harnack inequality, Schauder estimates, and the strong maximal principle, via a method of continuity we show the existence, uniqueness, and smoothness of the apparent horizon arising from general (anisotropic) characteristic initial data and we furthermore prove that the apparent horizon obtained is always \textit{spacelike}. {\color{black}This confirms a local version of cosmic censorship!} Moreover, we construct apparent horizons with \textit{isolated valleys} and demonstrate that the corresponding apparent horizons are also \textit{anisotropic}.    

Marginally outer trapped surfaces are also studied within a 3-dimensional spacelike hypersurface. Interested readers are referred to \cite{AEM, AMS, AM, AG, AK03, AK, E1, E2, GS, Sc, SY, Yau}.

\subsection{The Trapped Surface Formation}
Our mathematical explanation starts from \cite{Chr:book} by Christodoulou. Whether a trapped surface could form in Einstein vacuum spacetime was open for a long time. In a 589-page monumental work \cite{Chr:book}, with a double null foliation Christodoulou {\color{black} provided an affirmative answer.}  

In \cite{Chr:book} characteristic initial data are prescribed along a truncated incoming cone $\Hb_0$, where Minkowskian initial data are prescribed, and a truncated outgoing cone $H_0$. The 2-sphere $S_{0,0}$ is the intersection of these two cones and it is a standard 2-sphere with radius $1$. Along the outgoing null cone $H_0$, we denote by $\chi$ its outgoing null second fundamental form. Define $\chih$ to be the traceless part of $\chi$. Then, $\chih$ will serve as the free-prescribed data along $H_0$ for $0\leq \ub\leq \delta$, where $\d$ is small. The EVEs are to be solved in the colored region in Figure \ref{Figure 1}. 

\begin{figure}[h]
\begin{center}
\begin{tikzpicture}[scale=0.75]

\draw [white](3,-1)-- node[midway, sloped, below,black]{$H_0(u=0)$}(4,0);

\draw [white](2,2)--node [midway,sloped,above,black] {$\Hb_{\delta}(\ub=\delta)$}(4,0);
\draw [white](1,1)--node [midway,sloped, below,black] {$\Hb_{0}(\ub=0)$}(3,-1);
\draw [dashed] (0, 2)--(0, -4);
\draw [dashed] (0, -4)--(4,0)--(2,2);
\draw [dashed] (0,0)--(2,2);

\draw [dashed] (0,-4)--(2,-2);
\draw [dashed] (0,2)--(3,-1);
\draw [very thick] (1,1)--(3,-1)--(4,0)--(2,2)--(1,1);
\fill[yellow!70!red] (1,1)--(3,-1)--(4,0)--(2,2)--(1,1);
\draw [white](1,1)-- node[midway,sloped,above,black]{$H_{u_*}$}(2,2);
\draw [->] (3.3,-0.6)-- node[midway, sloped, above,black]{$e_4$}(3.6,-0.3);
\draw [->] (1.4,1.3)-- node[midway, sloped, below,black]{$e_4$}(1.7,1.6);
\draw [->] (3.3,0.6)-- node[midway, sloped, below,black]{$e_3$}(2.7,1.2);
\draw [->] (2.4,-0.3)-- node[midway, sloped, above,black]{$e_3$}(1.7,0.4);
\end{tikzpicture}
\end{center}
\caption{}
	\label{Figure 1}
\end{figure}

Along $H_0$, setting $\chih$ to be large, Christodoulou designed a short-pulse hierarchy, where different geometric quantities are required to be of different sizes in term of $\d$. This carefully chosen hierarchy is later shown to be preserved in the nonlinear evolution. Thus, being a large data problem in nature, a semi-global existence theorem in the colored region can still be established. We hence could control the dynamics and geometry of EVEs from initial data quite far away to a region where a ``black hole'' is about to form.  Requiring the incoming radiation per unit solid angle is bounded uniformly from below, a trapped surface is guaranteed to form in the colored region of Figure \ref{Figure 1}.

Here is the main theorem\footnote{In Christodoulou's original proof, outgoing initial data are prescribed at past null infinity. Here, we only mention its finite-region version.} in \cite{Chr:book}.
\begin{theorem}[Christodoulou \cite{Chr:book}]\label{Chr.thm}
Assume $\Hb_0$ coincides with a backwards light cone in the Minkowskian spacetime for $0\leq u\leq 1$. Then for every $B>0$ and $u_*\leq 1$, there exists a $\de=\de(B,u_*)>0$ sufficiently small with the following property. If the initial $\chih_0$, prescribed on $H_0$ for $0\leq \ub\leq \de$, satisfies
\begin{equation}\label{Chr.upper.bound}
\sum_{i\leq 5,\,j\leq 3}\de^{\frac 12+j}\|\nab_{e_4}^j \nab^i\chih_0\|_{L^\infty_{\ub}L^2(S_{0,\ub})}\leq B, 
\end{equation}
where $e_4$ and $\nab$ are outgoing null vector and angular derivative on a 2-sphere $S_{u,\ub}$ respectively, then the solution to EVEs remains regular in $0\leq u\leq u_*$, $0\leq \ub\leq \de$. In addition, if the initial data further verify the lower bound
\begin{equation}\label{Chr.lower.bound}
\inf_{\o\in \mathbb{S}^2} \int_0^{\de}  |\chih_0|^2(\o, \ub')\,d\ub' \geq M_* > 2(1-u_*),
\end{equation}
then for $\de$ sufficiently small (depending on $B$, $u_*$ and $M_*$), the sphere $S_{u_*,\de}$ is a trapped surface.\footnote{The initial data constructed in \cite{Chr:book} satisfy both \eqref{Chr.upper.bound} and \eqref{Chr.lower.bound} at the same time. Moreover, the initial data can be chosen to obey 
$\inf_{\o\in \mathbb{S}^2} \int_0^{\de} |\chih|^2(\o, 0, \ub')\,d\ub'<2.$
Thus for $\de$ sufficiently small, it can be proved that the initial hypersurface $H_0$ does not contain any trapped surfaces.} 
\end{theorem}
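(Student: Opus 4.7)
The plan is to carry out a Christodoulou-style semi-global bootstrap in the colored slab $\{0\leq u\leq u_*,\,0\leq \ub\leq \delta\}$ using the double null foliation $(u,\ub)$ with null pair $(e_3,e_4)$, and then read off the sign of $\trch$ on $S_{u_*,\delta}$ from the Raychaudhuri equation. The bootstrap is organized around a short-pulse hierarchy: to each Ricci coefficient and curvature component one assigns a signature in $\delta$ dictated by \eqref{Chr.upper.bound}. Thus $\chih\sim \delta^{-1/2}$ on $H_0$, while $\trch-\tfrac{2}{r}$ is subleading, $\eta,\etab\sim\delta^{1/2}$, $\omega\sim 1$, $\omegab\sim\delta$, and curvature inherits $\alpha\sim\delta^{-3/2}$, $\beta\sim\delta^{-1/2}$, $\rho,\sigma\sim 1$, $\betab\sim\delta^{1/2}$, $\alphab\sim\delta$. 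The structural observation underlying everything is that these weights are preserved under nonlinear interaction: on the right-hand side of each null structure and Bianchi equation every quadratic product carries \emph{at least} the signature of its left-hand side, with the borderline term $|\chih|^2$ driving $\trch$ sitting exactly at the scaling.

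The existence half proceeds as follows. Assuming bootstrap bounds by a large $C(B)$ on suitable $L^{\infty}_{\ub}L^2(S)$-type weighted norms of all Ricci coefficients and curvature, I would recover the Ricci coefficients along the $\Lb$-direction from their data on $H_0$ and $\Hb_0$: each $\ub$-integration gains a factor of $\delta$ that compensates the largeness of $\chih$, while top-order angular derivatives on the leaves $S_{u,\ub}$ are recovered through Codazzi-type elliptic estimates. For curvature, the Bianchi equations are paired into Bel--Robinson $L^2$ energy identities on the natural pairs $(\alpha,\beta)$, $(\beta,\rho,\sigma)$, $(\rho,\sigma,\betab)$, $(\betab,\alphab)$; integrated over the slab these bound the null fluxes on $H_u$ and $\Cb_{\ub}$ by initial data plus error integrals whose $\delta$-weights are either strictly sub-critical or critical with a small absorbable constant. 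Choosing $\delta=\delta(B,u_*)$ sufficiently small closes the bootstrap and yields smooth solvability throughout the slab.

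For the trapped surface statement, integrate Raychaudhuri's equation
\[ \nab_4 \trch + \tfrac{1}{2}(\trch)^2 + 2\omega\,\trch = -|\chih|^2 \]
along the $e_4$-generators of $\Hb_{\ub}$ issuing from $S_{u_*,0}$, where $\trch=\tfrac{2}{1-u_*}$ since $\Hb_0$ is Minkowskian. The a priori estimates give
\[ \trch(u_*,\delta) = \frac{2}{1-u_*} - \int_0^{\delta} |\chih|^2(u_*,\ub')\,d\ub' + O(\delta^{1/2}), \]
so the lower bound \eqref{Chr.lower.bound} combined with $M_*>2(1-u_*)$ forces $\trch(u_*,\delta)<0$ pointwise; since $\trchb<0$ is automatic on $\Hb_{\delta}$ (the incoming cone is convergent), both null expansions on $S_{u_*,\delta}$ are negative and the sphere is trapped. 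The hard part will not be any individual estimate but the global \emph{consistency} of the short-pulse hierarchy: one must verify simultaneously, for the entire system of null structure and Bianchi equations, that every nonlinear error respects the assigned $\delta$-weights and that the top-order energy identities lose no derivatives. This matching is the substance of \cite{Chr:book}; the later works \cite{KR:Trapped, An12, AL} replace the manual book-keeping by a systematic signature calculus.
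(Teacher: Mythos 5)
Your existence half is a reasonable sketch of Christodoulou's short-pulse strategy (the paper does not reprove this theorem; it quotes \cite{Chr:book}), and acknowledging that the real work is the consistency of the hierarchy is fair. But the trapped-surface half, the only part you actually argue, has a genuine gap. The hypothesis \eqref{Chr.lower.bound} is a lower bound on $\chih_0$ on the \emph{initial} cone $H_0$, i.e.\ at $u=0$, whereas your Raychaudhuri integration (which takes place along the outgoing cone $H_{u_*}$, not along ``$e_4$-generators of $\Hb_{\ub}$'') requires a lower bound on $\int_0^{\delta}|\chih|^2(u_*,\ub')\,d\ub'$ evaluated at $u=u_*$. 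You substitute the initial-data bound for this quantity with no justification, and the substitution is quantitatively wrong: arguing that way one could only conclude $\trch(u_*,\delta)<0$ under $M_*>\tfrac{2}{1-u_*}$, which is strictly stronger than the stated threshold $M_*>2(1-u_*)$; the discrepancy is the factor $(1-u_*)^{2}$, which is exactly what matters in the interesting regime $u_*$ close to $1$, where the trapped sphere has small radius.

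The missing ingredient is the propagation of $\chih$ in the incoming direction. From the null structure equation $\nab_3\chih+\tfrac12\trchb\,\chih=\nab\widehat{\otimes}\eta+\cdots$ together with the a priori bounds (and $\trchb\approx-\tfrac{2}{1-u}$, $\Omega\approx 1$), one shows the almost-conservation law that $(1-u)^2|\chih|^2(u,\ub,\o)$ is nearly constant in $u$, so that
\begin{equation*}
\int_0^{\delta}|\chih|^2(u_*,\ub')\,d\ub'\;\ge\;\frac{M_*}{(1-u_*)^2}-o(1),
\end{equation*}
and only then does
\begin{equation*}
\trch(u_*,\delta)\;\le\;\frac{2}{1-u_*}-\frac{M_*}{(1-u_*)^2}+o(1)\;<\;0
\end{equation*}
follow from $M_*>2(1-u_*)$. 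This focusing amplification is the heart of the trapped-surface step in \cite{Chr:book} and must be stated and proved; it is not a by-product of the generic upper bounds in your bootstrap list, since it rests on identifying the leading transport structure of the $\nab_3\chih$ equation and showing the remaining terms are subcritical in $\delta$. With that lemma inserted, your outline matches the standard argument.
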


After \cite{Chr:book}, various simplifications and extensions of Theorem \ref{Chr.thm} have been obtained.  Interested readers are also referred to \cite{Dafermos, KR:Scarred, L-R:Interaction}.  In all these works, similar upper and lower bounds as in \cite{Chr:book} are used and trapped surfaces formed are of radius close to $1$.

An important work \cite{KLR} relaxed the lower bound assumption in \cite{Chr:book} to allow for fully anisotropic initial data.

\begin{theorem}[Klainerman-Luk-Rodnianski \cite{KLR}]
Assume that the initial data for EVEs satisfy the condition \eqref{Chr.upper.bound} in Theorem \ref{Chr.thm} and the lower bound
\begin{equation*}
\sup_{\o\in \mathbb{S}^2} \int_0^{\de} |\chih_0|^2(\o, \ub')\,d\ub' \geq M_* > 0.
\end{equation*}
Then, for sufficiently small $\de$, a trapped surface forms in the future of the initial data.
\end{theorem}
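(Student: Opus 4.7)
The plan is to first establish semi-global existence of the EVE solution in the slab $\{0\leq u\leq u_*,\,0\leq\ub\leq\d\}$ using only the upper bound \eqref{Chr.upper.bound}, and then produce a trapped surface as a deformed 2-surface adapted to the angular support of the initial shear rather than as a round sphere $\S$. For the existence step, the short-pulse/scale-critical hierarchy developed in \cite{Chr:book, KR:Trapped, An12, AL} derives all a priori $L^2$ and $L^\infty$ bounds for the connection coefficients $\chih, \trch, \chibh, \trchb, \eta, \etab, \omega, \omb$ and null curvature components $\alpha, \beta, \rho, \sigma, \betab, \alb$ from the upper bound alone. Since the lower bound is never invoked in these estimates, the existence theorem transfers verbatim to the anisotropic setting.

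For the trapped surface construction, let $\omega_0\in\mathbb{S}^2$ be an angle at which $\int_0^\d |\chih_0|^2(\omega_0,\ub')\,d\ub'\geq M_*$. Rather than insist on trapping the round sphere $S_{u_*,\d}$, I would seek a deformed 2-surface $\Sigma$ realized as a section of $\Hb_\d$ of the form $u=\varphi(\omega)$, with $\varphi:\mathbb{S}^2\to[0,u_*]$ concentrated in an angular neighborhood of $\omega_0$. Expressing the outgoing and ingoing null expansions $\theta_\pm$ of $\Sigma$ in terms of $\trch$, $\trchb$, $\snab\varphi$ and the other EVE quantities on $\Hb_\d$, the trapping inequalities $\theta_\pm<0$ become pointwise conditions on $\mathbb{S}^2$ that couple the Raychaudhuri focusing integral $\int_0^\d|\chih|^2\,d\ub'$ with angular gradient corrections $|\snab\varphi|^2$. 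Near $\omega_0$, choosing $\varphi(\omega_0)$ close to $u_*$ activates the full Christodoulou focusing and drives $\theta_+$ strongly negative; at angles away from $\omega_0$, $\snab\varphi$ can be arranged to be non-trivial so that the gradient term dominates the background $\trch$ and pushes $\theta_+$ negative there as well. The ingoing expansion $\theta_-$ retains the background structure $\theta_-\approx-2/|u|$ modulo bounded short-pulse perturbations.

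The main obstacle is closing the inequality $\theta_+<0$ uniformly on $\mathbb{S}^2$ in the \emph{non-perturbative} regime where $|\snab\varphi|$ is large. This demands: (i) an exact pointwise decomposition of $\theta_\pm$ on the graph $u=\varphi(\omega)$, including the cross-terms coming from projecting the ambient null frame onto the deformed surface; (ii) precise anisotropic bounds on $\chih$, $\trch$ and their angular derivatives, obtained by tracking how signatures interact with the concentration of $|\chih_0|^2$; and (iii) an explicit profile for $\varphi$—with carefully calibrated height, angular width and transition region—whose gradient contribution bridges the unfocused angles. The delicate balancing between the Raychaudhuri focusing, the angular-gradient correction, and the short-pulse errors, all in the anisotropic regime, is the core analytical difficulty; it is here that the flexibility of abandoning round spheres in favor of deformed sections of $\Hb_\d$ becomes indispensable.
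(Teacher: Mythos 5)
Your overall strategy is the same one the paper attributes to Klainerman--Luk--Rodnianski: keep the existence theorem (which uses only the upper bound \eqref{Chr.upper.bound}), then look for a trapped surface as a deformed section $u=\varphi(\o)$ of $\Hb_\de$, turning the trapping condition into a pointwise differential inequality on $\mathbb S^2$ in which angular-gradient terms of $\varphi$ compensate for the angles where there is no focusing. That is exactly the ``deformation of the double null foliation plus quasilinear elliptic inequality'' route. However, your write-up stops precisely where the theorem's actual content begins: you never produce the deformation profile, and you flag its construction (``explicit profile for $\varphi$ --- with carefully calibrated height, angular width and transition region'') as an unresolved ``core analytical difficulty.'' In KLR (and in the version reproduced in Lemma \ref{lemma-super-solutions-phi} of this paper), this step is not a delicate non-perturbative balancing act but an explicit barrier construction: after the logarithmic change of variables (here $R=\ub a e^{-\p}$), the inequality $\tr\chi'<0$ reduces to $\D_{\gamma_0}\p+1.1<\tfrac12 f e^{\p}$ on the round sphere, because all the quadratic gradient terms and the metric corrections carry short-pulse smallness (factors like $a^{-\f13}e^{\p}$ coming from the hierarchy in \eqref{A1}) and are absorbed even when $|\nab\p|\sim\e^{-1}$. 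One then takes $\p$ built from the fundamental solution $w=\log\sin(\lambda/2)$ of $\D_{\gamma_0}w+\tfrac12=2\pi\d_p$, cut off outside $B_p(\e)$, rescaled by the constant $s_\e\sim m\e^{\tau+2}$ so that $\tfrac12 f e^{\p}$ dominates the (explicitly bounded) $O(\e^{-2})$ Laplacian inside $B_p(\e)$ where $f\ge m$, while $\D_{\gamma_0}\p=-\tau/2<-1.1$ outside. Without this construction, your argument proves nothing beyond the existence statement.

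A secondary point: your framing of the difficulty as closing $\theta_+<0$ ``in the non-perturbative regime where $|\snab\varphi|$ is large'' misreads the structure. The gradient of the deformation is indeed large in absolute terms, but its quadratic contributions enter the expansion formula \eqref{deformation equation} with coefficients ($\O^2\tr\chib$, $\O^2\omb$, $\eta$) whose sizes are dictated by the short-pulse hierarchy, so these terms remain perturbative relative to the linear Laplacian term and the focusing term $\tfrac{\ub a}{2R^2}f$; no uniform non-perturbative balancing is required. Recognizing this is what makes the reduction to the semilinear model inequality legitimate, and it is the reason the $\sup$-type lower bound (focusing in a single small angular ball) suffices.
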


To obtain this result, in \cite{KLR} the authors employed a deformation of the double null foliation and solved a quasilinear elliptic inequality. See also a geometric approach later by Le in \cite{Le}.

\smallskip 

Another work \cite{AL} relaxed the upper bound assumption in \cite{Chr:book}. {\color{black}Two large parameters $a,b$ satisfying $1\ll b\leq \at \leq \d^{-\f12}$ are introduced.} The authors considered the colored region in Figure \ref{Figure 1} with $u_*=1-b\d \at$ and improved \cite{Chr:book} to the scale-critical regime.

\begin{theorem}[An-Luk \cite{AL}] \label{thm1.3}
Prescribe Minkowskian data along 
the initial incoming null hypersurface $\Hb_0$ with $0\leq u \leq 1$. Given a fixed $\delta$, for every $B$, there exist $a_0=a_0 (B)$ and $b_0=b_0(B)$ sufficiently large with the following property. Choose any $a$ and $b$ verifying $a_0 \leq a \leq \delta^{-1}$ and $b_0\leq b \leq a^{\frac12} \leq \delta^{-\frac12}$.  Along $H_0$, with $0\leq \ub \leq \d$, if initially $\chih_0$ satisfies
\begin{equation}\label{AL.upper.bound}
\sum_{i\leq 5, j\leq 3}\delta^{j} a^{-\frac12}\|\nab^{j}_{e_4}\nab^{i}\chih_{0}\|_{L^{\infty}_{\ub}L^2(S_{0,\ub})}\leq B, 
\end{equation}
then the solution to EVEs \eqref{EVEs} is regular in the spacetime region $0\leq u \leq 1-b\delta \at$ and $0\leq \ub \leq \d$. In addition, if the initial data also verify a lower bound
\begin{equation*}
\inf_{\o\in \mathbb{S}^2}\int_0^{\delta}|\chih_0|^2(\o, \ub') d\ub'\geq 4\delta a,
\end{equation*}
then the $2$-sphere $S_{1-\d a, \d}$ is a trapped surface.  
\end{theorem}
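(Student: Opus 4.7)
The plan is to work in a double null foliation $(u, \ub)$ with null frame $\{e_3, e_4, e_A\}$ and null lapse $\Omega$, as in \cite{Chr:book, KR:Trapped, An12, An19}, controlling the full null structure equations for the Ricci coefficients $\psi \in \{\trch, \chih, \omega, \eta, \etab, \omegab, \trchb, \chibh\}$ and the null Bianchi equations for the Weyl curvature components $\Psi \in \{\alpha, \beta, \rho, \sigma, \betab, \alphab\}$ in the region $\mathcal{D} = \{0 \le u \le 1 - b\delta a^{1/2},\ 0 \le \ub \le \delta\}$. Following the scale-critical signature scheme of \cite{An12, AL, An19}, I would assign each quantity a pair $(s_1, s_2)$ of signatures encoding its expected size in $(\delta, a)$ so that the anomalous components $\chih$ and $\alpha$ carry the extra factor $a^{1/2}$, and then define scale-invariant norms $\mathcal{O}, \tilde{\mathcal{O}}, \mathcal{R}$ that are of bootstrap size $O(1)$ uniformly in $\delta$, $a$ and $b$.

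Second, I would run a bootstrap on $\mathcal{D}$, driven by the short-pulse data \eqref{AL.upper.bound} on $H_0$ and the trivial Minkowski data on $\Hb_0$. The Ricci-coefficient bounds are produced by integrating the null structure equations along $e_3$ (for $\trchb, \chibh, \omegab, \eta$) or along $e_4$ (for $\trch, \chih, \omega, \etab$), combined with the Codazzi--Gauss elliptic systems for their angular derivatives on each sphere $S_{u, \ub}$. The curvature bounds come from energy identities for the Bianchi pairs $(\alpha, \beta)$, $(\beta, \rho, \sigma)$, $(\rho, \sigma, \betab)$ and $(\betab, \alphab)$, after renormalizing $(\rho, \sigma) \to (\rhoc, \sigmac)$ to eliminate the otherwise non-integrable $\int_{\Hb} |\alpha|^2$ flux. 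Each nonlinear term must respect the signature bookkeeping, and the restriction $u \le 1 - b\delta a^{1/2}$ supplies a smallness factor $b^{-1}$ that absorbs the critical integrals $\int_0^{\ub} \|\chih\|_{L^\infty_\omega}^2\, d\ub'$ and their $u$-integrated counterparts, closing the bootstrap.

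Third, once semi-global existence on $\mathcal{D}$ is established, the trapped surface at $S_{u_*, \delta}$ with $u_* = 1 - \delta a$ is extracted from the null Raychaudhuri equation
\begin{equation*}
\nabla_4 \trch + \tfrac{1}{2}(\trch)^2 = -|\chih|^2 - 2\omega\, \trch.
\end{equation*}
Integrating along $H_{u_*}$ from $\ub = 0$, where $\trch = 2/|u_*|$ by the Minkowski data on $\Hb_0$, to $\ub = \delta$, and exploiting the $\nabla_3$-transport equation for $r\chih$ to show $\chih(u_*, \ub, \omega) = |u_*|^{-1} \chih_0(0, \ub, \omega)$ modulo scale-invariant $O(b^{-1})$ corrections, one obtains
\begin{equation*}
\int_0^{\delta} |\chih|^2(u_*, \ub', \omega)\, d\ub' \ge \frac{4\delta a}{|u_*|^2} - O\!\left(\frac{1}{b\, \delta a}\right) = \frac{4}{\delta a} - O\!\left(\frac{1}{b\, \delta a}\right),
\end{equation*}
where $|u_*| = \delta a$ and the hypothesis $\inf_\omega \int_0^{\delta} |\chih_0|^2\, d\ub' \ge 4\delta a$ have been used. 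For $b$ sufficiently large the right side strictly exceeds $\trch(u_*, 0) = 2/(\delta a)$, so $\trch(u_*, \delta, \omega) < 0$ pointwise in $\omega \in \mathbb{S}^2$; and $\trchb(u_*, \delta, \omega) < 0$ follows from the Minkowski value $-2/|u_*|$ plus a scale-invariant perturbation.

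The main obstacle is that, unlike in \cite{Chr:book}, no $\delta$-smallness is available to tame individual nonlinear error terms; every estimate must be genuinely scale-invariant in $(\delta, a, b)$, and the closure of the bootstrap rests on extracting a sharp $b^{-1}$ smallness from the shrinking domain $u \le 1 - b\delta a^{1/2}$. Designing the signature assignment so that every source term in the null structure and Bianchi equations either integrates with the correct weight in $u$ or $\ub$ or else gains the factor $b^{-1}$ is the delicate combinatorial heart of the argument, with the renormalized $(\rhoc, \sigmac)$ energies and the top-order $\alphab$-estimate on $\Hb$ being the most strained pieces of the hierarchy.
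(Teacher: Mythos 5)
This theorem is quoted from \cite{AL}; the present paper supplies no proof of its own, only the citation. Your outline --- scale-critical signature bookkeeping and scale-invariant norms, a bootstrap on $0\le u\le 1-b\delta a^{1/2}$ in which the shrinking domain supplies the crucial $b^{-1}$ smallness, renormalized Bianchi energy estimates, and then integration of the Raychaudhuri equation together with the $\nabla_3$-propagation of $\chih$ to get $\int_0^{\delta}|\chih|^2(u_*,\ub',\o)\,d\ub'\gtrsim 4/(\delta a)>2/(\delta a)=\trch(u_*,0)$, forcing $\trch<0$ on $S_{1-\delta a,\delta}$ --- is precisely the strategy of the cited An--Luk proof, so it takes essentially the same approach as the source the paper relies on.
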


{\color{black}Both parameters $a$ and $b$ play key roles in the above theorem. Parameter $a$ is a large universal constant and using it we prove the scale-critical result\,--\,the formation of a trapped surface with radius $\d a$. The parameter $b$ is employed to guarantee that all the estimates in terms of the power of $a$ are sharp. In this paper, we also find a geometric meaning of $b$: it enables a room of anisotropicity in the estimates. This is a crucial new observation.\footnote{See also Remark \ref{b anisotropicity}.}    
}

To prove Theorem \ref{thm1.3}, smooth characteristic initial data with a different hierarchy to \cite{Chr:book} are designed. Let $a$ be a large universal constant. To form a trapped surface, it is assumed that the $H^{\f32}$-norm of the metric is of size $\at$ and all other $H^s$-norms are small for $s<\f32$. Note that the homogeneous Sobolev space $\dot{H}^{\f32}(\mathbb{R}^3)$ is the scale-critical space for the $3+1$ dimensional EVEs. Hence, \cite{AL} is also the first scale-critical result for EVEs in the $3+1$ dimensions.

Later, via a different method by only using signature for decay rates, requiring $b=\at$,
in \cite{An19} with less than 60 pages the first author reproved and extended Theorem \ref{thm1.3} to past null infinity with sharper bounds. A spacetime rescaling argument is crucially used.

\subsection{Apparent Horizons Arising from Non-Anisotropic Initial Data} 

In \cite{An17}, the first author considered the following characteristic initial value problem for EVEs in the colored region in Figure \ref{Figure 3}.

\begin{figure}[h]
\begin{center}
\begin{tikzpicture}[scale=0.86]
\draw [white](0,2.3)--node [midway,sloped, below,black] {$O$}(-0.5,2.3);
\draw [white](0.6,1.9)--node [midway,sloped, above,black] {$S_{1-\tilde{\ub} a, \tilde{\ub}}$}(1.5, 2.8);
\fill[yellow!70!red](0,2)--(1,1)--(2,2)--(0,2);
\draw [white](3,-1)-- node[midway, sloped, below,black]{$H_0(u=0)$}(4,0);
\draw [white](2.1,2.1)--node [midway,sloped, above,black] {$S_{1-\delta a, \delta}$}(2.8,2.8);
\draw [white](2,2)--node [midway,sloped,above,black] {$\Hb_{\delta}(\ub=\delta)$}(4,0);
\draw [white](1,1)--node [midway,sloped, below,black] {$\Hb_{0}(\ub=0)$}(3,-1);
\draw [dashed] (0, 2)--(0, -4);
\draw [dashed] (0, -4)--(4,0)--(2,2);
\draw [dashed] (0,0)--(2,2);
\draw [dashed] (0,-4)--(2,-2);
\draw [thick] (0,2)--(3,-1);
\fill[yellow!70!red] (1,1)--(3,-1)--(4,0)--(2,2)--(1,1);
\fill[yellow!30!red] (0.5, 2)--(3.25,-0.75)--(3,-1)--(0.25, 1.75)--(0.5, 2);
\draw [white] (1.2, 1.3)--node [midway,sloped,above,black] {$\Hb_{\tilde{\ub}}(\ub=\tilde{\ub})$}(3.25,-0.75); 

\draw [thick] (1,1)--(3,-1)--(4,0)--(2,2)--(1,1);

\draw [thick] (0.5, 2)--(3.25,-0.75); 
\draw [thick] (0.25, 1.75)--(0.5, 2); 
\draw[dashed] (0, 2)--(2.1, 1.9);
\end{tikzpicture}
\end{center}
\caption{}\label{Figure 3} 
\end{figure}

\begin{enumerate}[label=(\subscript{A}{{\arabic*}})]

\item The initial incoming null hypersurface $\Hb_0$ coincides with a backward light cone in the Minkowski space with $0\leq u \leq 1$, and $S_{0,0}$ is the standard $2$-sphere with radius $1$. 

\item Let $\delta$, $B$, and $a$ be positive constants. Along $H_0$, for $0\leq \ub \leq \delta$, prescribe $\chih_0$ such that
\begin{equation*}
\sum_{i\leq 5,  j\leq 3} a^{-\frac12}\|{\ub}^{j}\nab^{j}_{e_4}\nab^{i}\chih_0\|_{L^{\infty}_{\ub}L^2(S_{0,\ub})}\leq B.
\end{equation*}

\end{enumerate}

\begin{theorem}[An \cite{An17}]\label{thm1.5} 
Given $\delta>0$, for every $B>0$, there exist $a_0=a_0 (B)$ and $b_0=b_0(B)$ sufficiently large with the following property. Choose any $a$ and $b$ satisfying $a_0 \leq a \leq \delta^{-1}$ and $b_0\leq b \leq a^{\frac12} \leq \delta^{-\frac12}$, and assume the characteristic initial data satisfy $(A_1)$ and $(A_2)$ above. Then, the EVEs \eqref{EVEs} admit a unique solution in the region $0< \ub \leq \d$ and $0\leq u \leq 1-b\ub \at$.

{\it 1.} Moreover, assume
\begin{equation}\label{eq-definition-f}
 \int_0^{\ub}|\chih_0|^2 (\o,\ub')d\ub'=f(\o, \ub)\ub a \quad  \mbox{for each }  0<\ub\leq \d,
\end{equation}
where $f(\o,\ub)$ is a smooth function  in $\mathbb S^2\times (0,\delta]$ satisfying 
\begin{equation}\label{eq-assumption-f-near 1}20/21\leq f(\o, \ub)\leq 22/21.\end{equation}
Then along each $\Hb_{\ub}$ $(0< \ub \leq \d)$, there \textit{exists a unique} MOTS $M_{\ub}$. 

{\it 2.} Furthermore, assume, for any $i,j\ge 0$,
\begin{equation*}
a^{-\frac12}\|{{\color{black}\ub}}^{j}\nab^{j}_{e_4}\nab^{i}\chih_{0}\|_{L^{\infty}_{\ub}L^2(S_{0,\ub})}\leq B.
\end{equation*}
Then the collection of MOTSs $\{M_{\ub}\}$ forms {\color{black} a smooth apparent horizon.  

{\it 3.} In addition, assume 
\begin{equation}\label{eq-assumption-f-near 1-enhanced}1-\f{1}{c}\leq f(\o, \ub)\leq 1+\f{1}{c},\end{equation} with {\color{black} sufficiently large} $c$. 
Then, the apparent horizon is \textit{spacelike}.} 
\end{theorem}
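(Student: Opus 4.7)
Write the apparent horizon as the graph $\{u=R(\o,\ub):(\ub,\o)\in(0,\d]\times\mathbb S^2\}$ in the double null gauge with metric $g=-2\Om^2(du\otimes d\ub+d\ub\otimes du)+\gamma_{AB}(d\theta^A-b^A d\ub)(d\theta^B-b^B d\ub)$. The conormal $d(u-R)=du-R_\ub\,d\ub-\partial_A R\,d\theta^A$ has squared inverse-metric length
\begin{equation}\label{spacelike.cond}
g^{-1}\bigl(d(u-R),d(u-R)\bigr) = \frac{1}{\Om^2}\Bigl(\partial_\ub R + b^A\partial_A R + \Om^2|\snab R|^2\Bigr),
\end{equation}
as one checks from $g^{uu}=g^{\ub\ub}=g^{\ub A}=0$, $g^{u\ub}=-1/(2\Om^2)$ and $g^{uA}=-b^A/(2\Om^2)$. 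Hence the apparent horizon is spacelike precisely when the right-hand side of \eqref{spacelike.cond} is strictly negative, and the task reduces to establishing the pointwise inequality
\[
\partial_\ub R + b^A\partial_A R + \Om^2|\snab R|^2 < 0 \quad \text{on } \mathbb S^2, \quad \text{for each } 0<\ub\le\d,
\]
under the enhanced near-isotropy \eqref{eq-assumption-f-near 1-enhanced}.

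For the sign and size of $\partial_\ub R$, differentiate the MOTS identity $\tr\chi(R(\o,\ub),\ub,\o)=0$ to obtain $R_\ub=-(\partial_\ub\tr\chi)/(\partial_u\tr\chi)|_{u=R}$. Inserting the Raychaudhuri-type null structure equation for $e_3(\tr\chi)$, which at $\tr\chi=0$ reads $e_3(\tr\chi)=-|\chih|^2+(\text{l.o.t.})$, together with the transport equation governing $\partial_u\tr\chi$, produces a quasilinear elliptic identity on $\mathbb S^2$ whose leading source is $-f(\o,\ub)\,a/2$, the remaining terms being uniformly controlled by Theorem \ref{thm1.3}. Under \eqref{eq-assumption-f-near 1-enhanced} the source is pointwise below $-a/(2c')$ for some $c'>0$, so Moser's iteration, the Harnack inequality and the strong maximum principle give a pointwise bound $R_\ub\le -c_1 a$ for a universal $c_1>0$.

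For the gradient bound, compare $R$ with the isotropic profile $R_{\mathrm{iso}}(\ub)$ solving the same MOTS equation under the substitution $f\equiv 1$ (its existence is a special case of Part~1 of the theorem). The difference $W:=R-R_{\mathrm{iso}}$ satisfies a linear elliptic equation on $\mathbb S^2$ whose source is pointwise bounded by $C\|f-1\|_{L^\infty}\le C/c$. Standard $W^{2,p}$ estimates ($p>2$) followed by Schauder regularity, with constants furnished by the coefficient bounds of Theorem \ref{thm1.3}, yield $\|W\|_{C^1(\mathbb S^2)}\le C/c$, hence $|\snab R|\le C/c$ pointwise. Since Theorem \ref{thm1.3} also bounds $\Om$ and $|b|$ uniformly, substituting into \eqref{spacelike.cond} gives
\[
\partial_\ub R + b^A\partial_A R + \Om^2|\snab R|^2 \le -c_1 a + \frac{C|b|}{c} + \frac{C}{c^2},
\]
which is strictly negative once $c$ is chosen sufficiently large, establishing the spacelike property.

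The main obstacle will be the quantitative gradient bound $|\snab R|\le C/c$. The MOTS equation for $R$ is quasilinear with coefficients depending on the ambient spacetime metric, connection coefficients and curvature components, and no explicit background is available to perturb off. Consequently the Moser--Harnack--Schauder chain must be closed \emph{non-perturbatively}, drawing only on the scale-critical bounds of Theorem \ref{thm1.3} together with the structural input that \eqref{eq-assumption-f-near 1-enhanced} makes the linearized source $O(1/c)$ in $L^\infty$. Once this quantitative gradient estimate is in hand, the dominance of the leading term $R_\ub\sim -a$ in \eqref{spacelike.cond} immediately yields that the apparent horizon is spacelike.
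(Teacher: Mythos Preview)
Your proposal has a genuine gap at its first substantive step. You write $R_{\ub}=-(\partial_{\ub}\tr\chi)/(\partial_u\tr\chi)|_{u=R}$ as if the MOTS condition were the pointwise equation $\tr\chi(R(\o,\ub),\ub,\o)=0$ for the ambient null expansion. It is not: the MOTS $M_{\ub}$ is a deformed sphere $u=1-R(\o)$, and its outgoing expansion is $\tr\chi'$ given by \eqref{deformation equation}, which differs from the ambient $\tr\chi$ by the Laplacian and gradient terms in $R$. The actual MOTS equation is the quasilinear elliptic PDE \eqref{1.12} on $\mathbb S^2$. Differentiating it in $\ub$ therefore produces a \emph{linear elliptic PDE} for $\partial_{\ub}R$ (with the linearized Laplacian acting on $\partial_{\ub}R$), not an algebraic identity, and you cannot read off the sign or size of $R_{\ub}$ from a ratio of transport terms. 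Your subsequent sentence does mention ``a quasilinear elliptic identity on $\mathbb S^2$'', but the input you feed into it (the Raychaudhuri relation for $e_3(\tr\chi)$ evaluated at $\tr\chi=0$) is again premised on the wrong pointwise condition.

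Independently of this gap, your strategy is not the one outlined here for \cite{An17}. The paper records that the key device in \cite{An17} is a Bochner-type computation applied to the weighted quantity $h(R)|\nabla R|^2$ with $h(R)=1+\tfrac{8}{\ub^2 a^2}(R-\tfrac{\ub a}{2})^2$; the coefficient $8$ is tied precisely to the window \eqref{eq-assumption-f-near 1}, and the resulting inequality for $\Delta'_M(h(R)|\nabla R|^2)$ forces $|\nabla R|\ll 1$ via the maximum principle. That gradient smallness is then what makes the zero-order coefficient in the linearization \eqref{main linearization equation in An17} negative, yielding invertibility, and what ultimately drives spacelikeness under \eqref{eq-assumption-f-near 1-enhanced}. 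Your alternative route---comparing $R$ to an isotropic profile and invoking Schauder estimates to get $|\snab R|\le C/c$---could in principle be made to work, but it presupposes uniform control on the elliptic coefficients (which themselves depend on $R$ through the metric on $M_{\ub}$), so you would still need an a priori $C^1$ bound before the comparison; that is exactly what the Bochner argument supplies. Note also that the Moser--Harnack--Schauder machinery you invoke is the toolkit of the \emph{present} paper for the anisotropic case, not of \cite{An17}.
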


Here we outline the approach of \cite{An17}.

\smallskip

In the original double null foliation, on $S_{1-\ub a, \ub}$ the outgoing null expansion ($\tr\chi|_{S_{1-\ub a, \ub}}$) is negative pointwisely. Hence, $S_{1-\ub a, \ub}$ is a trapped surface. For fixed $\ub$, to find a 2-sphere $M_{\ub}=\{(u, \ub, \o): u=1-R(\o,\ub)\}$ where the  
outgoing null expansion vanishes for all points (thus $M_{\ub}$ is a marginally outer trapped surface), we need to deform the foliation on $\underline{H}_{\ub}$. 
In this setting,  the null expansion $\tr\chi'$ has the following expression: 
\begin{equation}\label{deformation equation}
\begin{split}
\tr\chi'&=\tr\chi-2\O\D'_{M_{\ub}} R(\o, \ub)-4\O\eta_a\nab^a R(\o,\ub)\\
&\qquad-\O^2 \tr\chib |\nab R(\o,\ub)|^2-8\O^2\omb|\nab R(\o,\ub)|^2.
\end{split}
\end{equation}
Here,  $\D'_{M_{\ub}}$ and $\nab$ are the Laplacian operator and the angular derivative to the unknown 2-sphere $u=1-R(\o, \ub)$ at the point $(\o, 1-R(\o,\ub), \ub)$; $\Omega$ is a geometric quantity satisfying $g(\f{\partial}{\partial u}, \f{\partial}{\partial \ub})=-2\O^2$; and 
$\eta_a$, $\chib_{bc}$, and $\omb$ are certain Ricci coefficients of $g$.

On fixed $\Hb_{\ub}$, when there is no danger of confusion, for simplicity we use $R(\o)$ to substitute $R(\o,\ub)$. In \cite{KLR}, Klainerman, Luk, and Rodnianski derived (\ref{deformation equation}) and first solved the quasilinear elliptic inequality {\color{black} $\tr\chi'<0$, i.e.,}
\begin{equation*}
\begin{split}
\tr\chi-2\O\D'_{M_{\ub}} R(\o)-4\O\eta_a\nab^a R(\o)-\O^2 \tr\chib |\nab R(\o)|^2-8\O^2\omb|\nab R(\o)|^2
<0.
\end{split}
\end{equation*}
We proceed to solve the quasilinear elliptic equation $\tr\chi'=0$, i.e., 
\begin{equation}\label{1.12}
\begin{split}
&\D'_{M_{\ub}} R(\o)+2\eta_b\nab^b R(\o)+\f12 \O \tr\chib |\nab R(\o)|^2+4\O \omb |\nab R(\o)|^2-\f{\O^{-1}}{2} \tr\chi=0.
\end{split}
\end{equation}

A key step in \cite{An17} is to derive $C^1$ (gradient) estimates  by Bochner's formula. Define
\begin{equation}\label{h(R)}
h(R)=1+\f{8}{\ub^2 a^2}\Big(R-\f{\ub a}{2}\Big)^2.
\end{equation}
By a careful analysis based on Bochner's formula and the explicit form of \eqref{1.12}, in \cite{An17} we obtain
\begin{equation*}
\begin{split}
\D'_M \big( h(R)|\nab R|^2 \big)\geq& \f{1}{\ub^2 a^2}|\nab R|^4+\f{1}{8R^2}|\nab R|^2-\f{1}{R^2}\cdot o(1).
\end{split}
\end{equation*}
Note that not only the sign in front of $|\nab R|^4$ is \textit{positive}, but the sign in front of $|\nab R|^2$ is also \textit{positive}! 
{\color{black}Then, via the maximum principle} it implies 
\begin{equation}\label{C1 key estimate}
|\nab R| (\o)\ll1 \quad \mbox{for all }  \o\in \mathbb S^2. 
\end{equation}
{\color{black} We point out that the exact form of $h(R)$ in \eqref{h(R)} plays a \textit{vital} role in the derivation of \eqref{C1 key estimate}. The coefficient $8$ is closely related to the assumption \eqref{eq-assumption-f-near 1}. The gradient estimate \eqref{C1 key estimate} is essential in subsequent studies in \cite{An17}. It demonstrates that the solution $R(\o)$ is close to a constant under the condition \eqref{eq-assumption-f-near 1}.
Physically, the resulting MOTS is close to a sphere.}

To solve \eqref{1.12}, we need to analyze the corresponding linearized equation. 
Denote by $G(R)$ the expression on the left-hand side of \eqref{1.12}, for $R=R(\o)$. 
By a lengthy computation, the linearized operator $G_{R}(R)$ is given by 
\begin{equation}\label{main linearization equation in An17}
G_{R}(R)[W]
=\D'_{M}W-\f{1}{R}\big(1-|\nab R|^2\big)\big(1+o(1)\big)W.
\end{equation}
By \eqref{C1 key estimate}, the coefficient of $W$ in \eqref{main linearization equation in An17} is \textit{negative}, and thus  $G_{R}(R)[W]$ is invertible. 
Such invertibility implies the openness in the method of continuity to solve \eqref{1.12}, or to prove the existence of MOTS.
Similarly, \eqref{C1 key estimate} plays a crucial role in the proof of other assertions in Theorem \ref{thm1.5}. 

We now make a simple but an important observation. 
The coefficient of zero order term in \eqref{main linearization equation in An17} does not have a desired sign if $|\nabla R|$ is not small as asserted in \eqref{C1 key estimate}. Different methods are needed to solve \eqref{1.12} for the general anisotropic initial data! {\color{black}Also to prove the spacelikeness of the apparent horizon, as required in (\ref{eq-assumption-f-near 1-enhanced}), $f(\o,\ub)$ has to be very close to $1$, which is the close-to-homogenous case. To deal with anisotropic scenarios, novel arguments are requested. }

\subsection{Apparent Horizons Arising from General Anisotropic Initial Data}

In this paper, we will study the equation \eqref{1.12} under a much relaxed assumption on the function $f$ introduced in \eqref{eq-definition-f}. We prove the following result.  

\begin{theorem}\label{main thm} 
With characteristic initial data satisfying $(A_1)$, $(A_2)$ and with the choice of parameters $a$, $b$
as in Theorem \ref{thm1.5}, the EVEs (\ref{EVEs}) admit a unique solution in the region $0< \ub \leq \d$ and $0\leq u \leq 1-b\ub \at$. 

{\it 1.} Assume
\begin{equation*}
 \int_0^{\ub}|\chih_0|^2 (\o, \ub') d\ub'=f(\o, \ub)\ub a \quad  \mbox{for each}  \quad 0<\ub\leq \d,
\end{equation*}
where {\color{black} $f(\o,\ub)$ is a smooth function on $\mathbb S^2\times (0,\delta]$} satisfying 
\begin{equation}\label{eq-assumption-f-0-1}
0\leq f\leq 1\quad\text{on }\mathbb S^2\times (0,\delta],\footnote{The condition $0\leq f(\ub,\o)\leq 1$ is very general. With the same method as in this paper, we could replace $1$ by any fixed positive constant $M$. For simplicity of presentation, we always take $M=1$.} 
\end{equation}
and, for any $\ub\in (0,\delta]$,  
\begin{equation}\label{eq-assumption-lower-bound-f}
f(\cdot, \ub)\ge m\quad\text{on }B_{p}(\e),
\end{equation}
for some point $p\in \mathbb S^2$ and some constants $m\in (0,1)$ and $\e\in (0,\pi/2)$. Then in a large region of each $\Hb_{\ub}$ $(0< \ub \leq \d)$,  there \textit{exists a unique} MOTS $M_{\ub}$. (See the details of considered regions in Theorem \ref{theorem-existence-solutions}.) 

{\it 2.} In addition, if requiring, for any $i,j\ge 0$, 
\begin{equation}\label{eq-main-assumption-extra}
a^{-\frac12}\|{{\color{black}\ub}}^{j}\nab^{j}_{e_4}\nab^{i}\chih_{0}\|_{L^{\infty}_{\ub}L^2(S_{0,\ub})}\leq B \quad \text{and} \quad |\ub\partial_{\ub}f(\o,\ub)|\leq a^{-\f13},
\end{equation}
then the collection of MOTS $\{M_{\ub}\}$ forms a \textit{smooth} and \textit{spacelike} hypersurface, i.e., the dynamical horizon.  
\end{theorem}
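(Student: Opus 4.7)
The plan is to solve the quasilinear elliptic equation \eqref{1.12} on each incoming slice $\Hb_{\ub}$ by the method of continuity, deforming from a near-isotropic reference for which Theorem \ref{thm1.5} applies. I would consider the family $f_t=(1-t)+tf$ for $t\in[0,1]$ and prove that the set of $t$'s for which a MOTS exists is nonempty, open, and closed. Since the small-gradient estimate \eqref{C1 key estimate} of \cite{An17} is no longer available when $f$ vanishes in large regions, the entire argument must be genuinely non-perturbative, relying on Moser iteration, the Harnack inequality, Schauder estimates, and the strong maximum principle in place of the perturbative analysis of \cite{An17}.

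For closedness, I would derive a priori $C^{2,\alpha}$ bounds without requiring $|\nab R|$ to be small. An upper barrier of the form $R\leq (1-o(1))\ub a$ follows from $f\leq 1$ and the sign of $\tr\chi$ on $\{u=1-\ub a\}$. A lower barrier on a slightly smaller ball $B_p(\e')\subset B_p(\e)$ is built from the MOTS profile of the homogeneous problem with the constant lower bound $m$ in \eqref{eq-assumption-lower-bound-f}; outside $B_p(\e')$ only positivity is needed, which is obtained via Harnack and the strong maximum principle. For the gradient estimate, Bochner's formula is applied to $|\nab R|^2$ with an auxiliary weight adapted to the broader range of $f$ (replacing $h(R)$ in \cite{An17}), and the resulting differential inequality is attacked by Moser iteration to produce a bound depending on $m,\e,a,b,B$ that is finite, but no longer infinitesimal.

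For openness, the linearized operator $G_R(R)$ in \eqref{main linearization equation in An17} is uniformly elliptic and Fredholm of index zero on $M_{\ub}$, so invertibility reduces to triviality of its kernel. Although the zero-order coefficient now lacks a definite sign, kernel elements correspond to infinitesimal normal deformations of the MOTS, and by selecting the outermost MOTS at each stage the stability operator for MOTSs (cf.\ \cite{AMS, AM}) combined with the strong maximum principle forces the kernel to be trivial; this simultaneously yields uniqueness of $M_{\ub}$ via a maximum-principle comparison between any two MOTSs on $\Hb_{\ub}$. Smoothness in Part 2 follows from a standard Schauder bootstrap. For spacelikeness, I would differentiate $G(R(\cdot,\ub);\ub)=0$ in $\ub$ to obtain a linear elliptic equation for $\partial_{\ub}R$ whose inhomogeneity is controlled by $\ub\partial_{\ub}f$; using $|\ub\partial_{\ub}f|\leq a^{-1/3}$ from \eqref{eq-main-assumption-extra} and the invertibility of the linearization, I derive an effective bound on $\ub\partial_{\ub}R$. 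The induced metric on $\{u=1-R(\omega,\ub)\}$ is spacelike precisely when a pointwise inequality between $\partial_{\ub}R$, $|\nab R|^2$, $\O$, $\tr\chib$, and $\omb$ is satisfied, and the smallness of $\ub\partial_{\ub}R$ together with the estimates of the Ricci coefficients from \cite{AL} verifies it.

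The main obstacle is the complete loss of the smallness \eqref{C1 key estimate} of $|\nab R|$: every place in \cite{An17} where it was used must be reworked non-perturbatively. The essential new device is to exploit the parameter $b$ as an anisotropy reservoir: the degeneration of the zero-order coefficient in the linearized equation and the loss of the gradient estimate are absorbed by suitable powers of $b$, while the lower bound $f\geq m$ on $B_p(\e)$ localizes the MOTS sufficiently for the sub-solution construction to succeed. Balancing this anisotropy slack against the losses in the Moser iteration and in the spacelikeness computation is the heart of the argument.
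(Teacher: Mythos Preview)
Your broad outline---continuity method, non-perturbative a priori estimates, differentiating in $\ub$ for spacelikeness---matches the paper's architecture, but two key mechanisms are different or missing, and one of them is a genuine gap.

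First, the paper does not work with $R$ and \eqref{1.12} directly. It introduces $\phi$ via $R=\ub a e^{-\phi}$ and the conformal metric $\gamma=R^{-2}g$, obtaining
\[
\mathcal S(\phi)=\Delta_\gamma\phi+1-\tfrac12 f e^{\phi}+F(\phi)=0,
\]
where $\gamma$ is uniformly elliptic (independent of the degeneration $\ub\to 0$) and $|F|\le Ce^{\phi}a^{-1/3}(|\nabla_\gamma\phi|^2+1)$. This change of unknown is what makes the non-perturbative estimates tractable: the equation now satisfies the natural growth condition, so $C^\alpha$ follows from Moser/weak Harnack, then $C^{1,1/3}$ from Campanato and freezing coefficients, then $C^{2,1/3}$ from Schauder. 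Your plan to push a Bochner-type inequality through Moser iteration in the $R$-variable is not the paper's route and would have to confront the degeneracy of $g$ as $\ub\to 0$; the $\phi$-formulation sidesteps this entirely. The sub/supersolutions are also explicit in $\phi$: the constant $\underline\phi=\ln(3/2)$ and the Klainerman--Luk--Rodnianski barrier $\overline\phi$ built from the distance function to $p$.

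Second, and more seriously, your invertibility argument for the linearization is incomplete. Selecting ``the outermost MOTS at each stage'' and appealing to MOTS stability typically yields only that the principal eigenvalue of the stability operator is $\ge 0$, not $>0$; you have given no mechanism to exclude a zero eigenvalue, which is exactly what invertibility requires. The paper's device is different and decisive: since $f\ge m>0$ on $B_p(\e)$, the model operator $L_0 w=\Delta_\gamma w-\tfrac12 f e^{\phi}w$ has strictly positive first eigenvalue $\mu_1>0$ with a positive eigenfunction $\psi_1$ (Lemma \ref{lemma-H1-estimates-L0}). The full linearization $\partial_\phi\mathcal S(\phi)$ differs from $L_0$ by terms bounded by $Ca^{-1/3}e^{\phi}(|\nabla^2_\gamma\phi|+|\nabla_\gamma\phi|^2+1)$, so $\partial_\phi\mathcal S(\phi)[\psi_1]<0$ for $a$ large, and the strong maximum principle gives invertibility directly. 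This is where the lower bound \eqref{eq-assumption-lower-bound-f} actually enters the proof, and it is absent from your sketch. Relatedly, the parameter $b$ is not used to absorb losses in the estimates; all bounds are tracked through $\kappa=\max\overline\phi$ (determined by $m,\e$) and then $a$ is chosen large via \eqref{eq-choice-a-b}. The role of $b\ll a^{1/2}$ is only to ensure the barrier region $\{u\le 1-m\e^{\tau+2}C(\tau)\ub a\}$ sits inside the existence region $\{u\le 1-b\ub a^{1/2}\}$.
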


{\color{black}Since we prescribe Minkowskian data for $\ub\leq 0$, we have $f(\o,\ub)=0$ when $\ub=0$. The vertex $O$ in Figure \ref{Figure 3} with coordinate $(u,\ub)=(1,0)$ is a singular point. Our above result implies that except at the vertex $O$, the collection of MOTS $\{M_{\ub}\}$ forms a smooth apparent horizon with $0<\ub\leq \d$.}

Furthermore, we will provide explicit constructions of  apparent horizons with isolated valleys, and by designing  and employing a \textit{new anisotropic lower bound} we prove that these apparent horizons are also anisotropic.   

In the below, we also use the function $\p=\p(\o,\ub)$ defined via $R(\o,\ub)=\ub a e^{-\p(\o,\ub)}$.

\begin{theorem}\label{prop-anisotropic-supersolution-phi-multi-intro} 
Let $p_1, \cdots, p_n$ be fixed $n$ points on $\mathbb S^2$, 
$\e$ be a positive constant 
much less than the distance between any two distinct $p_i$ and $p_j$ for $i, j=1, \cdots, n$, 
and $f$ be a smooth function on $\mathbb S^2$, such that $0\le f\le \nu(n)$ and
$$f\ge \nu(n)/2\,\text{ on }\cup_{i=1}^nB_{p_i}(\e/2),
\quad\quad f=0\,\text{ on }\mathbb S^2\setminus \cup_{i=1}^nB_{p_i}(\e),$$
for some constructed constant $\nu(n)\in (0,1)$ depending only on $n$. 
Then for each ${\ub}$, 
the equation $\mathcal S(\phi, \ub)=0$ admits a solution  $\phi_{\e}(\o, \ub)$  satisfying, 
for any $r_0\gg \e$ and any $i=1, \cdots, n$,  
\begin{equation}\label{eq-multi-valley-intro}
\phi_{\e}(p_i, \ub)-\inf_{\o\in\mathbb S^2\setminus \cup_{i=1}^nB_{p_i}(r_0)}\phi_{\e}(\o, \ub)
\le \f1{2n}\log\e+C,\end{equation}
where $C$ is a positive constant, independent of $\e$ and $\ub$. 
\end{theorem}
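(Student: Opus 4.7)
My plan is to reduce the multi-valley problem to $n$ essentially decoupled single-valley problems (Theorem \ref{prop-anisotropic-phi-one}), exploiting the fact that the bumps of $f$ lie in the disjoint balls $B_{p_i}(\e)$ whose mutual separations are $\gg \e$. Existence of a smooth solution $\phi_\e$ of $\mathcal S(\phi,\ub)=0$ follows from the method of continuity developed in the proof of Theorem \ref{main thm}, using the Schauder / Moser / Harnack / strong-maximum-principle toolkit already in place. The quantitative bound \eqref{eq-multi-valley-intro} is then read off by pointwise comparison with an explicit super-solution whose value at each $p_i$ is $\tfrac{1}{2n}\log\e+O(1)$.

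\textbf{Super-solution and comparison.} I would construct $\bar\phi_\e$ by smoothly gluing $n$ local pieces $\psi_1,\ldots,\psi_n$, each centered at one of the $p_i$'s, together with a bounded outer piece. Each $\psi_i$ is modelled on the rotationally-symmetric single-valley super-solution used in the proof of Theorem \ref{prop-anisotropic-phi-one}, but with its logarithmic amplitude rescaled from $1/2$ down to $1/(2n)$, so that $\psi_i(p_i) = \tfrac{1}{2n}\log\e + C_0$ and $\psi_i$ is bounded on the annulus $B_{p_i}(r_0)\setminus B_{p_i}(\e)$. Because the cutoff annuli lie in $\{f=0\}$ and $d(p_i,p_j)\gg \e$, the operator $\mathcal S(\bar\phi_\e,\ub)$ decouples into $n$ essentially independent local contributions, each of which reduces to a rescaled single-valley super-solution check that goes through under the smallness hypothesis $f\le\nu(n)$, with $\nu(n)$ constructed precisely by this check. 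The comparison principle for $\mathcal S$ then yields $\phi_\e\le \bar\phi_\e$ on $\mathbb S^2$, so that $\phi_\e(p_i,\ub)\le \tfrac{1}{2n}\log\e+C_0$. A globally bounded (e.g.\ constant) sub-solution $\underline\phi_\e\ge -C_1$ provides $\phi_\e\ge -C_1$ on $\mathbb S^2\setminus\cup_i B_{p_i}(r_0)$, and subtracting produces \eqref{eq-multi-valley-intro} with $C=C_0+C_1$.

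\textbf{Main obstacle.} The heart of the matter is pinning down the sharp amplitude $1/(2n)$ and the accompanying smallness $\nu(n)$. A naive superposition of $n$ single-valley super-solutions at amplitude $1/2$ would fail: the nonlinear (exponential and quadratic-gradient) terms of $\mathcal S$ accumulate across the $n$ valleys and violate the super-solution inequality. Sharing the ``logarithmic budget'' equally among the valleys forces each local amplitude down to $1/(2n)$, which in turn forces the restriction $f\le\nu(n)$. Verifying this trade-off quantitatively---so that the rescaled local super-solutions genuinely super-solve $\mathcal S$ and the cutoff patching does not spoil the inequality on the annuli $B_{p_i}(r_0)\setminus B_{p_i}(\e)$---will require a Bochner-type gradient estimate on each local ball in the spirit of \eqref{h(R)}--\eqref{C1 key estimate} from \cite{An17}, combined with a careful bookkeeping of the cutoff cross-terms. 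The constant $\nu(n)$ emerges as the specific smallness threshold produced by this estimate.
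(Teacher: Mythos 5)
The central step of your plan---a glued supersolution $\bar\phi_\e$ with $\bar\phi_\e(p_i)=\tfrac1{2n}\log\e+C_0$ and bounded values outside the valleys, combined with a bounded (constant) subsolution---cannot work, because of the zeroth-order structure of $\mathcal S$. Near $p_i$ your glued function has an interior minimum, where $\D_\gamma\bar\phi_\e\ge 0$ and the gradient terms in $F$ are $O(a^{-\f13}e^{\bar\phi_\e})$, so $\mathcal S(\bar\phi_\e)\ge 1-\f12 f e^{\bar\phi_\e}-o(1)$; since $e^{\bar\phi_\e}\approx \e^{1/(2n)}$ there, the term $\f12 f e^{\bar\phi_\e}=O(\nu(n)\,\e^{1/(2n)})$ cannot offset the $+1$, hence $\mathcal S(\bar\phi_\e)>0$ near $p_i$ and the supersolution inequality fails---no choice of $\nu(n)$ or Bochner-type gradient estimate repairs this, because the obstruction is the constant $+1$, not $f$. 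The same computation applied to any $C^2$ solution at its global minimum gives $\f12 f e^{\phi_{\min}}\ge 1-o(1)$, i.e.\ $\phi\ge \log(2/\nu(n))-o(1)>0$ everywhere; so no solution can satisfy $\phi(p_i)\approx\tfrac1{2n}\log\e<0$, and the quantitative picture behind your gluing (a deep negative dip at each $p_i$ with $O(1)$ values elsewhere) is not how the valleys arise. For the same reason the constant subsolution is useless here: away from the $p_i$'s the solution is of size $|\log\e|$, and the paper explicitly notes that the constant barrier is insufficient for the valley estimate.

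What actually happens is that the valley is a \emph{relative} depression in a solution that is everywhere large and positive. The upper bound at $p_i$ comes from a KLR-type supersolution with multiple valleys (Lemma \ref{lemma-super-solutions-phi-multi}), built by summing $\tau\log\sin(d_i/2)$ pieces; its value at each $p_i$ is $-\log(m\e^2)+C\approx-2\log\e$, because the exponential must reach size $\e^{-2}$ to dominate the $O(\e^{-2})$ Laplacian of a function varying on scale $\e$---this height is independent of $n$. The lower bound away from the $p_i$'s comes from a new explicit \emph{anisotropic subsolution} $\underline{\phi}_*=\mu\sum_{i=1}^n\log\big[(1+\delta-\cos d_i)/\delta\big]$ with $\delta=\e^{\beta}$, $\beta=2(2+\mu)/(1+n\mu)$ and $n\mu<1$ (so the $+1$ dominates $n\mu$); it equals $-n\mu\beta\log\e+O(1)$ off the valleys, and Lemma \ref{lemma-comparison} gives $\underline\phi_*<\phi<\overline\phi$. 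The exponent $\tfrac1{2n}$ in \eqref{eq-multi-valley-intro} is the gap $n\mu\beta-2$ (one solves $n\mu\beta=2+\tfrac1{2n}$ for $\mu$), and $\nu(n)=\mu\,2^{1+2\mu-n\mu}$ emerges from the requirement $\underline f>f$ in the subsolution verification---so $\nu(n)$ is produced by the subsolution, not by decoupled single-valley supersolution checks, and no gradient estimate in the spirit of \cite{An17} enters the argument.
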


The estimate \eqref{eq-multi-valley-intro} demonstrates that the value of $\phi_{\e}$ at each $p_i$ is significantly smaller than 
values away from those $p_i$'s. In this sense, we say $\p_{\e}$ has a valley at $p_i$. 
The constant $\nu(n)$ can be computed explicitly. For example, for the case of a single valley, we can take $\nu(1)=1$.
The characteristic initial data leading to apparent horizons with valleys as stated in Theorem \ref{prop-anisotropic-supersolution-phi-multi-intro}
satisfy the assumptions in Theorem \ref{main thm}. The corresponding apparent horizons are henceforth spacelike as well.

\smallskip

Figure \ref{Figure MOTS} demonstrates a finger-shaped MOTS $M_{\ub}(\o)$ along $\Hb_{\ub}$ with two valleys. To prove Theorem \ref{prop-anisotropic-supersolution-phi-multi-intro} for $M_{\ub}(\o)$, in Section \ref{anisotropic examples} we construct explicit upper and lower barriers with valleys. And by the arguments in Section \ref{sec-existence-solutions}, within the region of consideration, we prove that a unique MOTS $M_{\ub}(\o)$ exists along $\Hb_{\ub}$ and it is bounded by these two barriers. The anisotropicity of $M_{\ub}(\o)$ can thus be calculated via computing the quotient of $R(\o,\ub)$ at points $A$ and $B$ or $A$ and $C$.

\begin{figure}[h]
\begin{center}
\begin{tikzpicture}[scale=0.85]
\draw (0,0.5) ellipse (0.875cm and 0.29cm);
\draw (0,-4) ellipse (2cm and 0.6cm);
\draw [white](1.2, -0.3)-- node[midway, sloped, above,black]{$\Hb_{\ub}$}(1.5, -0.3);
\draw [white](2.8, -2.2)-- node[midway, sloped, above,black]{upper barrier}(2.9, -2.2);
\draw [white](3.1, -3.3)-- node[midway, sloped, above,black]{MOTS $M_{\ub}(\o)$}(3.3, -3.3);
\draw [white](3.1, -3.6)-- node[midway, sloped, above,black]{lower barrier}(3.2, -3.6);
\draw [white] (1.85, -2.2)-- node[midway, sloped, below, black]{{\footnotesize A}} (1.9, -2.2);
\draw [white](2.2, -4.3)-- node[midway, sloped, above,black]{$p_1$}(2.4, -4.3);
\draw [white](-2.2, -4.3)-- node[midway, sloped, above,black]{$p_2$}(-2.4, -4.3);
\draw [white](-0.4, -5.2)-- node[midway, sloped, above,black]{$\mathbb{S}^2$}(0.4, -5.2);
\draw[thin] (-0.875,0.5)--(-2,-4);
\draw[thin] (0.875,0.5)--(2,-4);
\draw[thin] (-1,0)--(-2,-4);
\draw[thin] (1,0)--(2,-4);

\fill[white!90!black] (-0.5, -0.5) to [out=50,in=140] (0.6, -0.5) to [out=315,in=130] (1.6,-2.55) to [out=0, in=-60]  (1.625, -2.5) to [out=115, in=315] (0.6, -0.3) to [out=140, in=50] (-0.5, -0.3) to [out=230,in=65] (-1.5,-2) to [out=-20, in=180] (-1.45, -2.02) to [out=45, in=230] (-0.5, -0.5);
\draw [thick] (-0.5, -0.5) to [out=50,in=140] (0.6, -0.5); 
\draw [thick] (0.6,-0.5) to [out=315,in=130] (1.6, -2.55); 
\draw [thick] (1.6,-2.55) to [out=0, in=-60] (1.625, -2.5); 
\draw [dashed] (1.625, -2.5) to [out=115, in=315] (0.6, -0.3);
\draw [dashed] (0.6, -0.3) to [out=140, in=50] (-0.5, -0.3);
\draw [dashed] (-0.5,-0.3) to [out=230,in=65] (-1.5,-2); 
\draw [thick] (-1.5, -2) to [out=-20, in=180] (-1.45, -2.02); 
\draw [thick] (-1.45, -2.02) to [out=45, in=230] (-0.5, -0.5);

\fill[white!70!black] (0.4, -1) to [out=120,in=50] (-0.4, -1) to [out=230,in=15] (-1.575, -2.55) to [out=180, in=-20]  (-1.625,-2.5) to [out=50, in=230] (-0.5, -0.7)  to [out=50,in=120] (0.5, -0.7) to [out=310,in=135] (1.75, -3) to [out=30, in=-30] (1.7, -3.05) to [out=160,in=300] (0.4, -1);
\draw [thick] (0.4, -1) to [out=120,in=50] (-0.4, -1); 
\draw [thick] (-0.4,-1) to [out=230,in=15] (-1.575,-2.55); 
\draw [thick] (-1.575, -2.55) to [out=180, in=-20] (-1.625, -2.5); 
\draw[dashed] (-1.625, -2.5) to [out=50, in=230] (-0.5, -0.7);
\draw [dashed] (-0.5, -0.7) to [out=50,in=120] (0.5, -0.7); 
\draw [dashed] (0.5,-0.7) to [out=310,in=135] (1.75, -3); 
\draw [thick] (1.75,-3) to [out=30, in=-30] (1.7, -3.05); 
\draw [thick] (1.7,-3.05) to [out=160,in=300] (0.4, -1); 

\fill[white!90!black] (0.2, -1.5) to [out=120,in=50] (-0.2, -1.5) to [out=230,in=10] (-1.8,-3.55) to [out=180, in=-20] (-1.875,-3.5) to [out=40, in=230] (-0.2, -1.1) to [out=50,in=120] (0.2, -1.1) to [out=310,in=155] (1.875, -3.5) to [out=30, in=-30] (1.8, -3.55) to [out=170,in=300] (0.2, -1.5);
\draw [thick] (0.2, -1.5) to [out=120,in=50] (-0.2, -1.5); 
\draw [thick] (-0.2,-1.5) to [out=230,in=10] (-1.8,-3.55); 
\draw [thick] (-1.8, -3.55) to [out=180, in=-20] (-1.875, -3.5); 
\draw [dashed] (-1.875, -3.5) to [out=40, in=230] (-0.2, -1.1);
\draw [dashed] (-0.2, -1.1) to [out=50,in=120] (0.2, -1.1); 
\draw [dashed] (0.2,-1.1) to [out=310,in=155] (1.875, -3.5); 
\draw [thick] (1.875,-3.5) to [out=30, in=-30] (1.8, -3.55); 
\draw [thick] (1.8, -3.55) to [out=170,in=300] (0.2, -1.5); 

\draw [white](-0.05, -0.95)-- node[midway, sloped, below,black]{{\footnotesize B}}(0.05, -0.95);
\draw [white](-0.05, -1.91)-- node[midway, sloped, above,black]{{\footnotesize C}}(0.05, -1.91);

\draw[fill] (0,-1.02) circle [radius=0.05];
\draw[fill] (0,-1.43) circle [radius=0.05];
\draw[fill] (1.635, -2.55) circle [radius=0.05];
\end{tikzpicture}
\end{center}
\caption{}\label{Figure MOTS}
\end{figure}

Theorem \ref{main thm} further infers that \textit{smooth and spacelike apparent horizons arises from general and fully anisotropic initial data.} This certifies the assumption of the black hole thermodynamics theory, {\color{black}proves a folklore conjecture on gravitational collapse, and confirms a local version of cosmic censorship! Moreover, our single and multi-valley anisotropic apparent horizons constructed in Theorem \ref{prop-anisotropic-supersolution-phi-multi-intro} are of finger-type shapes and are the \textit{first} rigorously mathematical examples of these kinds. They are the counterpart of the finger and $N$-finger solutions in fluid dynamics \cite{ST} and oceanography \cite{Melvin, Schmitt} observed in experiments and numerical simulations. Our mathematical constructions have potential applications in geometric analysis, black hole mechanics, numerical relativity and gravitational wave phenomenology.}  

Following geometric computations in Section 11 of \cite{An17}, the spacelike apparent horizons constructed in Theorem \ref{main thm} also satisfy the following property. 

\begin{proposition}
\label{prop entropy}
Let $M_{\ub}$ be as in Theorem \ref{main thm}. Then, 
$$\lim_{\ub\rightarrow 0}\mbox{Area}(M_{\ub})=0, \quad \mbox{and} \quad \mbox{Area}(M_{\ub'})>\mbox{Area}(M_{\ub})  \mbox{ for } \ub'>\ub.$$
\end{proposition}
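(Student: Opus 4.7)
The plan is to reduce both claims to direct computations of $\mbox{Area}(M_{\ub})$ using the representation $M_{\ub}=\{u=1-R(\o,\ub)\}$ with $R(\o,\ub)=\ub a e^{-\phi(\o,\ub)}$ and the a priori estimates on $R$ and $\phi$ supplied by Theorem \ref{main thm}.

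For the limit $\lim_{\ub\to 0}\mbox{Area}(M_{\ub})=0$, I would first note that since $M_{\ub}$ lives inside the null hypersurface $\Hb_{\ub}$ whose null generator $\partial_u$ satisfies $g(\partial_u,\partial_u)=0=g(\partial_u,\partial_{\o^A})$ in the double null gauge, the induced $2$-metric on $M_{\ub}$ is exactly the angular metric $\sg_{AB}$ evaluated along $u=1-R(\o,\ub)$. Propagating $\sg_{AB}$ along $\Hb_{\ub}$ via the transport equation driven by $\chib$ and invoking the bounds on $\chib$ from Theorem \ref{main thm} and \cite{AL}, the induced area element is bounded pointwise by $CR(\o,\ub)^2\le C(\ub a)^2$. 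Integrating over $\mathbb S^2$ and sending $\ub\to 0$ (with $a$ fixed) yields the limit.

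For the monotonicity, the decisive input is the spacelikeness of the horizon $\mathcal H=\cup_{\ub}M_{\ub}$ proved in Theorem \ref{main thm}. I would compute $\frac{d}{d\ub}\mbox{Area}(M_{\ub})$ via the first variation of area expressed in the null frame. The variation vector along $\mathcal H$ is $V=\partial_{\ub}-(\partial_{\ub}R)\partial_u$; after projecting onto the $2$-dimensional normal bundle of $M_{\ub}$ and decomposing in the pair of null normals adapted to the MOTS, one obtains $V_\perp = L' + c\,\Lb'$, where $L'$ is the outgoing null normal for which the MOTS condition $\tr\chi'=0$ holds, $\Lb'=\partial_u$ is the incoming null normal, and
\begin{equation*}
c=-\partial_{\ub}R+\Omega^2|\nab R|^2.
\end{equation*}
Spacelikeness of $\mathcal H$ is precisely the statement $g(V_\perp,V_\perp)=-4\Omega^2 c>0$, i.e.\ $c<0$. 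The first variation formula then gives
\begin{equation*}
\frac{d}{d\ub}\mbox{Area}(M_{\ub})=\int_{M_{\ub}} c\cdot\Omega\,\tr\chib\;dA_{M_{\ub}},
\end{equation*}
since the contribution proportional to $\tr\chi'$ vanishes by the MOTS condition. In the trapped region where $M_{\ub}$ lies, $\tr\chib<0$ is quantitatively controlled by the estimates in \cite{AL} and Theorem \ref{main thm}; combined with $c<0$, the integrand is strictly positive, and integration in $\ub$ yields $\mbox{Area}(M_{\ub'})>\mbox{Area}(M_{\ub})$ for $\ub'>\ub$.

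The main technical step is the frame change from the background null pair $(e_3,e_4)$ to the MOTS-adapted pair $(\Lb',L')$ and the verification that the surviving term in the first variation is truly the deformed quantity $\tr\chi'$ rather than the unprojected $\tr\chi$; this frame change is precisely what is encoded in the deformation equation \eqref{deformation equation}, and the analogous computation is carried out in Section 11 of \cite{An17}, which the statement advertises as the template. Smoothness in $\ub$ needed to differentiate under the integral sign is supplied by the invertibility of the linearized MOTS operator (cf.\ \eqref{main linearization equation in An17}, now established under the weaker gradient hypothesis of Theorem \ref{main thm}) together with the implicit function theorem.
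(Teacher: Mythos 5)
Your proposal is correct and is essentially the paper's own route: the paper proves Proposition \ref{prop entropy} by invoking the geometric computations of Section 11 of \cite{An17}, which is exactly the argument you give — the area element of $M_{\ub}$ is the angular metric at $u=1-R$, bounded by $CR^2\le C(\ub a)^2$ via \eqref{Lambda-g-1} (giving the limit), and for monotonicity one flows along $V=\partial_{\ub}-(\partial_{\ub}R)\partial_u$, uses the MOTS condition to kill the outgoing-expansion term in the first variation, and uses spacelikeness (Theorem \ref{thrm-spacelike}) to fix the sign of the incoming-null coefficient, with smoothness in $\ub$ from Theorem \ref{thrm-smoothness-tubes} justifying the differentiation. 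Two cosmetic points: the expansion surviving in your first-variation formula should be the incoming expansion of the deformed surface $M_{\ub}$ (same null normal $\partial_u$, traced over the deformed tangent plane) rather than the background $\tr\chib$ of the round sections, and your $c$ omits an overall $\Omega$ factor and shift-vector corrections; by the estimates in Appendix \ref{appendix-A} these discrepancies are lower order, so the negativity of both factors — and hence your conclusion — is unaffected.
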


Inspired by the works of Ashtekar and Krishnan in \cite{AK03, AK}, we call the above proposition \textit{the area law} along the dynamical horizon. Taking $\ub$ as a time parameter, $\text{Area}(M_{\ub})$ could be viewed as the entropy along the apparent horizon at $\ub$.

\subsection{An Overview of the Proof of Theorem \ref{main thm}}
Under the assumptions \eqref{eq-assumption-f-0-1} and \eqref{eq-assumption-lower-bound-f}, solutions are not expected to be close to constant. In fact, anisotropic solutions do exist according to Theorem \ref{prop-anisotropic-supersolution-phi-multi-intro}. In the present far-away from close-to-isotropic case, the carefully designed gradient estimate as in \cite{An17} does not apply. We will employ more robust and powerful analytic methods to derive gradient estimates or, more generally, the $C^{2, \alpha}$-estimates. These methods are significantly different from those employed in \cite{An17}. Moreover, the gradient estimate in this paper is not as restrictive as the one established in \cite{An17}. Due to the more general form of our gradient estimate, we have to employ entirely different methods to study the invertibility of the linearized equations and the smooth dependence of solutions on parameters. 

Our proof of Theorem \ref{main thm} is non-perturbative in nature. Moser's iteration, the Harnack inequality, Schauder estimates, and the strong maximum principle are crucially used in our paper. Moreover, we can trace all constants used in our proof explicitly.  The constants $m$ and $\epsilon$ for $f$ in \eqref{eq-assumption-lower-bound-f} determine the upper bound constant of a priori estimates, which in turn determines all the rest of constants in later sections.

In the following, we outline our proof of Theorem \ref{main thm}. Key intermediate results are given by  
Theorem \ref{theorem-Schauder-phi} for a priori $C^{2,1/3}$-estimates, 
Lemma \ref{lemma-linearized-op-invert} for the invertibility of the linearized operator, 
Theorem \ref{theorem-existence-solutions} for the existence of solutions, 
Theorem \ref{thrm-smoothness-tubes} for the smooth dependence of solutions on the parameter $\ub$, 
and Theorem \ref{thrm-spacelike} for the spacelikeness of the hypersurfaces.

\subsubsection{\color{black} A New Quasilinear Elliptic Equation}
To derive a priori estimates for the fully anisotropic case, we change the form of (\ref{1.12}) and employ a more robust and powerful approach. We will use the function $\p=\p(\o,\ub)$ defined by 
$$R(\o,\ub)=\ub a e^{-\p(\o,\ub)}.$$ A straightforward computation changes (\ref{1.12}) to 
\begin{equation*}
\begin{split}
&\D_g\p+\f{e^{2\p}}{\ub^2 a^2}-\f{e^{3\p}f(\o,\ub)}{2\ub^2 a^2}
-2\O\chibh(\nab_g \p, \nab_g \p) \,\ub ae^{-\p}+2g(\eta, \nab_g \p)\\
&\qquad+\Big[-\f12\O\tr_g\chib{\color{black}-}\f{1}{\ub a e^{-\p}}-4\O\omb\Big]\ub ae^{-\p}|\nab_g \p|^2\\
&\qquad+\f{e^{\p}}{\ub a}\Big[\f12\O^{-1}\tr_g\chi-\f{1}{\ub a e^{-\p}}+\f{\ub a f(\o,\ub)}{2(\ub a)^2 e^{-2\phi}}\Big]=0.
\end{split}
\end{equation*}
Set $g={\color{black}(\ub a e^{-\p})^2}\gamma$. We further change the above equation to   
\begin{equation}\label{phi eqn}
\begin{split}
\mathcal{S}(\p)&\equiv\D_\gamma\p+1-\f12f(\o,\ub)e^{\p}
-2\O\chibh(\nab_{\gamma} \p, \nab_{\gamma} \p) \,\ub ae^{-\p}+2\gamma(\eta, \nab_{\gamma} \p)\\
&\qquad+\Big[-\f12\O\tr_g\chib{\color{black}-}\f{1}{\ub a e^{-\p}}-4\O\omb\Big]\ub ae^{-\p}|\nab_{\gamma} \p|^2\\
&\qquad+\ub ae^{-\p}\Big[\f12\O^{-1}\tr_g\chi-\f{1}{\ub a e^{-\p}}+\f{\ub a f(\o,\ub)}{2(\ub a)^2 e^{-2\phi}}\Big]=0.
\end{split}
\end{equation}
{\color{black} Under the assumptions \eqref{eq-assumption-f-0-1} and \eqref{eq-assumption-lower-bound-f},} in Section \ref{sec-Barriers} we construct upper and lower barriers for solutions $\phi(\o,\ub)$ to \eqref{phi eqn}. We then seek the solution satisfying 
\begin{equation}\label{eq-basic-bound}0\leq \p\leq \kappa.\end{equation} Here, $\kappa$ is a fixed large but specific constant related to the barriers. {\color{black} We should point out that the constants $m$ and $\epsilon$ in \eqref{eq-assumption-lower-bound-f} determine the constant $\kappa$ in \eqref{eq-basic-bound} explicitly, which then gives all the rest constants.}

We would like to emphasize that the metric $g={\color{black} (\ub a e^{-\p})^2}\gamma$ is degenerate as $\ub\to 0$, but the conformal metric $\gamma$ is uniformly bounded. As a consequence, equations in  
terms of the metric $\gamma$ are uniformly elliptic. This fact plays
a fundamental role in deriving various estimates for solutions to 
the elliptic equations of $\phi$ and $\partial_{\ub}\phi$.
We also point out that we will study three elliptic equations in this paper. The main equation is \eqref{phi eqn}, which is the equation for $\phi$. To solve
it, we need to study its linearized equation. To prove the spacelikeness of the apparent horizon,
we will also need to study the elliptic equation for $\partial_{\ub}\phi$. 

\subsubsection{Moser's Iteration and  Schauder Estimates}

Note that \eqref{phi eqn} is a quasilinear elliptic equation and $\gamma=\gamma(\ub,\p, \o)$.   
Assume $\p$ is a solution to $\mathcal S(\p)=0$.
We start our study with the $L^{\infty}$-bound for $\p$  as in \eqref{eq-basic-bound} {\color{black} and derive $C^{2,1/3}$-estimates in three successive steps.} First in Theorem \ref{theorem-Holder-solution}, we prove a $C^\alpha$-estimate of the form
$$|\p|_{C^{0,\a}(\mathbb{S}^2)}\leq Ce^{\kappa},$$ 
where $\a$ is a small positive number. This step from $L^{\infty}$ to $C^{0,\a}$ is achieved through the weak Harnack inequality and Moser's iteration together with an important observation that the equation \eqref{phi eqn} satisfies the \textit{natural growth} condition.
{\color{black} Refer to  Section 4.5 of \cite{HL} for the definition of natural growth condition.} 
Next in Theorem \ref{theorem-Holder-gradient}, we prove a $C^{1,1/3}$-estimate
\begin{equation*}
|\p|_{C^{1,1/3}(\mathbb{S}^2)}\le (Ce^{\kappa})^{Ce^{4\Lambda\kappa}},
\end{equation*}
where $\Lambda$ is the ellipticity constant of the metric component $(\gamma_{ij})$ and $C$ is a positive uniform constant.
In this step, we use {\color{black} the characterization of H\"older continuous functions by Campanato}  and the technique of freezing coefficients. 
Finally, we use the Schauder estimate and {\color{black} obtain 
\begin{equation}\label{eq-Schauder-estimates}|\p|_{C^{2, 1/3}(\mathbb{S}^2)}\leq (Ce^{\kappa})^{Ce^{4\Lambda\kappa}}.\end{equation}
Refer to Theorem \ref{theorem-Schauder-phi} for details.

\subsubsection{\color{black} The Invertibility of the Linearized Operator via the Positivity of the First Eigenvalue and the Strong Maximum Principle} We first compute the linearized operator $\partial_{\p}\mathcal{S}(\p)$. 
Let $\p$ be a given function on $\mathbb{S}^2$.  In Proposition \ref{form of linearization}, we prove that, for any function $w$ on $\mathbb{S}^2$,
\begin{equation}\label{eq-linearization-S-intro}
\partial_{\p} \mathcal S(\p)[w]=\f{d\,}{d\epsilon}\Big|_{\epsilon=0}\mathcal S(\p+\epsilon w) 
=\Delta_{\gamma}w-\f12fe^{\p}w+c^{i}\nab_i w+cw,
\end{equation}
where the connection is with respect to the metric $\gamma$, and 
$c=c(\o,\p,\ub)$ and $c^i=c^i(\o,\p,\ub)$ are functions satisfying 
\begin{equation}\label{c c^1 eqn1}
|c|+|c^i|\leq Ca^{-\f13} e^{\p}{\color{black} (|\nabla^2_{\gamma}\p|+|\nabla_\gamma\p|^2+1)}. 
\end{equation}
{\color{black} The estimate \eqref{c c^1 eqn1} plays an important role in the study of the invertibility of the linearized operator as in \eqref{eq-linearization-S-intro}, and it is based on scale-invariant estimates of some geometric quantities associated with the Einstein vacuum equations established in \cite{AL}.  In view of \eqref{eq-linearization-S-intro},} we consider
\begin{equation*}
L_0w=\D_\gamma w-\f12fe^{\phi}w,
\end{equation*}
and view $f$ and $\p$ as two fixed nonnegative functions on  
$\mathbb S^2$, with $f$ satisfying \eqref{eq-assumption-lower-bound-f}.  

In Lemma \ref{lemma-H1-estimates-L0}, we prove that {the equation $L_0w=0$ admits only the trivial solution. Moreover, the first eigenvalue $\mu_1$ of $-L_0$ is \textit{positive} and simple, and 
has a \textit{positive} eigenfunction $\psi_1$.}
By the positivity of $\mu_1$ and the positivity of $\psi_1$, we have 
\begin{align*}
\partial_{\p} \mathcal S(\p)[\psi_1]
=-\mu_1\psi_1+c^{i}\nab_i \psi_1+c\psi_1<0,
\end{align*}
{\color{black} by \eqref{eq-Schauder-estimates} and \eqref{c c^1 eqn1}, and  choosing} $a$ appropriately large depending on {\color{black}$\kappa$}. By Theorem 2.11 in \cite{HL}, the strong maximum principle then holds for the operator $\partial_{\p}\mathcal S(\p)$. In particular, for any $w\in C^2(\mathbb S^2)$, we have
\begin{equation*}
\max_{\mathbb S^2}|w|\le C\max_{\mathbb S^2}|\partial_{\p}\mathcal S(\p)[w]|,\end{equation*}
where $C$ is a positive constant independent of $w$. 
{\color{black} By the Schauder theory,  we obtain the crucial invertibility of the linearized operator
$\partial_{\p}\mathcal S(\p): C^{2, \beta}(\mathbb{S}^2)\rightarrow C^{{0},\beta}(\mathbb{S}^2)$ as in Lemma \ref{lemma-linearized-op-invert}.}

\subsubsection{The Existence, Uniqueness and Regularity of Solutions}
Now, we consider the equation $\mathcal S(\p, \ub)=0$. 

First, by the strong maximum principle,  we establish the uniqueness of solutions in Lemma \ref{lemma-comparison}.  
For the existence of solutions, we employ the method of continuity. 
We obtain the closedness by the $C^{2,1/3}$-estimates in Theorem \ref{theorem-Schauder-phi}
and the openness by the invertibility of the linearized operator in Lemma \ref{lemma-linearized-op-invert}. 
As a consequence, for each fixed constant $\ub\in(0,\delta]$, we find a smooth solution ${\p}=\p(\o,\ub)$ to $\mathcal S(\p, \ub)=0$
in Theorem \ref{theorem-existence-solutions}.

Next, we view such a solution $\p$ as a function of $\o$ and $\ub$. Using again the invertibility of the linearized operator and the implicit function theorem,  we conclude that $\p$ is smooth in both $\o$ and $\ub$ in 
Theorem \ref{thrm-smoothness-tubes}.

\subsubsection{The Spacelikeness of Apparent Horizons}
We further aim to prove that $\{u=1-\ub a e^{-\p(\o, \ub)}\}$ is a spacelike hypersurface. {\color{black} A key step is to derive an upper bound for $\partial_{\ub}\p$.} In Proposition \ref{prop-equation-partial-ub-phi}, we deduce the equation satisfied by $\partial_{\ub}\p$. For each $\ub \in (0,\delta]$, let $\p=\p(\o, \ub)$ be the solution 
to $\mathcal S(\p, \ub)=0$ as in Theorem \ref{theorem-existence-solutions}. 
Then, $\partial_{\ub}\p$ satisfies 
\begin{equation}\label{eq-ub-differentiation}
\Delta_{\gamma}(\partial_{\ub}\p)-\f12f(\o, \ub)e^{\p}\partial_{\ub}\p+c^{i}\nab_i (\partial_{\ub}\p)+c\partial_{\ub}\p=\frac{1}{\ub}h
+\f12\partial_{\ub}fe^{\p},
\end{equation}
where the connection is with respect to the metric $\gamma$, and 
$c=c(\o,\p,\ub)$, $c^i=c^i(\o,\p,\ub)$ and $h=h(\o,\p,\ub)$
are functions satisfying 
\begin{equation}\label{eq-ub-coefficients-estimates}
|c|+|c^i|+|h|\leq Ca^{-\f13} e^{\p}{\color{black} (|\nabla^2_{\gamma}\p|+|\nabla_\gamma\p|^2+1)}.\end{equation}
{\color{black} We now compare \eqref{eq-ub-differentiation} and \eqref{eq-ub-coefficients-estimates} with 
\eqref{eq-linearization-S-intro} and \eqref{c c^1 eqn1}, respectively. 
We point out that the linear operator acting on $\partial_{\ub}\p$ on the left-hand side of \eqref{eq-ub-differentiation}
is similar to the operator in \eqref{eq-linearization-S-intro}, with similar estimates of coefficients. 
However, the computation leading to \eqref{eq-ub-differentiation} and the derivation of \eqref{eq-ub-coefficients-estimates} 
are much more complicated.

With the help of the positivity of the first eigenvalue $\mu_1$ and the first eigenfunction $\psi_1$ again}, 
in Corollary \ref{cor-equation-partial-ub-phi} we prove
\begin{equation}\label{eq-estimate-partial-ub-phi 0}
\max_{\mathbb S^2}|\partial_{\ub}\p|\le C\big[\ub^{-1}a^{-\f13} \max_{\mathbb S^2}e^{\p}
(|\p|^2_{C^2(\mathbb S^2)}+1)+\max_{\mathbb S^2}|\partial_{\ub}fe^{\p}|\big],
\end{equation}
where $C$ is a positive constant independent of $a$, $\ub$ and $\p$. 
{\color{black} By  taking $a$ sufficiently large, we have 
$$\max_{\mathbb S^2}|\ub\partial_{\ub}\p|\le 1/2.$$

Let $g'$ be the induced metric to the surface $u=1-R(\o, \ub)=1-\ub a e^{-\p(\o, \ub)}$. We denote $\o=(\theta_1, \theta_2)$. Let $X=\lambda_1\partial_{\theta_1}+\lambda_2\partial_{\theta_2}+\lambda_3\partial_{\ub}$ be an arbitrary 
nonzero tangent vector with
some constants $\lambda_1, \lambda_2, \lambda_3$. A straightforward computation yields
\begin{align*}
g'(X,X)
=\ub^2 a^2 e^{-2\p}\lambda_i\lambda_j \gamma_{ij}-4\ub ae^{-\p}\lambda_i \lambda_3 \partial_{i}\p+4\lambda_3^2 a e^{-\p}\big(1-\ub \partial_{\ub}\p\big),\end{align*}
where the summation is over $i,j=1,2$. We then obtain $g'(X,X)>0$ if $a$ is sufficiently large.
In other words, the metric $g$ induced to $\{(\o, u, \ub);\,u=1-\ub a e^{-\p(\o, \ub)}\}$ is \textit{Riemannian} and the corresponding apparent horizon is spacelike, as asserted in Theorem \ref{thrm-spacelike}.

\subsubsection{A Final Remark} 
In analyzing both the quasilinear elliptic equation $\mathcal{S}(\p,\ub)=0$ and its linearization, 
the variation of Laplacian plays an important role. We compute variations of Laplacian in three different circumstances for different purposes. In analyzing the nonlinear equation $\mathcal S(\p,\ub)=0$, with the variation of Laplacian we calculate the difference between the Laplacian with respect to the general metric $\gamma$ and the Laplacian with respect to the standard spherical metric. The purpose of the comparison is to construct an explicit supersolution. Second, we need to derive a variation of Laplacian in proving the invertibility of linearized equations. Third, we need to derive the variation of Laplacian with respect to $\ub$ in studying the equation satisfied by $\partial_{\ub}\phi$. These variations are explicitly computed in terms of geometric quantities and solutions. With estimates of geometric quantities established in \cite{AL} and estimates of solutions established in Section \ref{sec-Schauder-estimates}, we are able to present precise estimates of these variations of the Laplacian operators as well as variations of other geometric quantities appearing in the equations. As a consequence, we can choose the parameter $a$ explicitly in terms of the initial data, notably $f$, through $m$ and $\epsilon$ defined in \eqref{eq-assumption-lower-bound-f}.

\smallskip 

The paper is organized as follows. In Section \ref{sec-Settings}, we demonstrate the settings. In Section \ref{sec-Barriers}, we construct subsolutions and supersolutions. In Section \ref{sec-Schauder-estimates}, we establish Schauder estimates. In Section \ref{sec-linearized-equations}, we analyze the linearized operator and prove its invertibility.  In Section \ref{sec-existence-solutions}, we employ the method of continuity to prove the existence and properties of solutions. In Section \ref{anisotropic examples}, we construct explicit anisotropic apparent horizons with valleys. In Appendix \ref{appendix-A}, we collect useful estimates for some geometric quantities  from \cite{AL}.

\subsection{Acknowledgements} We thank Demetrios Christodoulou and Sergiu Klainerman for valuable conversations.

\section{Settings}\label{sec-Settings} 

In this section, we introduce the main equation to be studied and rearrange it into a preferred form.

\subsection{The Double Null Foliation and the Coordinate System} 
All computations in this paper are based on a double null foliation, with which we decompose the spacetime with outgoing and incoming null cones, i.e., $u=\mbox{constant}$ and $\ub=\mbox{constant}$, respectively. Here $u$ and $\ub$ satisfy the eikonal equations 
$$g^{\mu\nu}\partial_{\mu}u\partial_{\nu}u=0, \quad g^{\mu\nu}\partial_{\mu}\ub\partial_{\nu}\ub=0.$$
Given $u'$ and $\ub'$ two fixed constants, the intersection of $u=u'$ and $\ub=\ub'$ is a $2$-sphere, denoted by $S_{u',\ub'}$.\footnote{For more detailed description of the double null foliation, interested readers are referred to Chapter 1 and Chapter 2 of \cite{Chr:book} and  Section 2.1 of \cite{An19}. } 

On $S_{u', \ub'}$, we will use coordinates ($\theta^1,\theta^2$). To decide $(\theta^1, \theta^2)$ for each point on $S_{u',\ub'}$, we follow Chapter 1 of \cite{Chr:book}. First,  $S_{0,0}$ is the standard $2$-sphere in the Minkowskian spacetime with radius $1$, and we define ($\theta^1, \theta^2$) to be coordinates on it via the stereographic projection. Then, we extend this $(\theta^1, \theta^2)$ coordinate system to $\Hb_0$ ($\ub=0$) by requiring
$$\f{\partial}{\partial u} \theta^A=0\mbox{ on $\Hb_0$}.\footnote{Here we assume the geometry along $\Hb_0$ to be Minkowskian.}$$
In another word,  fix a point $p$ on $S_{0, 0}$ and assume $l_p$ is the incoming null geodesics on $\Hb_0$ emitting from $p$, and we require all the points along $l_p$ have the same angular coordinate $(\theta^1, \theta^2)$.
We further extend coordinate system $(\theta^1, \theta^2)$ from $\Hb_0$ to the whole spacetime region by requiring
$$\Ls_{\f{\partial}{\partial \ub}} \th^A=0;\footnote{Here $\Ls_L$ is the restriction of the Lie derivative to $TS_{u',\ub'}$.}$$ 
namely, every point along the same outgoing null geodesics on $H_u$ shares the same angular coordinate. We hence obtain a coordinate system in a neighborhood of $S_{0, 0}$. 
With it, we take the ansatz of the Lorentzian metric $g$ to be of the form
\begin{equation}\label{equation g}
g=-2\O^2(du\otimes d\ub+d\ub\otimes du)+\gamma_{AB}(d\theta^A-d^A du)\otimes(d\theta^B-d^B du).
\end{equation}
We define $e_3, e_4$ by 
$${\color{black}e_3=\Omega^{-1}\frac{\partial}{\partial u}, \quad e_4=\Omega^{-1}\left(\frac{\partial}{\partial \ub}+d^A\frac{\partial}{\partial \th^A}\right).}$$ 
It is straightforward to check that $\{e_3, e_4\}$ forms a null pair. Here $e_3, e_4$ are along incoming and outgoing null directions and 
$$g(e_3, e_3)=0, \quad g(e_4, e_4)=0, \quad g(e_3, e_4)=-2.$$
Denote by $\{e_1, e_2\}$ a frame tangent to the $2$-sphere $S_{u,\ub}$. We then decompose curvature components and Ricci coefficients using frames $\{e_1, e_2, e_3, e_4\}$.

\subsection{The Geometric Quantities}
Set the indices $A,B$ to be $1,2$ and let $D_A=D_{e_{A}}$. We define Ricci coefficients: 
\begin{equation}\label{def Ricci coefficients}
\begin{split}
&\chi_{AB}=g(D_A e_4,e_B),\, \,\, \quad \chib_{AB}=g(D_A e_3,e_B),\\
&\eta_A=-\frac 12 g(D_3 e_A,e_4),\quad \etab_A=-\frac 12 g(D_4 e_A,e_3),\\
&\tilde\omega=-\frac 14 g(D_4 e_3,e_4),\quad\,\,\, \omegab=-\frac 14 g(D_3 e_4,e_3),\\
&\zeta_A=\frac 1 2 g(D_A e_4,e_3).
\end{split}
\end{equation}
Denote by $\slashed{g}_{AB}$ the induced metric on $\S$. We further decompose $\chi, \chib$ as
$$\chi_{AB}=\f12\tr\chi\cdot \slashed{g}_{AB}+\chih_{AB}, \quad \chib_{AB}=\f12\tr\chib\cdot \slashed{g}_{AB}+\chibh_{AB}.$$
Note that $\chih_{AB}$ and $\chibh_{AB}$ are the corresponding traceless parts of $\chi_{AB}$ and $\chib_{AB}$. In Appendix \ref{appendix-A}, we will
collect useful estimates for these quantities  from \cite{AL}. Such estimates play important roles in this paper.

\subsection{The Main Elliptic Equation} 
We now introduce the main equation to be solved and rearrange it in a preferred form. 
In Sections \ref{sec-Barriers}-\ref{sec-linearized-equations}, we will view $g$,  $\chib$, $\chi$, $\eta$, $\O$, $\omb$ defined above as functions of $\o\in\mathbb S^2$ and $u$, parametrized by $\ub$; 
while in Section \ref{sec-existence-solutions}, we will view them as functions of $\o$, $u$, and $\ub$.

We write $u=1-R$. For each fixed $\ub$ and any function $R=R(\o,\ub)$ on $\mathbb S^2$, we set 
\begin{equation}\label{eq-expression-cal-G-ub}
\begin{split}
\mathcal G(R,\ub)&=\D_{g} R+\f12 \O \tr_g\chib |\nab_g R|^2-\f12\O^{-1} \tr_g\chi\\
&\qquad+4\O \omb |\nab_g R|^2+2\Omega\chibh(\nab_g R, \nab_g R)+2g(\eta, \nab_g R).
\end{split}
\end{equation}
Here $g$ is viewed as an induced metric on the 2-sphere $M=\{u=1-R\}$ along the incoming null cone $\Hb_{\ub}$. 
Using the expression of $\tr\chi'$ in \eqref{deformation equation}, we have
$$u=1-R(\o) \text{ is a} 
\begin{cases}
\text{MOTS} & \text{if $\mathcal G(R,\ub)=0$ on }\mathbb{S}^2 ,\\
\text{trapped surface} & \text{if $\mathcal G(R,\ub)>0$ on }\mathbb{S}^2,\\
\text{untrapped surface} & \text{if $\mathcal G(R,\ub)<0$ on }\mathbb{S}^2.
\end{cases}$$
Our primary goal is to find a solution $R=R(\o, \ub)$ of $\mathcal G(R,\ub)=0$ for each $\ub$
and then study the regularity of $R=R(\o, \ub)$ in terms of $\ub$. 
When there is no danger of confusion, we suppress $\ub$ and write $R=R(\o)$ and $\mathcal G(R)$ for simplicity. 

We proceed to rearrange $\mathcal G(R,\ub)$ in a preferred form. The rearrangement is based on estimates established in  \cite{AL}. Set 
\begin{equation}\label{eq-definitions-alphas}\begin{split}
\alpha_1&=-\Big(\f12\O\tr_g\chib+\f{1}{R}+4\O\omb\Big),\\
\alpha_2&=\f12\O^{-1}\tr_g\chi-\f{1}{R}+\f{\ub a}{2R^2} f(\o,\ub),
\end{split}\end{equation}
where $f=f(\o,\ub)$ is the function introduced in \eqref{eq-definition-f}. We now rewrite $\mathcal G(R,\ub)$ as 
\begin{equation}\label{eq-expression-cal-G-ub-v1}
\begin{split}
\mathcal G(R,\ub)&=\D_{g} R-\f1R|\nab_g R|^2-\f1R+\f{\ub a}{2R^2}f\\
&\qquad+2\Omega g^{ik}g^{jl}\chibh_{kl}\nab_i R\nab_j R+2g^{ij}\eta_i \nab_j R
-\alpha_1 |\nab_g R|^2-\alpha_2.
\end{split}
\end{equation}
The metric $g$ depends on $\o$, $u$, and $\ub$. Note that  
\begin{equation*}
\partial_ug_{kl}=2\O\chib_{kl}, \quad \partial_{\ub}g_{kl}=2\O\chi_{kl}{\color{black}-\Omega d^A \f{\partial \chi_{kl}}{\partial \theta^A}}.\end{equation*}
By writing $\chib$ and $\chi$ as combinations of trace parts and traceless parts, we have 
\begin{equation}\label{eq-variation-g-u} 
\partial_ug_{kl}=\O\tr_g\chib\, g_{kl}+2\O \chibh_{kl},\end{equation}
and 
\begin{equation}\label{eq-variation-g-ub} \partial_{\ub}g_{kl}=\O\tr_g\chi\, g_{kl}+2\O \chih_{kl}{\color{black}-\Omega d^A \f{\partial \chi_{kl}}{\partial \theta^A}}.
\end{equation}
Along fixed $\Hb_{\ub}$, when no confusing would be caused, we suppress $\ub$ dependence and rewrite $g(\o,u,\ub)$ as $g(\o, u)$ with $u=1-R$. In local coordinates, the estimates of $g$ in \cite{AL} and \cite{An19} imply 
\begin{equation}\label{Lambda-g-1}
\f12R^2\leq \sqrt{|g|(\o,u)}\leq \f32R^2, \quad \Lambda^{-1}R^2I\leq g_{ij}(\o, u)\leq \Lambda R^2I,
\end{equation}
and
{\color{black}\begin{align}\label{Lambda-g-2}|D_{\o}g_{ij}(\o, u)|+R|\partial_{u}g_{ij}(\o,u)|&\leq \Lambda_1 R^2,\\
\label{Lambda-g-3} |D^2_{\o}g_{ij}(\o, u)|+R|D_{\o}\partial_{u}g_{ij}(\o,u)|+R^2|\partial^2_{u}g_{ij}(\o,u)|&\leq \Lambda_2 R^2,\end{align}
for some positive constants $\Lambda$, $\Lambda_1$, and $\Lambda_2$.} 
Estimates on $\Omega$, $\chi$, $\chibh$, $\eta$, $\alpha_1$, and $\alpha_2$ were established in \cite{AL} 
and will be recalled in Appendix \ref{appendix-A}.

Next, for each fixed $\ub$ we introduce a function $\phi=\phi(\o, \ub)$ by
$$R(\o,\ub)=\ub a e^{-\p(\o,\ub)}.$$
If there is no danger of confusion, we also write $\phi(\o)$, $f(\o)$ instead of $\phi(\o,\ub)$, $f(\o,\ub)$. We then define the operator $S(\p, \ub)$ by 
\begin{equation}\label{eq-relation-R-S}\mathcal S(\phi,\ub)=-R \mathcal G(\p, \ub).\end{equation}
With a straightforward computation, we obtain  
\begin{equation}\label{eq-expression-cal-S-ub-v1}
\mathcal S(\p,\ub)=R^2\D_{g}\phi+1-\f12fe^{\p}+F(\p, \ub),\end{equation}  
where 
\begin{equation}\label{eq-expression-cal-S-ub-nonlinear-v1}
F(\p, \ub)=-2R^3\Omega g^{ik}g^{jl}\chibh_{kl}\nab_i \p\nab_j \p+2R^2g^{ij}\eta_i \nab_j \p
+R^3\alpha_1 |\nab_g \p|^2+R\alpha_2.
\end{equation}

Last, we introduce a metric $\gamma=\gamma(\o, u)$ on $\mathbb S^2$ such that 
\begin{equation}\label{eq-relation-g-gamma} g=R^{2}\gamma.\end{equation} 
By \eqref{Lambda-g-1}, \eqref{Lambda-g-2} and \eqref{Lambda-g-3}, in local coordinates $\gamma$ satisfies 
\begin{equation}\label{Lambda-gamma-1}
\f12\leq \sqrt{|\gamma|(\o,u)}\leq \f32, \quad \Lambda^{-1}I\leq \gamma_{ij}(\o, u)\leq \Lambda I,
\end{equation}
and
{\color{black}\begin{align}\label{Lambda-gamma-2}|D_{\o}\gamma_{ij}(\o, u)|+R|\partial_{u}\gamma_{ij}(\o,u)|&\leq \Lambda_1,\\
\label{Lambda-gamma-3} |D^2_{\o}\gamma_{ij}(\o, u)|+R|D_{\o}\partial_{u}\gamma_{ij}(\o,u)|+R^2|\partial^2_{u}\gamma_{ij}(\o,u)|&\leq \Lambda_2,\end{align}
via renaming the positive constants $\Lambda_1$ and $\Lambda_2$ if necessary.} Using $\gamma$, it is straightforward to check that
\begin{equation}\label{eq-expression-cal-S-ub-v2}
\mathcal S(\p,\ub)=\D_{\gamma}\phi+1-\f12fe^{\p}+F(\p, \ub),\end{equation}
where
\begin{equation}\label{eq-expression-cal-S-ub-nonlinear-v2}
F(\p, \ub)=2R^{-1}\Omega\chibh_{kl} \gamma^{ik}\gamma^{jl}\nab_i \p\nab_j \p+2\gamma^{ij}\eta_i \nab_j \p
+R\alpha_1 |\nab_{\gamma} \p|^2
+R\alpha_2,
\end{equation}
and $\a_1, \a_2$ are defined in \eqref{eq-definitions-alphas}.

Note that $\mathcal{S}(\phi,\ub)=0$ is the main elliptic equation in the preferred form and can be expressed as 
\begin{align*}
\begin{split}
&\D_\gamma\p+1-\f12f(\o,\ub)e^{\p}
{\color{black}+}2\O\chibh(\nab_{\gamma} \p, \nab_{\gamma} \p) \,\ub ae^{-\p}+2\gamma(\eta, \nab_{\gamma} \p)\\
&\qquad+\Big[-\f12\O\tr_g\chib{\color{black}-}\f{1}{\ub a e^{-\p}}-4\O\omb\Big]\ub ae^{-\p}|\nab_{\gamma} \p|^2\\
&\qquad+\ub ae^{-\p}\Big[\f12\O^{-1}\tr_g\chi-\f{1}{\ub a e^{-\p}}+\f{\ub a f(\o,\ub)}{2(\ub a)^2 e^{-2\phi}}\Big]=0,
\end{split}
\end{align*}
which is (\ref{phi eqn}) in the introduction.

By the relation \eqref{eq-relation-R-S}, we further conclude that 
$$u=1-\ub a e^{-\p(\o)} \text{ is a} 
\begin{cases}
\text{MOTS} & \text{if $\mathcal S(\p,\ub)=0$ on }\mathbb{S}^2 ,\\
\text{trapped surface} & \text{if $\mathcal S(\p,\ub)<0$ on }\mathbb{S}^2,\\
\text{untrapped surface} & \text{if $\mathcal S(\p,\ub)>0$ on }\mathbb{S}^2.
\end{cases}$$
Note that we have
\begin{equation}\label{eq-relation-u-R-phi}u=1-R(\o,\ub), \quad R(\o,\ub)=\ub a e^{-\p(\o,\ub)}.\end{equation} 
In the rest of the paper, we will study exclusively the equation $\mathcal S(\p,\ub)=0$ and aim to find its solution $\p(\o, \ub)$.

\section{Subsolutions and Supersolutions}\label{sec-Barriers}  

In this section, we construct an untrapped surface and a trapped surface  
on $\Hb_{\ub}$ for each small $\ub>0$. In later sections, we will prove that 
there exists a unique MOTS between these two surfaces.

Throughout this section, we fix a $\ub\in (0,\delta]$. 
For any function $\phi=\phi(\o)$ on $\mathbb S^2$, let $\mathcal S(\phi)$ be the operator defined by 
\eqref{eq-expression-cal-S-ub-v2}, i.e., 
\begin{equation}\label{3.0-v1}
\mathcal S(\p)=\D_{\gamma}\phi+1-\f12fe^{\p}+F(\phi),
\end{equation}
where $F$ is given in \eqref{eq-expression-cal-S-ub-nonlinear-v2}. 
The aim of this section is to construct subsolutions and supersolutions for the operator $\mathcal S$. 
We always assume that $f$ is a nonnegative smooth function on $\mathbb S^2$, not identically zero. 
We point out that $f$ is allowed to be zero in a large portion of $\mathbb S^2$. 

\begin{remark}
Throughout this paper, we will use parameters $a$ and $\kappa$, which satisfy
\begin{equation}\label{eq-choice-a-b}
(Ce^{\kappa})^{Ce^{4\Lambda\kappa}}\leq a^{\f14}\ll a^{\f13}.
\end{equation}
Here, $C$ is a uniform positive constant defined in Theorem \ref{theorem-Schauder-phi}, and $\Lambda$ is introduced in \eqref{Lambda-gamma-1}.\end{remark}

\begin{remark}\label{b anisotropicity} 
For any $m\in (0,1)$, $\e\in(0, {\pi}/{2})$, $\tau>2.2$, and $C(\tau)$ to be specified in Lemma \ref{lemma-difference-Laplacians} , as to be discussed in Theorem \ref{theorem-existence-solutions} and Remark \ref{negative phi}, we solve for MOTS lying in $0\leq \ub \leq \delta$ and $0\leq u \leq 1-m\epsilon^{\tau+2}C(\tau)\ub a$. To guarantee that the MOTS is within the existence region of Theorem \ref{main thm}, i.e., $0\leq \ub \leq \d$ and $0\leq u\leq 1-b\ub \at$, we require $m\epsilon^{\tau+2}C(\tau)\geq ba^{-\f12}$. This requirement can easily be achieved by setting $1\ll b\ll \at$ and $a$ being sufficiently large. Here allowing $b\neq \at$ as in \cite{AL} is important.
\end{remark}

We start from constructing subsolutions. 

\begin{lemma}\label{lemma-sub-solution-phi}
Let $f$ be a nonnegative continuous function on $\mathbb S^2$, with $f\le 1$.  
Then, $\underline{\p}(\o)=\ln(3/2)$ 
satisfies $\mathcal S(\underline{\p})>0$, for $a$ sufficiently large. 
\end{lemma}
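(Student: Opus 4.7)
The plan is to exploit the fact that $\underline{\phi} = \ln(3/2)$ is constant so that all derivative terms in $\mathcal S$ disappear, and then use the smallness of the remaining non-derivative correction $R\alpha_2$ for large $a$.

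First, since $\underline\phi$ is constant, $\nabla_\gamma\underline\phi \equiv 0$ and $\Delta_\gamma\underline\phi \equiv 0$. Looking at the expression \eqref{eq-expression-cal-S-ub-nonlinear-v2} for $F(\phi,\ub)$, every term except $R\alpha_2$ contains at least one factor of $\nabla\underline\phi$, hence vanishes. Thus
\[
\mathcal S(\underline\phi) \;=\; 1 \,-\, \tfrac12 f e^{\underline\phi} \,+\, R\alpha_2 \;=\; 1 - \tfrac{3}{4}f + R\alpha_2.
\]
Since $0\le f\le 1$ by hypothesis, we obtain the pointwise lower bound $1 - \tfrac34 f \ge \tfrac14$ on $\mathbb S^2$.

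Next, I need to control $R\alpha_2$, where $R = \ub a e^{-\underline\phi} = \tfrac{2}{3}\ub a$ and $\alpha_2$ is defined in \eqref{eq-definitions-alphas} by
\[
\alpha_2 \;=\; \tfrac12 \Omega^{-1}\mathrm{tr}_g\chi \,-\, \tfrac{1}{R} \,+\, \tfrac{\ub a}{2R^2}f(\omega,\ub).
\]
The reason for the precise grouping in $\alpha_2$ is that the three terms nearly cancel: heuristically, along the incoming cone $\tfrac12\Omega^{-1}\mathrm{tr}_g\chi$ is close to $1/R$ minus the mass-aspect contribution built from $f$. The appendix (reproducing scale-invariant estimates from \cite{AL}) provides a bound of the schematic form $|R\alpha_2|\le C a^{-1/2}$ (or some comparable negative power of $a$), uniformly on $\mathbb S^2\times(0,\delta]$. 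I would simply invoke this bound from Appendix \ref{appendix-A}.

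Combining these two inputs gives
\[
\mathcal S(\underline\phi)(\omega) \;\ge\; \tfrac14 - |R\alpha_2| \;\ge\; \tfrac14 - Ca^{-1/2} \;>\; 0,
\]
once $a$ is chosen sufficiently large (consistent with \eqref{eq-choice-a-b}). The only subtlety is to verify that the estimate on $\alpha_2$ from the appendix applies at the point $u = 1 - R = 1 - \tfrac{2}{3}\ub a$ corresponding to $\underline\phi$, which lies in the admissible region $0\le u\le 1-b\ub\, a^{1/2}$ provided $a$ is chosen large relative to $b$; this is exactly the regime stipulated in Remark \ref{b anisotropicity}. Once that regime is fixed, the proof reduces to the two-line calculation above. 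The main (but routine) obstacle is thus just citing the correct scale-invariant bound on $\alpha_2$ from the appendix; everything else is algebraic.
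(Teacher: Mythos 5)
Your proposal is correct and follows essentially the same route as the paper: the paper also computes $\mathcal S(\ln(3/2))=1-\tfrac34 f+F(\ln(3/2))\ge \tfrac14+F(\ln(3/2))$ and then absorbs $F$ using the bound $|F(\p)|\le Ce^{\p}a^{-\f13}(|\nab_\gamma\p|^2+1)$ coming from \eqref{eq-definitions-alphas} and the appendix estimates \eqref{A1}; your specialization to constant $\p$ (so that $F(\underline\p)=R\alpha_2$) is just an earlier simplification of the same argument. The only cosmetic slip is the quoted power: \eqref{A1} gives $R|\alpha_2|\le Ce^{\p}a^{-1/3}$, not $Ca^{-1/2}$, but as you anticipated this changes nothing in the conclusion.
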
  

\begin{proof} By \eqref{eq-expression-cal-S-ub-nonlinear-v2}, \eqref{eq-definitions-alphas} and \eqref{A1}, we have 
\begin{equation}\label{eq-estimate-F-v1}
|F(\p)|\le C e^{\p}a^{-\f13}(|\nab_{\gamma}\p|^2+1).
\end{equation} 
Hence, 
$$\mathcal S(\ln(3/2))=1-\f34f+F(\ln(3/2))\ge \f14+F(\ln(3/2))>0,$$
if $a$ is sufficiently large. 
\end{proof} 

\begin{remark}\label{negative phi}
There are other subsolutions. For example, the function $\underline{\p}=\ln(\ub a)<0$ satisfies $\mathcal S(\underline{\phi})>0$. With  $\underline{\p}=\ln(\ub a)$, we can prove a similar result to Theorem \ref{theorem-existence-solutions}. For simplicity of presentations,  we choose $\underline{\phi}(\o)=\ln({3}/{2})>0$ in the following sections.
\end{remark}

Before we construct supersolutions, we estimate $\D_{\gamma}\phi-\D_{\gamma_0}\phi$. 
Here, $\gamma_0$ is the standard spherical metric on $\mathbb S^2$. 
We first derive a well-known formula for the variation of the Laplacian operator.

\begin{lemma}\label{lemma-variation-Laplacians} 
Let $\{\gamma(s)\}$ be a family of metrics and $\{\p(s)\}$ be a family of functions, both parametrized by $s$. 
Then, 
\begin{equation}\label{eq-variation-Laplace-v1}
(\Delta_{\gamma}\p)^{\boldsymbol\cdot}=\Delta_\gamma\dot{\p}-\gamma^{ij}\gamma^{kl}\dot{\gamma}_{jl}\nabla_{ik}\p
-\gamma^{ij}\gamma^{kl}\nabla_i\dot{\gamma}_{jl}\nabla_k\p+\frac12\gamma^{ij}\gamma^{kl}\nabla_l\dot{\gamma}_{ij}\nabla_k\p,
\end{equation}
where we denote $\dot{\,}=\f{d\,}{ds}$. 
\end{lemma}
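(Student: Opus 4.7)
The plan is to give a direct coordinate computation based on the standard variation formulas for the inverse metric and the Christoffel symbols. In a local coordinate chart, write
\[
\Delta_{\gamma}\phi = \gamma^{ij}\nabla_{ij}\phi = \gamma^{ij}\bigl(\partial_{ij}\phi - \Gamma^{k}_{ij}\partial_{k}\phi\bigr),
\]
and differentiate both sides in $s$. This produces three contributions: one from differentiating $\gamma^{ij}$, one from differentiating $\nabla_{ij}\phi$ through $\phi$, and one from differentiating $\nabla_{ij}\phi$ through the Christoffel symbols.

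The first contribution uses the identity $\dot{\gamma}^{ij} = -\gamma^{ip}\gamma^{jq}\dot{\gamma}_{pq}$, which after relabeling dummy indices yields precisely the second term $-\gamma^{ij}\gamma^{kl}\dot{\gamma}_{jl}\nabla_{ik}\phi$ of the stated formula. The second contribution gives $\gamma^{ij}\nabla_{ij}\dot{\phi} = \Delta_{\gamma}\dot{\phi}$, since $(\partial_{ij}\phi - \Gamma^{k}_{ij}\partial_{k}\phi)^{\boldsymbol\cdot}$ contributes $\partial_{ij}\dot{\phi} - \Gamma^{k}_{ij}\partial_{k}\dot{\phi} = \nabla_{ij}\dot{\phi}$ when we extract the $\dot{\phi}$ pieces.

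The third contribution is the only one where I need to be careful: I would invoke the standard formula
\[
\dot{\Gamma}^{k}_{ij} = \tfrac{1}{2}\gamma^{kl}\bigl(\nabla_{i}\dot{\gamma}_{jl} + \nabla_{j}\dot{\gamma}_{il} - \nabla_{l}\dot{\gamma}_{ij}\bigr),
\]
which one derives in a normal coordinate system at a point (where $\Gamma = 0$ so that $\partial$ and $\nabla$ agree) and then reads off as a tensorial identity. Contracting with $\gamma^{ij}$ and using the symmetry in $i,j$ to merge the first two terms yields
\[
\gamma^{ij}\dot{\Gamma}^{k}_{ij} = \gamma^{ij}\gamma^{kl}\nabla_{i}\dot{\gamma}_{jl} - \tfrac{1}{2}\gamma^{ij}\gamma^{kl}\nabla_{l}\dot{\gamma}_{ij}.
\]
Multiplying by $-\nabla_{k}\phi$ reproduces the remaining two terms in \eqref{eq-variation-Laplace-v1}. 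Summing the three contributions gives the claim.

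There is no genuine obstacle here; the computation is standard and the only real care needed is bookkeeping of the dummy indices so that the final expression matches the stated form exactly. I would carry it out in two lines of display, invoking the variation formula for $\Gamma^{k}_{ij}$ as a black-box derivation done at a point in normal coordinates, which is legitimate because both sides of the identity are tensorial.
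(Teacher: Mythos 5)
Your proposal is correct and follows essentially the same route as the paper: differentiate $\Delta_\gamma\p=\gamma^{ij}\partial_{ij}\p-\gamma^{ij}\Gamma^k_{ij}\partial_k\p$, use $\dot{\gamma}^{ij}=-\gamma^{ik}\gamma^{jl}\dot{\gamma}_{kl}$, and insert the variation of the Christoffel symbols. The only cosmetic difference is that you quote $\dot{\Gamma}^k_{ij}$ in its covariant (tensorial) form via normal coordinates, whereas the paper writes it with partial derivatives plus the $\gamma_{ml}\Gamma^m_{ij}\dot{\gamma}^{kl}$ correction and absorbs the bookkeeping into a "straightforward calculation"; both are valid and lead to the same identity.
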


\begin{proof} First, we have
$$\dot{\gamma}^{ij}=-\gamma^{ik}\gamma^{jl}\dot{\gamma}_{kl}.$$ 
Next, with 
$$\Gamma_{ij}^k=\frac12\gamma^{kl}\big[\partial_i\gamma_{jl}+\partial_j\gamma_{il}-\partial_l\gamma_{ij}\big],$$ 
we get
\begin{align*}
\dot{\Gamma}_{ij}^{k}=\frac12\gamma^{kl}\big[\partial_i\dot{\gamma}_{jl}+\partial_j\dot{\gamma}_{il}-\partial_l\dot{\gamma}_{ij}\big]
+\gamma_{ml}\Gamma_{ij}^m\dot{\gamma}^{kl}.
\end{align*} 
With 
$$\Delta_{\gamma} \p=\gamma^{ij}\partial_{ij}\p-\gamma^{ij}\Gamma_{ij}^k\partial_k\p,$$
a straightforward calculation yields the desired result. 
\end{proof}

\begin{lemma}\label{lemma-difference-Laplacians}  
Let $\phi\in C^2(\mathbb S^2)$ be an arbitrary  positive function. Then, 
\begin{equation}\label{eq-estimate-difference-Laplacians}
|\D_{\gamma}\phi-\D_{\gamma_0}\phi|\le C(1+\p)(|\nab_{\gamma}^2\p|+|\nab_{\gamma} \p|+|\nab_{\gamma}\p|^2)e^{\p}a^{-\f13},\end{equation}
where $C$ is a positive constant, independently of $a$ and $\p$. 
\end{lemma}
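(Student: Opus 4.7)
The plan is to apply Lemma \ref{lemma-variation-Laplacians} along the one-parameter family of induced metrics $\{\gamma(\o,u)\}_{u\in[0,1-R]}$ on the incoming cone $\Hb_{\ub}$, where $R=\ub a e^{-\p(\o)}$. Since $\Hb_0$ is Minkowskian, $\gamma(\o,0)=\gamma_0$ is the standard spherical metric, whereas $\gamma(\o,1-R)=\gamma$ is the induced metric on the candidate MOTS. Applying Lemma \ref{lemma-variation-Laplacians} with $\p$ independent of the parameter $u$ (so $\dot\p=0$) gives the pointwise identity
$$\frac{d}{du}\big(\Delta_{\gamma(u)}\p\big)=-\gamma^{ij}\gamma^{kl}\partial_u\gamma_{jl}\,\nabla_{ik}\p-\gamma^{ij}\gamma^{kl}\big(\nabla_i\partial_u\gamma_{jl}\big)\nabla_k\p+\tfrac12\gamma^{ij}\gamma^{kl}\big(\nabla_l\partial_u\gamma_{ij}\big)\nabla_k\p,$$
with all traces and covariant derivatives taken with respect to $\gamma(u)$. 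Integrating in $u$ over $[0,1-R]$ represents $\Delta_\gamma\p-\Delta_{\gamma_0}\p$ as a single $u$-integral of this expression.

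Next I would compute $\partial_u\gamma$ explicitly. Differentiating the conformal relation $g=R'^2\gamma$ with $R'=1-u$ and comparing with \eqref{eq-variation-g-u} yields
$$\partial_u\gamma_{kl}=\Big(\Omega\,\tr_g\chib+\frac{2}{R'}\Big)\gamma_{kl}+\frac{2\Omega\,\chibh_{kl}}{R'^2}.$$
Both coefficients are small because of near-cancellation with the Minkowskian background: the first term is the deviation of $\Omega\tr_g\chib$ from its Minkowskian value $-2/R'$, while the second is controlled by the traceless Ricci coefficient $\chibh$. The scale-invariant estimates of \cite{AL} collected in Appendix \ref{appendix-A}, together with \eqref{Lambda-gamma-2}--\eqref{Lambda-gamma-3} for the Christoffel symbols of $\gamma(u)$ that arise in $\nabla\partial_u\gamma$, provide sharp bounds on $|\partial_u\gamma|_{\gamma(u)}$ and $|\nabla\partial_u\gamma|_{\gamma(u)}$ carrying the smallness factor $a^{-1/3}$ together with the natural $R'$-weight.

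Substituting these bounds into the integrand and carrying out the $u$-integration produces the three-term structure $|\nabla^2_\gamma\p|+|\nabla_\gamma\p|+|\nabla_\gamma\p|^2$ on the right-hand side: the Hessian $\nabla^2_{\gamma(u)}\p$ in the integrand must be rewritten in terms of $\nabla^2_\gamma\p$ via the difference of Christoffel symbols of $\gamma(u)$ and $\gamma$, which by the same AL estimates is small and pairs with $\nabla_\gamma\p$, and a further such pairing contributes the $|\nabla_\gamma\p|^2$ term. The logarithmic $u$-integration produces $\log(1/R)=\p+\log(1/(\ub a))$, and this logarithm combines with the remaining weight $R'^{-1}=e^{\p}/(\ub a)$ inherited from the AL estimates to yield, after absorbing universal constants, the prefactor $(1+\p)\,e^{\p}\,a^{-1/3}$ claimed in the lemma.

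The main obstacle is the careful bookkeeping of scale-invariant weights. The quantity $\partial_u\gamma$ is a sum of two geometrically distinct contributions with slightly different sharp $R'$-behaviour from \cite{AL}, and these weights must be combined precisely with the mild divergence of $\int_0^{1-R}du/(1-u)$ near $u=1$ and with the conversion between the intermediate and terminal Hessians of $\p$. Only when this bookkeeping is done correctly does the final bound assemble into the precise prefactor $(1+\p)e^{\p}a^{-1/3}$, rather than a strictly weaker combination; this sharpness is what allows the estimate to be absorbed in later sections, notably in the proof of Theorem \ref{theorem-Schauder-phi} and in the subsequent analysis of the linearized operator.
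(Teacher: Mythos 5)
Your overall strategy (apply Lemma \ref{lemma-variation-Laplacians} along a one-parameter family of induced conformal metrics and control $\dot\gamma$ via the scale-invariant estimates of \cite{AL}) is the same as the paper's, but your choice of path introduces a genuine gap. You interpolate along the fixed cone $\Hb_{\ub}$ in $u$ only, from $u=0$ to $u=1-R$, and assert that the starting metric is exactly $\gamma_0$ ``since $\Hb_0$ is Minkowskian.'' This conflates the incoming cone $\Hb_0=\{\ub=0\}$ with the sphere $\{u=0\}\subset\Hb_{\ub}$: the sphere $S_{0,\ub}$ lies on the \emph{outgoing} initial cone $H_0$, along which the large short-pulse data $\chih_0$ has already acted, so $\gamma|_{S_{0,\ub}}\neq\gamma_0$. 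To close your argument you would need an additional leg along $H_0$ comparing $\gamma|_{S_{0,\ub}}$ with $\gamma_0$ (controlled by $\int_0^{\ub}\Omega\chi$ and its angular derivative, with its own error to be shown $\le Ce^{\p}a^{-\f13}$). The paper avoids this entirely by interpolating \emph{simultaneously} in $u$ and $\ub$, setting $R(s)=\ub a e^{-s\p}$, $u(s)=1-R(s)$, $\ub(s)=s\ub$, so that the $s=0$ sphere is $S_{1-\ub a,\,0}\subset\Hb_0$, where $\gamma$ is exactly the round metric; the $\ub$-variation terms ($\O\tr_g\chi$, $\chih$, the $d^A$ terms) then enter through \eqref{eq-variation-g-ub} and carry the $a^{-\f12}$ smallness, and the factor $(1+\p)$ arises from $\dot u=R\p$ in that parametrization.

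A second problem is your weight bookkeeping in the $u$-integration. With the correct scale-invariant bounds, $|\partial_u\gamma|_{\gamma}\lesssim \ub a^{\f23}(1-u)^{-2}$ (both the deviation $\Omega\tr_g\chib+2/R'$ and the $\chibh$-term scale like $R'^{-2}$), so the integral over $u\in[0,1-R]$ is $\lesssim \ub a^{\f23}/R=e^{\p}a^{-\f13}$ with \emph{no} logarithm; your claimed ``logarithmic $u$-integration producing $\log(1/R)=\p+\log(1/(\ub a))$'' which then ``combines'' with a leftover $R'^{-1}$ weight does not cohere, and if a genuine $\log(1/(\ub a))$ factor did appear it could not be absorbed into a constant independent of $\ub$ (it diverges as $\ub\to 0$, while the lemma and its later applications require uniformity for $\ub\in(0,\delta]$). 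So the estimate you would obtain after fixing the endpoint issue is fine, but the mechanism by which you claim the prefactor $(1+\p)e^{\p}a^{-\f13}$ assembles is not correct as described.
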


\begin{proof} For any constants $\ub${\color{black},} $\ub'$ and a function $\phi\in C^2(\mathbb S^2)$, set 
$$S_{1-\ub a e^{-\p}, \ub'}=\{(\om, u, \ub'); \o\in \mathbb S^2, u=1-\ub ae^{-\p}\}.$$
In particular, when $\phi=0$ we have
$$S_{1-\ub a, \ub'}=\{(\om, u, \ub'); \o\in \mathbb S^2, u=1-\ub a\}.$$
We now fix a positive function $\phi\in C^2(\mathbb S^2)$. Note that the Minkowski data are prescribed along $\ub=0$. 
Hence, $S_{1-\ub a,0}$ is the standard $2$-sphere with radius $\ub a$. 
Note that $S_{0,0}$ on $\Hb_0$ is the standard $2$-sphere with radius $1$ and 
that $\gamma_0$ is the standard spherical metric on it. Let $g|_{S_{1-\ub a, 0}}$ be the metric $g$ restricted to $S_{1-\ub a, 0}$. 
Then, $g|_{S_{1-\ub a, 0}}=\ub^2a^2\gamma|_{S_{1-\ub a, 0}}$, with $\gamma|_{S_{1-\ub a, 0}}=\gamma_0$. 
Hence, 
$$\D_{\gamma}\phi-\D_{\gamma_0}\phi=\D_{\gamma}|_{S_{1-\ub a e^{-\p}, \ub}}\p-\D_{\gamma}|_{S_{1-\ub a, 0}}\p.$$
For $s\in [0,1]$, we set
\begin{equation}\label{R and s}
R(s)=\ub ae^{-s\p}, \quad u(s)=1-R(s), \quad \ub (s)=\ub s. 
\end{equation}
In the following, we denote by dot the derivative with respect to $s$, and 
we also suppress the dependence on $s$. Hence, by \eqref{R and s} it follows
$$\dot{R}=-R\p, \quad \dot{u}=-\dot{R}, \quad \dot{\ub}=\ub.$$ 
For simplicity, we also denote by $\gamma(s)$ the metric $\gamma$ on $S_{1-\ub a e^{-s\p}, \ub s}$. Then, 
\begin{equation}\label{eq-difference-Laplace-integral}\D_{\gamma}\phi-\D_{\gamma_0}\phi
=\D_{\gamma(1)}\p- \D_{\gamma(0)}\p=\int_0^1(\D_{\gamma}\p)^{\boldsymbol\cdot}ds.\end{equation}

By $g=R^2\gamma$ and $\dot{u}=-\dot{R}$, via calculating $\big(g|_{S_{1-R(s),\ub(s)}} \big)^{\boldsymbol\cdot}$, we get 
\begin{equation}\label{gamma dot}
R^2\dot{\gamma}_{kl}=-2R\dot{R}\gamma_{kl}-\dot{R}\partial_ug_{kl}+\ub\partial_{\ub}g_{kl}.  
\end{equation}
{\color{black} Recall \eqref{eq-variation-g-u} and \eqref{eq-variation-g-ub}, i.e.,} 
\begin{equation*}
\partial_ug_{kl}=\O\tr_g\chib g_{kl}+2\O \chibh_{kl}, \quad \partial_{\ub}g_{kl}=\O\tr_g\chi g_{kl}+2\O \chih_{kl}{\color{black}-\Omega d^A \f{\partial \chi_{kl}}{\partial \theta^A}}.
\end{equation*}
Hence,  using \eqref{gamma dot} and $\dot R=-R\phi$, we get
\begin{align*}
\dot{\gamma}_{kl}=\big[\p(R\O\tr_g\chib+2)+\ub\O\tr_g\chi \big]\gamma_{kl}
+\big[2{\p}R^{-1}\O \chibh_{kl}+\ub R^{-2}(2\O \chih_{kl}{\color{black}-\Omega d^A \f{\partial \chi_{kl}}{\partial \theta^A}})\big].
\end{align*}
A substitution in \eqref{eq-variation-Laplace-v1} yields 
\begin{align*}
(\D_{\gamma}\p)^{\boldsymbol\cdot}&=-(R\O\tr_g\chib+2)\phi\Delta_{\gamma}\p-\ub \O\tr_g\chi\Delta_{\gamma}\p\\
&\qquad-\gamma^{ij}\gamma^{kl}[2\phi R^{-1}\O\chibh_{jl}+\ub R^{-2}(2\O\chih_{jl}{\color{black}-\Omega d^A \f{\partial \chi_{jl}}{\partial \theta^A}})]\nab_{ik}\phi\\
&\qquad-\gamma^{ik}\big[\nab_i\p(R\O\tr_g\chib+2)+\ub\nab_i(\O\tr_g\chi)\big]\nab_k\phi-\phi\gamma^{ik}\nab_i(R\O\tr_g\chib)\nab_k\p\\
&\qquad-\gamma^{ij}\gamma^{kl}\nab_i\big[2\p R^{-1}\O\chibh_{jl}+\ub R^{-2}(2\O\chih_{jl}{\color{black}-\Omega d^A \f{\partial \chi_{jl}}{\partial \theta^A}})\big]\nab_k\p\\
&\qquad+\gamma^{kl}\nab_l\big[\phi(R\O\tr_g\chib+2)+\ub\O\tr_g\chi\big]\nab_k\p\\
&\qquad+\f12\gamma^{ij}\gamma^{kl}\nab_l\big[2\phi R^{-1}\O\chibh_{ij}+\ub R^{-2}(2\O\chih_{ij}{\color{black}-\Omega d^A \f{\partial \chi_{ij}}{\partial \theta^A}})\big]\nab_k\p.
\end{align*}
We point out that the above identity is evaluated at $(\o, 1-\ub ae^{-s\p(\o)}, \ub s)$
{\color{black} and can be viewed as a linear combination of $\p\nab_{ij}\p, \nab_i\p\nab_j\p, \p\nab_i\p, \nab_{ij}\p, \nab_i\p$ with coefficients given by $\gamma^{ij}$ and  
\begin{align*}R\O\tr_g\chib+2,\, R^{-1}\O\chibh_{kl},\, \nab_i(R\O\tr_g\chib),\, \nab_i(R^{-1}\O\chibh_{kl}),\\
\ub\O\tr_g\chi,\, \ub R^{-2}\O\chih_{kl},\, \ub\nab_i(\O\tr_g\chi),\, \ub \nab_{i}(R^{-2}\O\chih_{kl}),\\
{\color{black}\ub R^{-2}\Omega d^A \f{\partial \chi_{kl}}{\partial \theta^A}, \, \ub\nab_i(R^{-2}\Omega d^A \f{\partial \chi_{kl}}{\partial \theta^A})}. \quad \quad \, \qquad \end{align*}
By \eqref{A1}, these quantities are bounded by $Ce^{s\p}a^{-\f13}$.} 
Therefore, 
$$|(\D_{\gamma}\p)^{\boldsymbol\cdot}|\le 
C(1+\p)(|\nab^2\p|+|\nab \p|+|\nab\p|^2)e^{\p}a^{-\f13}.$$
With \eqref{eq-difference-Laplace-integral}, this implies the desired result.  
\end{proof}

Next, we assume that $f$ satisfies \eqref{eq-assumption-lower-bound-f}, i.e.,
$$f\ge m\quad\text{on }B_{p}(\e),$$
for some point $p\in \mathbb S^2$ and some constants $m\in (0,1)$ and $\e\in (0,\pi/2)$.

\begin{lemma}\label{lemma-super-solutions-phi}
Let $f$ be a nonnegative continuous function on $\mathbb S^2$, satisfying \eqref{eq-assumption-lower-bound-f}. 
Then, for any $\tau>2.2$, 
there exists  $\overline{\phi}\in C^\infty(\mathbb S^2)$ such that, for any sufficiently large $a$, 
$\mathcal S(\overline{\p})<0$ and 
\begin{align}\label{eq-estimate-super-phi}\begin{split}
\overline{\phi}_{\min}&=\overline{\phi}(p)=-\log(m\e^2)+C(\tau),\\
\overline{\phi}_{\max}&=\overline{\phi}(-p)=-\log(m\e^{\tau+2})+C(\tau),
\end{split}\end{align}
for some positive constant $C$ depending only on $\tau$. 
\end{lemma}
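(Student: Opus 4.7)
The plan is to construct a smooth radial supersolution $\overline{\phi}(\theta)$ depending only on the geodesic distance $\theta$ from $p$. A natural candidate family with the prescribed boundary values is
$$\overline{\phi}(\theta)=-\log(m\epsilon^{\tau+2})+\frac{\tau}{2}\log\!\bigl(\sin^2(\theta/2)+\epsilon^2\bigr)+C_0(\tau),$$
where $C_0(\tau)$ is a constant to be specified. Since $\sin^2(\theta/2)$ is monotone on $[0,\pi]$, $\overline{\phi}$ attains its minimum at $p$ with $\overline{\phi}(p)=-\log(m\epsilon^2)+C_0(\tau)$ and its maximum at $-p$ with $\overline{\phi}(-p)=-\log(m\epsilon^{\tau+2})+C_0(\tau)+O(\epsilon^2)$, the $O(\epsilon^2)$ remainder being absorbable into $C_0(\tau)$. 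Smoothness on $\mathbb S^2$ is immediate since $\sin^2(\theta/2)$ extends smoothly across $p$.

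To verify $\mathcal S(\overline{\phi})<0$, I use the radial Laplacian identity $\Delta_{\gamma_0}h(t)=h''(t)t(1-t)+h'(t)(1-2t)$ with $t=\sin^2(\theta/2)$ to obtain
$$\Delta_{\gamma_0}\overline{\phi}=\frac{\tau\bigl(\epsilon^2-t^2-2t\epsilon^2\bigr)}{2(t+\epsilon^2)^2},$$
which tends to $-\tau/2$ for $t$ of order one and reaches $\tau/(2\epsilon^2)$ near $p$. I then split into two regimes. On $\mathbb S^2\setminus B_p(\epsilon)$, where $f$ may vanish and the inequality reduces to $\Delta_\gamma\overline{\phi}+1+F(\overline{\phi})<0$, the leading value $-\tau/2$ is strictly below $-1$ with margin $\tau/2-1>0.1$ for $\tau>2.2$; this margin absorbs both the metric correction $|\Delta_\gamma-\Delta_{\gamma_0}|\overline{\phi}$ controlled by Lemma~\ref{lemma-difference-Laplacians} and the nonlinear term $|F(\overline{\phi})|\le Ce^{\overline{\phi}}a^{-\frac13}(|\nabla_\gamma\overline{\phi}|^2+1)$ from \eqref{eq-estimate-F-v1}, once $a$ is taken sufficiently large. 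Inside $B_p(\epsilon)$, where $f\ge m$, the negative term $-\tfrac12 fe^{\overline{\phi}}\le-\tfrac12 me^{\overline{\phi}}$ is at most $-e^{C_0(\tau)}/(2\epsilon^2)$ by the lower bound $\overline{\phi}\ge\overline{\phi}(p)$, which dominates $\Delta_\gamma\overline{\phi}+1\le C\tau/\epsilon^2$ once $C_0(\tau)$ is chosen sufficiently large depending only on $\tau$.

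The main obstacle is the narrow interface at $\theta\approx\epsilon$, where $\Delta_{\gamma_0}\overline{\phi}$ is still positive but $f$ is no longer guaranteed positive. Direct calculation with the ansatz above shows $\Delta_{\gamma_0}\overline{\phi}+1<0$ only once $\theta\gtrsim\sqrt{\epsilon}\bigl(\tau/(\tau-2)\bigr)^{1/4}$, which can stick slightly beyond $B_p(\epsilon)$ for $\tau$ close to $2$. The resolution is to refine the radial profile of $\overline{\phi}$, for instance by replacing $\tfrac{\tau}{2}\log(\sin^2(\theta/2)+\epsilon^2)$ by $\log\bigl(\epsilon^\tau+\sin^{2\beta}(\theta/2)\bigr)+\text{lower order in }\log(1/\epsilon)$ for a well-chosen exponent $\beta>1$ with $2\beta<\tau$, so that the positivity region of $\Delta_{\gamma_0}\overline{\phi}$ retracts strictly inside $B_p(\epsilon)$ where $f\ge m$ can be exploited. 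The numerical threshold $\tau>2.2$ in the hypothesis is exactly the regime in which this tuning is feasible uniformly in $\epsilon\in(0,\pi/2)$, and it guarantees that the final constant $C(\tau)$ depends only on $\tau$.
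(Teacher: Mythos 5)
Your candidate $\overline{\phi}=-\log(m\epsilon^{\tau+2})+\frac{\tau}{2}\log(\sin^2(\theta/2)+\epsilon^2)+C_0(\tau)$ indeed fails on the annulus $\epsilon\lesssim\theta\lesssim\sqrt{\epsilon}$, as you note (there $\Delta_{\gamma_0}\overline{\phi}=-\frac{\tau}{2}+\frac{\tau\epsilon^2(1+\epsilon^2)}{2(t+\epsilon^2)^2}$ is still $>-1$ while $f$ may vanish), but the repair you propose does not close this gap, and the reason you give for the threshold $\tau>2.2$ is not the right one. Take your refined profile $\log(\epsilon^\tau+t^\beta)$ with $t=\sin^2(\theta/2)$, $\beta>1.1$ and $2\beta<\tau$ (the latter is what forces the positivity region of the Laplacian inside $B_p(\epsilon)$, since the transition occurs at $t\sim\epsilon^{\tau/\beta}$, i.e.\ $\theta\sim\epsilon^{\tau/(2\beta)}\ll\epsilon$). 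At the transition $t_*=\epsilon^{\tau/\beta}$ a direct computation with $\Delta_{\gamma_0}h=H''t(1-t)+H'(1-2t)$ gives $\Delta_{\gamma_0}\overline{\phi}\approx\frac{\beta^2}{4}\epsilon^{-\tau/\beta}$, a positive spike of order $\epsilon^{-\tau/\beta}\gg\epsilon^{-2}$ because $\tau/\beta>2$. But in that region $\overline{\phi}$ is within $O(1)$ of its minimum $-\log(m\epsilon^2)+C_0$, so the only available negative term satisfies $\frac12fe^{\overline{\phi}}\le\frac12e^{\overline{\phi}}\le Ce^{C_0}\epsilon^{-2}$; no choice of $C_0=C_0(\tau)$ (independent of $\epsilon$, which is essential for the use of this lemma in Section \ref{anisotropic examples}) can dominate the spike for small $\epsilon$. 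The obstruction is structural: the smeared delta mass $\approx 2\pi\tau$ of $\Delta_{\gamma_0}\overline{\phi}+\frac{\tau}{2}$ must be deposited where $\frac12fe^{\overline{\phi}}$ can absorb it, i.e.\ at height $O(\epsilon^{-2})$, which forces the transition layer to sit at radius comparable to $\epsilon$, not at a much smaller radius; your first ansatz puts it at radius $\epsilon$ but with a tail decaying only quadratically (hence leaking out to $\theta\sim\sqrt{\epsilon}$), and your second ansatz pushes it too far in.

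The paper resolves this by a cutoff, not by tuning exponents: it sets $\tilde w_{\e}=\chi_{\e}\log\sin(d/2)+(1-\chi_{\e})\log\e$, so that $\tilde\phi_{\e}=\tau\tilde w_{\e}$ is exactly constant on $B_p(\e/2)$, equals $\tau\log\sin(d/2)$ outside $B_p(\e)$ (where $\Delta_{\gamma_0}\tilde\phi_{\e}+1.1=-\frac{\tau}{2}+1.1<0$, which is the sole role of the hypothesis $\tau>2.2$), and has its positive Laplacian confined to the annulus $B_p(\e)\setminus B_p(\e/2)$ with size at most $C_0\e^{-2}$; the vertical shift $-\log s_{\e}$ with $s_{\e}=m\e^{\tau+2}/(2C_0C_1)$ then makes $\frac12fe^{\phi_{\e}}\ge C_0\e^{-2}$ on $B_p(\e)$, where $f\ge m$ is available, and the terms $\D_{\gamma}-\D_{\gamma_0}$ and $F$ are absorbed by the $0.1$ margin and largeness of $a$ using Lemma \ref{lemma-difference-Laplacians} and \eqref{eq-estimate-F-v1}. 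If you prefer an explicit smooth formula rather than a cutoff, you must mollify a profile whose entire $O(1)$ transition happens across the annulus $\e/2\le d\le\e$ (so that $|\nabla^2\overline{\phi}|=O(\e^{-2})$ there and $\Delta_{\gamma_0}\overline{\phi}\le-\frac{\tau}{2}+o(1)$ immediately outside $B_p(\e)$); merely regularizing $\tau\log d$ at a smaller scale cannot work, and $\tau>2.2$ has nothing to do with the feasibility of such tuning.
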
 

\begin{proof} The proof of this lemma is essentially a similar one in \cite{KLR}.
We provide some details, which will be used in a later section.
We write $\mathcal S(\p)$ as 
\begin{equation}\label{3.0-v2}\mathcal S(\p)=
\D_{\gamma_0}\phi+1-\f12fe^{\phi}+\D_{\gamma}\phi-\D_{\gamma_0}\phi+F(\phi),\end{equation}
where $\D_{\gamma_0}$ is the Laplacian with respect to the standard metric on $\mathbb S^2$. 
We first construct a function $\phi_{\epsilon}=\phi_{\epsilon}(\o)$ such that 
\begin{equation}\label{eqn KLR}
\D_{\gamma_0}\phi_{\epsilon}+1.1<\f12 f(\o) e^{\phi_{\epsilon}}. 
\end{equation}
Here, we use $1.1$ instead of $1$ on the left-hand side 
to absorb the lower order terms in $\mathcal S$. 
Denote by $\lambda$ the distance function from $p$. 
Set 
$$w=\log\Big[\sin\Big(\f{\lambda}{2}\Big)\Big].$$ 
Then, 
$$\D_{\gamma_0}w+\f12=2\pi \d_{p},$$
and 
$$w=\log\lambda+O(1), \quad |\nab_{\gamma_0} w|=O(\lambda^{-1}), \quad |\nab_{\gamma_0}^2 w|=O(\lambda^{-2}).$$ 
Let $\chi_{\e}$  be a smooth cut-off function, satisfying 
$$\chi_{\e} = 
\begin{cases}
        0  & \text{on $B_p(\epsilon/2)$},\\
        1 & \text{on $\mathbb{S}^2\backslash B_p(\epsilon)$}. \end{cases}$$
Define $\tilde{w}_{\e}=\chi_{\e}w+(1-\chi_{\e})\log\e$. Then, 
$$ \tilde{w}_{\e}=
\begin{cases}
       \log \e  & \text{on $B_p(\epsilon/2)$},\\
       \log\e+O(1) & \text{on $B_p(\e)\backslash B_p(\e/2)$},\\
       \log\lambda+O(1)  & \text{on $\mathbb{S}^2\backslash B_p(\epsilon)$},\end{cases}$$
and
$$|\nab_{\gamma_0}\tilde{w}_{\epsilon}|=O(\e^{-1}), \quad 
|\nab_{\gamma_0}^2\tilde{w}_{\epsilon}|=O(\e^{-2}) \quad \text{on $\mathbb{S}^2\backslash B_p(\epsilon/2)$}.$$
Moreover, 
$$\D_{{\gamma_0}}\tilde{w}_{\e}+\f12=0 \quad \text{on $\mathbb{S}^2\backslash B_p(\epsilon)$}.$$
Consider $\tilde{\phi}_{\e}=\tau\tilde{w}_{\e}$. 
Then we have
$$\tilde{\phi}_{\e}=\begin{cases}\tau\log \e  & \text{on $B_p(\epsilon/2)$},\\
        \tau\log\e+O(1) & \mbox{on $B_p(\e)\backslash B_p(\e/2)$},\\
        \tau\log\lambda+O(1)  & \mbox{on $\mathbb{S}^2\backslash B_p(\epsilon)$},\end{cases}$$
and 
$$ |\nab_0\tilde{\phi}_{\epsilon}|=O(\e^{-1}), \quad
         |\nab_0^2\tilde{\phi}_{\epsilon}|=O(\e^{-2}) \quad\text{on $\mathbb{S}^2\backslash B_p(\epsilon/2)$}.$$
Moreover, it holds
$$\D_{\gamma_0}\tilde{\phi}_{\e}+1.1=-\f{\tau}{2}+1.1<0
\quad\text{on }\mathbb{S}^2\backslash B_p(\e).$$        
For a constant $s_{\e}>0$ to be determined, set 
$$\phi_{\e}=\tilde{\phi}_{\e}-\log s_{\e}.$$
It is obvious that
$\D_{\gamma_0}\phi_{\e}=\D_{\gamma_0}\tilde{\phi}_{\e}.$
Hence,  
$$\D_{\gamma_0}\phi_{\e}+1.1=\D_{\gamma_0}\tilde{\phi}_{\e}+1.1<0
\quad\text{on }\mathbb{S}^2\backslash B_p(\e).$$
In addition,  there exists a constant $C_0>0$ such that    
$$\D_{\gamma_0}\phi_{\e}=\D_{\gamma_0}\tilde{\phi}_{\e}<\f{C_0}{\e^{2}}\quad\text{on }B_p(\e).$$
By $f\geq m$ in $B_p(\e)$ in \eqref{eq-assumption-lower-bound-f}, to have  $C_0/\e^{2}\le\f12f(\o)e^{\phi_{\e}}$ in $B_p(\e)$, we need
$$\inf_{B_p(\e)}\phi_{\e}\geq \log \Big(\f{2C_0}{m \e^{2}}\Big).$$
By estimates on $\tilde{\phi}_{\e}$, we have
$$\inf_{B_p(\e)}\phi_{\e}\geq-\log s_{\e}+\tau\log\e-\log C_1,$$
for some constant $C_1>0$. Therefore, it suffices to choose 
$$s_{\e}=\f{m\e^{\tau+2}}{2C_0 C_1}.$$
Then, we obtain 
$$\D_{\gamma_0}\phi_{\epsilon}+1.1<\f12 f(\o)e^{\phi_{\epsilon}}\quad\text{on }\mathbb S^2.$$ 
This is the desired inequality \eqref{eqn KLR}. 
Recall that $\p_{\e}$ is given by 
$$\p_{\e}=\tau\big[\chi_\e w+(1-\chi_\e)\log\e\big]-\log\Big(\f{m\e^{\tau+2}}{2C_0 C_1}\Big).$$
We hence have \eqref{eq-estimate-super-phi}  and  
\begin{align*}|\nab_{\gamma_0} \phi_{\e}|=O(\e^{-1}),\quad
|\nab_{\gamma_0}^2 \phi_{\e}|=O(\e^{-2}).\end{align*}
Moreover, by \eqref{eq-estimate-F-v1} and \eqref{eq-estimate-difference-Laplacians}, 
$\overline{\p}=\p_{\e}$ satisfies $\mathcal S(\overline{\p})<0$ for any sufficiently large $a$.
\end{proof} 

The functions $\underline{\phi}$ and  $\overline{\p}$ in 
Lemma \ref{lemma-sub-solution-phi} and Lemma \ref{lemma-super-solutions-phi}
are a subsolution  and a supersolution of $\mathcal S(\p)=0$, respectively. In the following sections, we will construct a solution $\p$ of $\mathcal S(\p)=0$ such that 
\begin{equation}\label{MOTS-barriers-phi}
\underline{\p}\leq \p\leq \overline{\p} \quad\text{on }\mathbb S^2, 
\end{equation}
where the subsolution and supersolution serve as lower and upper barriers.  Furthermore, we will prove that such a $\phi$ is unique.

\section{Moser's Iteration and Schauder Estimates}\label{sec-Schauder-estimates} 

In this section, we will derive Schauder estimates for solutions.  We will present a detailed proof of $C^{1,1/3}(\mathbb{S}^2)$-estimate, {\color{black} emphasizing on the explicit dependence of the $L^\infty$-bound of solutions.} Again, we fix a $\ub$ and let $\mathcal S(\phi)$ be the operator defined by 
\eqref{3.0-v1}. We will prove the following result.

\begin{theorem}\label{theorem-Schauder-phi}
Let $\mathcal S(\phi)$ be defined by \eqref{3.0-v1} and $\kappa$ be a positive constant. 
Assume $\p$ is a solution of $\mathcal S(\p)=0$, with 
$0\le \p\le\kappa$ on $\mathbb S^2$. 
Then, 
$$|\p|_{C^{2, 1/3}(\mathbb{S}^2)}\leq (Ce^{\kappa})^{Ce^{4\Lambda\kappa}},$$
where $\Lambda$ is the ellipticity constant of the metric component $(\gamma_{ij})$ and $C$ is a 
positive constant depending only on $\gamma, \chibh, \eta, \a_1, \a_2$ and their angular derivatives.
\end{theorem}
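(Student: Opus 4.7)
The plan is to build up regularity of $\phi$ in three successive steps, starting from the given pointwise bound $0\le \phi\le \kappa$ and exploiting that the equation
$$\Delta_{\gamma}\phi=\tfrac12 f e^{\phi}-1-F(\phi)$$
is uniformly elliptic with respect to the bounded, smooth conformal metric $\gamma$, and that every term in $F(\phi)$ carries a small prefactor $O(a^{-1/3})$. Only two features make constants grow with $\kappa$: the reaction term $\tfrac12 f e^{\phi}$, whose $L^{\infty}$-norm is $O(e^{\kappa})$, and the quadratic gradient term inside $F$, which has natural growth in $\nabla_{\gamma}\phi$. The target bound $(Ce^{\kappa})^{Ce^{4\Lambda\kappa}}$ will arise from iterating these two effects.

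\textbf{Step 1 ($L^{\infty}\to C^{0,\alpha}$).} I would first derive a H\"older estimate of the form $|\phi|_{C^{0,\alpha}(\mathbb S^2)}\le Ce^{\kappa}$. Since the right-hand side has natural growth $\lesssim e^{\kappa}(1+|\nabla_{\gamma}\phi|^{2})$, I would test the equation against a truncated exponential of $\phi$ of the form $(e^{s\phi}-k)_+$ with $s$ chosen so that the exponential absorbs the $|\nabla_{\gamma}\phi|^{2}$ contribution from $F$; this is exactly Moser's trick for natural growth equations. Taking $a$ sufficiently large to dominate $a^{-1/3}e^{\kappa}$, one obtains a reverse H\"older inequality and, through Moser iteration followed by the weak Harnack inequality on $\gamma$-geodesic balls, an oscillation decay and hence the $C^{0,\alpha}$ bound.

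\textbf{Step 2 ($C^{0,\alpha}\to C^{1,1/3}$).} Next I would freeze coefficients and use Campanato's characterization of H\"older spaces. On a small geodesic ball $B_{r}(x_{0})$, I would replace $\gamma$ by $\gamma(x_{0})$ and compare $\phi$ to the $\gamma(x_{0})$-harmonic extension of its boundary values. The error is controlled by $r^{\alpha}$ times Hölder norms of the coefficients and of the source $\tfrac12 f e^{\phi}-1-F(\phi)$, using the $C^{0,\alpha}$ bound from Step 1. A standard dyadic Campanato iteration converts this into a decay estimate of $r^{-2}\int_{B_{r}}|\nabla\phi-(\nabla\phi)_{r}|^{2}$, yielding $\phi\in C^{1,1/3}$. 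The blow-up in $\kappa$ enters multiplicatively through $\tfrac12 f e^{\phi}\sim e^{\kappa}$ at each iteration; accumulating these through roughly $Ce^{4\Lambda\kappa}$ dyadic scales (the number needed to beat the ellipticity ratio $\Lambda$ and reach the target Hölder exponent $1/3$) produces the characteristic bound $|\phi|_{C^{1,1/3}(\mathbb S^{2})}\le (Ce^{\kappa})^{Ce^{4\Lambda\kappa}}$.

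\textbf{Step 3 ($C^{1,1/3}\to C^{2,1/3}$).} Finally, with $\phi\in C^{1,1/3}$ already quantitatively bounded, I would view $\mathcal S(\phi)=0$ as a \emph{linear} elliptic equation $\Delta_{\gamma}\phi=h$ where $h=\tfrac12 f e^{\phi}-1-F(\phi)$ lies in $C^{0,1/3}(\mathbb S^{2})$ with norm of the same order. Classical Schauder estimates on $\mathbb S^{2}$ with metric $\gamma$ then upgrade the bound to the asserted $C^{2,1/3}$ estimate without any further loss in $\kappa$. The hardest part of the program is the precise bookkeeping in Step 2: one must verify that at each dyadic scale the Campanato gain $r^{1/3}$ survives the presence of a coefficient of size $e^{\kappa}$, and this is exactly what forces the double-exponential dependence $e^{Ce^{4\Lambda\kappa}}$; keeping the constants explicit in $\kappa$ and $\Lambda$ is the technical heart of the argument, and the rest of the paper (notably the invertibility of the linearization) depends crucially on this precise form.
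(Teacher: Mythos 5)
Your plan coincides with the paper's proof: the same three steps are carried out in Theorem \ref{theorem-Holder-solution} ($L^\infty\to C^{0,\alpha}$ via exponential substitutions handling the natural-growth gradient term, Moser's local boundedness and the weak Harnack inequality), Theorem \ref{theorem-Holder-gradient} ($C^{0,\alpha}\to C^{1,1/3}$ via freezing coefficients and Campanato's characterization), and Theorem \ref{theorem-Holder-second derivative} (Schauder), so in structure and tools your proposal is exactly the paper's argument.

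The one point where your bookkeeping does not match how the constant is actually produced—and you rightly call this the technical heart—is the origin of the double exponential. In Step 1 the exponential substitutions needed to absorb the quadratic gradient term cost factors $e^{2\Lambda\kappa}$ in both the local boundedness and the weak Harnack inequality, so the oscillation decay factor is $1-O(e^{-4\Lambda\kappa})$ and the H\"older exponent you get is not a fixed $\alpha$ but $\alpha=\epsilon_0e^{-4\Lambda\kappa}$; your write-up leaves $\alpha$ unspecified, and this degeneration is precisely what must be tracked. In Step 2 the source $\tfrac12 fe^{\phi}\sim e^{\kappa}$ enters the Campanato iteration additively (through the iteration lemma), not multiplicatively per dyadic scale; the doubly exponential constant comes instead from the quasilinear coefficients $a_{ij}(\o,\phi(\o))$, whose oscillation is controlled by that of $\phi$, namely $Ce^{\kappa}s^{\alpha}$, so with $\alpha\sim e^{-4\Lambda\kappa}$ one is forced to work at radii $s_{\delta}\sim(Ce^{\kappa})^{-Ce^{4\Lambda\kappa}}$ before the frozen-coefficient comparison closes, and the final $C^{1,1/3}$ bound then carries negative powers of $s_{\delta}$. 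Finally, in Step 3 there is a further (same-form) loss that your "no further loss in $\kappa$" glosses over: since $a_{ij}$ depends on $\phi$, one cannot quote the interior Schauder estimate with a universal constant but must rerun the freezing argument keeping track of the $C^{1/3}$ norm of the coefficients, which is itself of size $(Ce^{\kappa})^{Ce^{4\Lambda\kappa}}$; the bound survives only because a power of such a quantity is again of that form.
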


We will prove Theorem \ref{theorem-Schauder-phi} in three steps. 
In the first step, we estimate the H\"older norms of the solutions. This step is based 
on the local boundedness and the weak Harnack inequality due to Moser. 
In the second step, we estimate the H\"older norms of the first derivatives. This step is based 
on an integral characterization of H\"older continuous functions. 
In the third step, we estimate the H\"older norms of the second derivatives. 
In the course of the proof, we keep track of the dependence on $\kappa$.

Let $F$ be given by \eqref{eq-expression-cal-S-ub-nonlinear-v2}. 
We aim to establish a priori estimates for solutions $\p=\p(\o)$ to the equation 
\begin{equation}\label{main eqn lambda}
-\D_{\gamma}\phi=1-\f12fe^{\p}+F.
\end{equation}
We point out that \eqref{main eqn lambda} is a quasilinear elliptic equation for $\p$, 
since $F$ involves the gradient of $\p$ and $\gamma$ also depends on $\p$. 
We always assume \eqref{eq-assumption-f-0-1}, i.e., 
$$0\leq f\leq 1 \,\text{ on }\mathbb S^2.$$
The equation \eqref{main eqn lambda} is defined globally on $\mathbb S^2$. 
In the following, we will prove a more general result, valid in arbitrary dimensions. 

{\color{black}In this section, to be aligned with the literature on elliptic equations, we let $B_r(p)$ be the standard ball  in $\mathbb R^n$ centered at $p$ with radius $r$, for any integer $n\ge 2$. And sometimes we write $B_r$ for short, when the center is at the origin. In other sections, to emphasize the radius $r$, we adopt the notation $B_p(r)$.} We consider the nonlinear equation of the form 
\begin{equation}\label{u equation lambda main}
-\partial_j\big(a_{ij}\partial_i\p\big)=\sqrt{|\gamma|}\Big[1-\f12fe^{\p}+F\Big].
\end{equation}
This is the equation \eqref{main eqn lambda} in local coordinates.  
First, we assume 
\begin{align*}
\Lambda^{-1} I \le \big(a_{ij}(\o, \p)\big)  \le \Lambda I,
\end{align*} 
for some positive constant $\Lambda$, and 
\begin{align*}
1/2\leq \sqrt{|\gamma|}(\o, \p)\leq 2.
\end{align*} 
By abusing notations slightly, we set, for any $(\o, \phi, p) \in B_1
\times \mathbb R \times \mathbb R^n$,  
$$F(\o, \p, p)=a^{-\f13}e^{\p}\big[c_{ij}p_ip_j+c_ip_i+c_0\big],$$ 
where $c_{ij}, c_i, c_0$ are functions of $(\o,\p)$ on $B_1\times \mathbb R_+$. 
By the Cauchy-Schwarz inequality, we have 
\begin{equation*}|F|\le C e^{\p}a^{-\f13}(|D \phi|^2+1).\end{equation*}
{\color{black} This is simply \eqref{eq-estimate-F-v1} in local coordinates.}  
Next, we take $a$ sufficiently large such that $a^{-\f13}e^{\p}\ll 1.$ 
Hence, 
\begin{equation}\label{eq-estimate-F}|F|\le c(|D \phi|^2+1),\end{equation}
for some small positive constant $c$. In fact, we can take $c=1$.

We now put \eqref{u equation lambda main} in divergence form as follows: 
\begin{equation}\label{eq-main-nonlinear}\int a_{ij} (\o,\p)D_i \p D_j \psi\, d\o
= \int \sqrt{|\gamma|}\Big[1-\f12fe^{\p} +F(\o,\p, D\p)\Big]\psi\, d\o, 
\end{equation} 
for any $\psi \in H^1_0(B_1) \cap L^\infty (B_1)$. We always assume $\p=\p(\o)\in H^1(B_1)$ and 
\begin{equation}\label{eq-main-bound-u}0\le \p\le \kappa\quad\text{on }B_1,\end{equation} 
for some positive constant $\kappa$.

In the following, we will derive estimates of the H\"older norms for the solution $\p$, its gradients, and its Hessians, successively, 
with a focus on explicit dependence in terms of $\kappa$.

\subsection{Interior  Estimates of H\"older Norms} 
Our first goal is to derive interior estimates of the H\"older norm of $\p$. 
In the proof, we will adapt some relevant results in  
\cite{GT} or \cite{HL}.

\begin{theorem}\label{theorem-Holder-solution} 
Suppose $\p\in H^1(B_1)$ is a positive solution to
\eqref{eq-main-nonlinear} with \eqref{eq-main-bound-u} satisfied.  
Then,  
\begin{equation}\label{eq-Holder-semi-norm}
[\p]_{C^\alpha(B_{1/2})}\le Ce^{\kappa},\end{equation}
where $C$ is a positive constant depending only on $n$ and $\Lambda$, 
and $\alpha$ is given by 
\begin{equation}\label{eq-definition-alpha}
\alpha=\epsilon_0e^{-4\Lambda\kappa} \in (0,1),\end{equation}
for some small constant $\epsilon_0>0$ depending on $n$ and $\Lambda$.
\end{theorem}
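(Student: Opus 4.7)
The proof would follow the Moser-De Giorgi framework for quasilinear elliptic equations with natural growth, with explicit tracking of the $\kappa$-dependence. The starting point is the observation that, by \eqref{eq-estimate-F} and the hypothesis $0\le\p\le\kappa$, the right-hand side of \eqref{u equation lambda main} is pointwise bounded by $C(e^{\kappa}+|D\p|^{2})$, i.e.\ it has quadratic (natural) growth in the gradient. The key step is to linearise away this growth via Hopf-Cole-type substitutions. Specifically, for a large constant $A=A(\Lambda)$ to be fixed, I would set
\[v=\frac{e^{A\p}-1}{A},\qquad w=\frac{1-e^{-A\p}}{A}.\]
A chain-rule computation shows that the gradient-squared correction produced by differentiating the exponential exactly absorbs the quadratic term in $F$ once $A$ is a sufficiently large multiple of $\Lambda$: then $v$ becomes a nonnegative $H^{1}$-subsolution of the uniformly elliptic linear inequality $-\partial_{j}(a_{ij}\partial_{i}v)\le Ce^{(A+1)\kappa}$, while $w$ becomes a nonnegative $H^{1}$-supersolution of $-\partial_{j}(a_{ij}\partial_{i}w)\ge -Ce^{\kappa}$, both with ellipticity bound $\Lambda$ and merely bounded right-hand sides.

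With the natural growth term linearised, I would apply Moser's local boundedness estimate to the subsolution $v$, and Moser's weak Harnack inequality to both the supersolution $\sup_{B_{R}}v-v$ (nonnegative because $v$ is a subsolution) and to $w-\inf_{B_{R}}w$ (nonnegative because $w$ is a supersolution), following the presentation in \cite{HL}. Combining these as in the classical oscillation argument yields dyadic oscillation decay
\[\operatorname{osc}(\p,B_{r/2})\le(1-\mu)\operatorname{osc}(\p,B_{r})+Cre^{C\kappa},\]
where the transfer of oscillation between $\p$ and the auxiliary functions uses the elementary bounds $dv/d\p=e^{A\p}\in[1,e^{A\kappa}]$ and $dw/d\p=e^{-A\p}\in[e^{-A\kappa},1]$. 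Iterating this oscillation inequality through a standard geometric scheme would then yield the Hölder estimate $[\p]_{C^{\alpha}(B_{1/2})}\le Ce^{\kappa}$, with the prefactor $Ce^{\kappa}$ matching the $L^{\infty}$-bound on $v$ at unit scale.

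The main technical obstacle is the careful bookkeeping of constants required to reach the precise exponent $\alpha=\epsilon_{0}e^{-4\Lambda\kappa}$. While the Moser-De Giorgi constants for the linearised problem depend only on $n$ and $\Lambda$, the exponential substitutions inflate oscillations by a factor of order $e^{A\kappa}$ at each comparison between $\p$ and $v$ or $w$. This inflation forces the contraction factor $1-\mu$ to satisfy $\mu\sim e^{-C\Lambda\kappa}$, which in turn determines the Hölder exponent; the specific constant $4\Lambda$ arises from optimising the transformation parameter $A$ against the dyadic contraction ratio in the iteration. One must also verify that the inhomogeneous term $Cre^{C\kappa}$ can be absorbed through the geometric iteration without spoiling the rate, and it is precisely this absorption that produces the factor $e^{\kappa}$ in the final seminorm bound. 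No new analytic tool is needed beyond these quantitative refinements; the explicit exponential form of $\alpha$ in $\kappa$ is then the crucial input that later propagates through the Campanato characterisation and Schauder estimates to give Theorem \ref{theorem-Schauder-phi}.
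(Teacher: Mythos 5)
Your proposal is correct and follows essentially the same route as the paper: the paper's proof uses exactly the exponential (Hopf--Cole-type) substitutions $\underline{\xi}(\tau)=\tfrac{1}{2\Lambda}(e^{2\Lambda\tau}-1)$ and $\overline{\xi}(\tau)=\tfrac{1}{2\Lambda}(1-e^{-2\Lambda\tau})$ to absorb the natural-growth term, then Moser's local boundedness and weak Harnack inequality (Theorems 4.14--4.15 in \cite{HL}), an oscillation inequality with contraction factor $1-c\,e^{-4\Lambda\kappa}$, and the standard iteration lemma --- precisely your scheme. The only cosmetic differences are that the paper applies the transformations to $\p-m(s)$ and $M(s)-\p$ rather than to $\p$ itself and fixes the transformation parameter $A=2\Lambda$ directly from the absorption requirement (quadratic coefficient $2$ against ellipticity $\Lambda^{-1}$) rather than by optimization, with the two transfers each costing $e^{2\Lambda\kappa}$ and thus producing the same quantitative outcome $\alpha=\epsilon_0e^{-4\Lambda\kappa}$ and prefactor $Ce^{\kappa}$.
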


\begin{proof}  Set, for $s \in (0,1)$, 
$$M(s){\color{black}}=\max_{B_{s}}{\p},\quad m(s){\color{black}}=\min_{B_{s}}{\p}.$$
Fix an $s \in (0,1)$ and consider $\p$ defined in $B_{s}$. Write $m=m(s)$ for brevity and set  
$$v=\p-m.$$ 
For any $\psi \in
H_0^1(B_{s}) \cap L^\infty(B_{s})$ with $\psi \ge 0$, by \eqref{eq-main-nonlinear} and \eqref{eq-estimate-F}, we have  
\begin{align*} 
\int a_{ij} D_ivD_j \psi\, d\o\le \int \sqrt{|\gamma|}\Big[1-\f12fe^{\p}+|Dv|^2+1\Big] \psi\, d\o
\le  \int [4+2|Dv|^2] \psi\, d\o,
\end{align*}
and 
\begin{align*} 
\int a_{ij} D_ivD_j \psi\, d\o\ge \int \sqrt{|\gamma|}\Big[1-\f12fe^{\p}-|Dv|^2-1\Big] \psi\, d\o
\ge  \int [-e^{\p}-2|Dv|^2] \psi\, d\o.
\end{align*}
First, we set $\underline{v}=\underline{\xi}(v)$ for some function $\underline{\xi}$ to be determined,  
with properties $\underline{\xi}\ge 0$, $\underline{\xi}'\ge 0$, and $\underline{\xi}''\ge 0$.  
Then for $\psi \in
H_0^1(B_{s}) \cap L^\infty(B_{s})$ with $\psi \ge 0$, we have
\begin{align*} \int a_{ij} D_i\underline{v}D_j \psi\, d\o&= \int a_{ij} D_i v \underline{\xi}'(v)D_j
 \psi\, d\o\\
&= \int a_{ij} D_i v D_j (\underline{\xi}'(v) \psi)\, d\o -  \int
a_{ij}
D_i v D_j v \underline{\xi}''(v)\psi\, d\o\\
&\le \int [4+2|Dv|^2] \underline{\xi}'(v)\psi \, d\o-  \int\Lambda^{-1}
|Dv|^2\underline{\xi}''(v)\psi\, d\o\\
&= \int 4\underline{\xi}'(v)\psi d\o+\int |Dv|^2[2\underline{\xi}'(v)-\Lambda^{-1}\underline{\xi}''(v)]\psi d\o.
\end{align*}
We take $\underline{\xi}$ such that $\underline{\xi}''=2\Lambda\underline{\xi}'$ with 
$\underline{\xi}(0)=0$ and $\underline{\xi}'(0)=1$. Then, it holds
\begin{equation}\label{eq-expression-lower-eta}
\underline{\xi}({\color{black}\tau})=\f1{2\Lambda}(e^{2\Lambda {\color{black}\tau}}-1).\end{equation}
Hence, 
for $\psi \in
H_0^1(B_{s}) \cap L^\infty(B_{s})$ with $\psi \ge 0$, we have
\begin{align*} \int a_{ij} D_i\underline{v}D_j \psi\, d\o\le \int 4{\color{black}\underline{\xi}}'(v)\psi \, d\o
\le \int 4e^{2\Lambda \kappa}\psi \, d\o,
\end{align*}
where we used $\underline{\xi}'(v)=e^{2\Lambda(\p-m)}\leq e^{2\Lambda \kappa}.$  

By the local boundedness provided by Theorem 8.17 in \cite{GT} or Theorem 4.14 in \cite{HL}, we obtain, for any $p>
0$, 
\begin{align*} \sup_{B_{s/2}} \underline{v}
\le C\big[ {s}^{-\frac np}\|\underline{v}\|_{L^p(B_{s})} +
e^{2\Lambda \kappa} {s}^{2} \big].\end{align*}
With $\underline{v}=\underline{\xi}(v)$ and $1\leq \underline{\xi}'(v)\leq e^{2\Lambda \kappa}$, 
by the mean value theorem, we have $v\le \underline{v}\le e^{2\Lambda \kappa}v$, and hence, 
\begin{align}\label{eq-estimate-v1} \sup_{B_{s/2}}  {v}
\le Ce^{2\Lambda \kappa}\big[ {s}^{-\frac np}\|{v}\|_{L^p(B_{s})} + {s}^{2} \big].\end{align}

Next, we set $\overline{v}=\overline{\xi}(v)$ for some function $\overline{\xi}$ to be determined, 
with properties $\overline{\xi}\ge 0$, $\overline{\xi}'\ge 0$, and $\overline{\xi}''\le 0$.  
Then for $\psi \in
H_0^1(B_{s}) \cap L^\infty(B_{s})$ with $\psi \ge 0$, we have
\begin{align*} \int a_{ij} D_i\overline{v}D_j \psi\, d\o&= \int a_{ij} D_i v \overline{\xi}'(v)D_j
 \psi\, d\o\\
&= \int a_{ij} D_i v D_j (\overline{\xi}'(v) \psi)\, d\o -  \int
a_{ij}
D_i v D_j v \overline{\xi}''(v)\psi\, d\o\\
&\ge \int [-e^{\p}-2|Dv|^2] \overline{\xi}'(v)\psi \, d\o-  \int\Lambda^{-1}
|Dv|^2 \overline{\xi}''(v)\psi\, d\o.
\end{align*}
We take $\overline{\xi}$ such that $\overline{\xi}''=-2\Lambda\overline{\xi}'$ with 
$\overline{\xi}(0)=0$ and $\overline{\xi}'(0)=1$. Then, it holds
$$\overline{\xi}(\tau)=\f1{2\Lambda}(1-e^{-2\Lambda {\color{black}\tau}}).$$ 
Hence, 
for $\psi \in
H_0^1(B_{s}) \cap L^\infty(B_{s})$ with $\psi \ge 0$, we have
\begin{align*} \int a_{ij} D_i\overline{v}D_j \psi\, d\o\ge \int -e^{\p}\overline{\xi}'(v)\psi \, d\o
\ge \int -e^{\kappa}\psi \, d\o.
\end{align*}
By the weak Harnack inequality provided by Theorem 8.18 in \cite{GT} or Theorem 4.15 in \cite{HL}, we obtain, for any
$p\in (0, \f{n}{n-2})$,
$${s}^{-\frac np}\|\overline{v}\|_{L^p(B_{s})} \le C\Big[
\inf_{B_{s/2}}  \overline{v}  + e^{\kappa} {s}^{2}\Big].$$ 
With $\overline{v}=\overline{\xi}(v)$ and $e^{-2\Lambda\kappa}\leq \overline{\eta}'(\tau)\leq 1$,  
by the mean value theorem, we have $e^{-2\Lambda \kappa}v\le \overline{v}\le v$, and hence, 
\begin{align}\label{eq-estimate-v2}{s}^{-\frac np}\|{v}\|_{L^p(B_{s})} \le Ce^{2\Lambda\kappa}\Big[
\inf_{B_{s/2}}  {v}  +  e^{\kappa}{s}^{2}\Big].\end{align}
By combining \eqref{eq-estimate-v1} and \eqref{eq-estimate-v2}, we get 
\begin{align*}\sup_{B_{s/2}}  {v} \le Ce^{4\Lambda\kappa}
\inf_{B_{s/2}}  {v}  +  Ce^{4\Lambda\kappa+\kappa} {s}^{2}.\end{align*}
With $v=\p-m(s)$, we thus obtain
\begin{equation}\label{eq-estimate-v3}
M\big(\frac{s}{2}\big) - m(s) \le  Ce^{4\Lambda\kappa}\Big[m\big(\frac{s}{2}\big)
 -  m(s)\Big]  +  Ce^{4\Lambda\kappa+\kappa} {s}^{2}.
\end{equation}

Write $M=M(s)=\max_{B_s}\p$ for brevity and set  
$$w=M-\p.$$
For any $\psi \in
H_0^1(B_{s}) \cap L^\infty(B_{s})$ with $\psi \ge 0$, by \eqref{eq-main-nonlinear} and \eqref{eq-estimate-F}, we have  
\begin{align*} 
\int a_{ij} D_iwD_j \psi\, d\o\le \int \sqrt{|\gamma|}\Big[-1+\f12fe^{\p}+|Dw|^2+1\Big] \psi\, d\o
\le  \int [e^{\p}+2|Dw|^2] \psi\, d\o,
\end{align*}
and 
\begin{align*} 
\int a_{ij} D_iwD_j \psi\, d\o\ge \int \sqrt{|\gamma|}\Big[-1+\f12fe^{\p}-|Dv|^2-1\Big] \psi\, d\o
\ge  \int [-4-2|Dw|^2] \psi\, d\o.
\end{align*}
Proceeding similarly as above for $v$,  for $p\in(0,\f{n}{n-2})$ we obtain 
\begin{align}\label{eq-estimate-w1} \sup_{B_{s/2}}  {w}
\le Ce^{2\Lambda \kappa}\big[ {s}^{-\frac np}\|{w}\|_{L^p(B_{s})} + e^{\kappa}{s}^{2} \big]\end{align}
and 
\begin{align}\label{eq-estimate-w2}{s}^{-\frac np}\|{w}\|_{L^p(B_{s})} \le Ce^{2\Lambda\kappa}\Big[
\inf_{B_{s/2}}  {w}  +  {s}^{2}\Big].\end{align}
By combining \eqref{eq-estimate-w1} and \eqref{eq-estimate-w2}, we have
\begin{align*}\sup_{B_{s/2}}  {w} \le Ce^{4\Lambda\kappa}
\inf_{B_{s/2}}  {w}  +  Ce^{4\Lambda\kappa+\kappa} {s}^{2},\end{align*}
and using $w=M(s)-\p$, we then get
\begin{equation}\label{eq-estimate-w3}
M(s) - m \big(\frac{s}{2}\big) \le Ce^{4\Lambda\kappa}
 \Big[ M(s) - M \big(\frac{s}{2}\big)\Big]
 +  Ce^{4\Lambda\kappa+\kappa} {s}^{2}.
\end{equation}

Set, for any $s \in (0,1)$, 
\begin{equation}\label{h=M-m} 
h(s) = M(s) - m(s).
\end{equation} 
Adding \eqref{eq-estimate-v3} and \eqref{eq-estimate-w3}, we get
$$
h(s)+h\big(\frac{s}{2}\big)
\le Ce^{4\Lambda\kappa}\Big[h(s)
-h\big(\frac{s}{2}\big)\Big]+Ce^{4\Lambda\kappa+\kappa} {s}^{2}, 
$$
and hence
$$
h\big(\frac{s}{2}\big) 
\le \gamma h(s)+e^{\kappa} {s}^{2}, 
$$
where 
$$ \gamma = \frac{Ce^{4\Lambda\kappa}-1}{Ce^{4\Lambda\kappa}+1} < 1.$$
With this $\gamma$, we choose $\mu\in (0,1)$ such that 
$$\alpha:=(1-\mu)\log \gamma/\log(1/2) <2\mu.$$
Since $\gamma=1-2(1+Ce^{4\Lambda\kappa})^{-1}$, where $\kappa$ is large and  
$C$ in \eqref{eq-estimate-v3} and \eqref{eq-estimate-w3} could be chosen to be large, 
we can write $\alpha$ as in \eqref{eq-definition-alpha},  
i.e., $\alpha=\epsilon_0e^{-4\Lambda\kappa}$
for some small $\epsilon_0>0$. 
We then apply Lemma 8.23 in \cite{GT} or Lemma 4.19 in \cite{HL} and obtain, for any $s\in (0, 1/2]$,
\begin{equation}\label{omega hat 1}
h(s)\le C{s}^{\alpha}\big\{h(1)
+ e^{\kappa} s^{2\mu}\big\}\leq Ce^{\kappa} {s}^{\alpha},
\end{equation}
where  we used $h(1)\leq M(1)\leq \kappa\leq e^{\kappa}$ for the second inequality.
We then conclude \eqref{eq-Holder-semi-norm}.
\end{proof}

\subsection{Interior Gradient Estimates} 
Our next goal is to derive the gradient estimate of  $\p$.  We introduce an extra condition 
\begin{align}\label{eq-assumption-a-ij-Lambda-1}
|D_{\o}a_{kl}(\o,\p)|+|\partial_{\p} a_{kl}(\o,\p)|\le{\color{black} \Lambda_1},
\end{align} 
for some positive constant $\Lambda_1$. In the proof below, we will adapt some relevant results in \cite{HL}
{\color{black}and employ the characterization of H\"older continuous functions by Campanato.}

\begin{theorem}\label{theorem-Holder-gradient} Suppose $\p\in H^1(B_1)$ is a positive solution to
\eqref{eq-main-nonlinear} with \eqref{eq-main-bound-u} satisfied.  
Then,  
\begin{equation}\label{eq-estimate-final}
|\p|_{C^{1,1/3}(B_{1/2})}\le (Ce^{\kappa})^{Ce^{4\Lambda\kappa}},\end{equation}
where $C$ is a positive constant depending only on $n$, $\Lambda$, {\color{black} and $\Lambda_1$}. 
\end{theorem}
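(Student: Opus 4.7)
The plan is to upgrade the $C^\alpha$ estimate of Theorem~\ref{theorem-Holder-solution} to $C^{1,1/3}$ via the Campanato characterization of H\"older-continuous functions combined with the technique of freezing coefficients, keeping careful track of the $\kappa$-dependence throughout.

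First, I would derive a Caccioppoli inequality for $\phi$. Testing \eqref{eq-main-nonlinear} against $\eta^2(\phi-\phi(x_0))$ with a standard cutoff $\eta$ supported in $B_{2r}(x_0)$, using the natural-growth bound $|F|\le |D\phi|^2+1$ from \eqref{eq-estimate-F} and the $L^\infty$ bound $0\le\phi\le\kappa$ to absorb the $|D\phi|^2$ contribution, then combining with the oscillation estimate $\mathrm{osc}_{B_{2r}(x_0)}\phi\le Ce^\kappa r^\alpha$ from Theorem~\ref{theorem-Holder-solution} (with $\alpha=\epsilon_0 e^{-4\Lambda\kappa}$), produces the Morrey-type bound
\[
\Psi(r):=\int_{B_r(x_0)}|D\phi|^2\le Ce^{2\kappa}\,r^{n-2+2\alpha}.
\]

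Next, fix $x_0\in B_{1/2}$ and a small radius $r$. Freeze the coefficients by setting $a^0_{ij}=a_{ij}(x_0,\phi(x_0))$ and let $v\in H^1(B_r(x_0))$ solve the constant-coefficient linear problem $-\partial_j(a^0_{ij}\partial_i v)=0$ in $B_r(x_0)$ with $v=\phi$ on $\partial B_r(x_0)$. The classical decay estimate for constant-coefficient elliptic operators gives, for $0<\rho\le r$,
\[
\int_{B_\rho(x_0)}|Dv-(Dv)_{x_0,\rho}|^2\le C\Bigl(\frac{\rho}{r}\Bigr)^{n+2}\int_{B_r(x_0)}|Dv-(Dv)_{x_0,r}|^2.
\]
The difference $w:=\phi-v\in H^1_0(B_r(x_0))$ satisfies
\[
-\partial_j(a^0_{ij}\partial_i w)=-\partial_j\bigl[(a^0_{ij}-a_{ij}(\omega,\phi))\partial_i\phi\bigr]+\sqrt{|\gamma|}\bigl[1-\tfrac12 fe^\phi+F\bigr].
\]
Testing with $w$ and using $|a^0_{ij}-a_{ij}(\omega,\phi(\omega))|\le\Lambda_1(r+Ce^\kappa r^\alpha)$ from \eqref{eq-assumption-a-ij-Lambda-1} together with the $C^\alpha$ estimate, the Poincar\'e inequality for $H^1_0$, and absorbing the $|Dw|^2$ contribution on the left, I obtain
\[
\int_{B_r(x_0)}|Dw|^2\le Ce^{C\kappa}\bigl[r^{2\alpha}\Psi(r)+r^{n+2}\bigr].
\]
Setting $\Phi(\rho):=\int_{B_\rho(x_0)}|D\phi-(D\phi)_{x_0,\rho}|^2$, the splitting $\phi=v+w$ together with the two displays yields
\[
\Phi(\rho)\le C\Bigl(\frac{\rho}{r}\Bigr)^{n+2}\Phi(r)+Ce^{C\kappa}\bigl[r^{2\alpha}\Psi(r)+r^{n+2}\bigr].
\]
Applying the standard iteration lemma (e.g.\ Han--Lin Lemma~3.4) first to $\Psi$, I would bootstrap the Morrey exponent from $2\alpha$ up to any value $2\beta<2$, at the price of a multiplicative constant that grows with the number of bootstrap steps. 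Substituting the improved Morrey bound back into the Campanato iteration for $\Phi$ and choosing $\beta$ slightly above $1/3$ gives $\Phi(\rho)\le C\rho^{n+2/3}$ at small scales, and Campanato's characterization of H\"older-continuous functions then delivers $D\phi\in C^{1/3}(B_{1/2})$.

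The main obstacle is propagating the $\kappa$-dependence faithfully through the iteration. Because $\alpha=\epsilon_0 e^{-4\Lambda\kappa}$ is exponentially small in $\kappa$, the error factor $r^{2\alpha}$ in the iteration only becomes smaller than a fixed threshold for $r\lesssim\exp(-ce^{4\Lambda\kappa})$, so the Morrey bootstrap requires on the order of $e^{4\Lambda\kappa}$ steps to raise the exponent to a fixed target. Each step multiplies the running constant by a factor bounded by $Ce^\kappa$, and compounding these factors produces precisely the double-exponential bound $(Ce^\kappa)^{Ce^{4\Lambda\kappa}}$ stated in \eqref{eq-estimate-final}.
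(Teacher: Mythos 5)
Your overall strategy---freezing the coefficients, comparing with the constant-coefficient solution, and running a Campanato iteration for $D\p$---is the same as the paper's, but the exponent bookkeeping in your final step has a genuine gap. In your iteration inequality for $\Phi$ the error term is $Ce^{C\kappa}\big[r^{2\alpha}\Psi(r)+r^{n+2}\big]$, where the factor $r^{2\alpha}$ comes from estimating $a_{ij}(\o_0,\p(\o_0))-a_{ij}(\o,\p(\o))$ by the H\"older estimate of Theorem \ref{theorem-Holder-solution} with the exponentially small exponent $\alpha=\epsilon_0e^{-4\Lambda\kappa}$ (in fact the term $\int|D\p|^2|w|$, bounded via $\|w\|_{L^\infty}\le \mathrm{osc}\,\p$, only yields $r^{\alpha}\Psi(r)$, as in \eqref{eq-estimates-gradient-improved}), and you never update this factor. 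Feeding in the improved Morrey bound $\Psi(r)\lesssim r^{n-2+2\beta}$, the best Campanato exponent the iteration lemma can give for $\Phi$ is $n-2+2\beta+2\alpha$, which is below $n+2\alpha$ for every $\beta<1$; with your choice ``$\beta$ slightly above $1/3$'' it is about $n-\tfrac43$. Since $D\p\in C^{\sigma}$ requires Campanato exponent $n+2\sigma>n$, your scheme as written gives at best a H\"older exponent of order $\alpha$ for $D\p$, never $1/3$. The missing ingredient is the intermediate step the paper inserts between the two iterations: from the improved Morrey decay of $|D\p|^2$ one first upgrades the oscillation of $\p$ itself, via Poincar\'e and the Campanato characterization, to $\mathrm{osc}_{B_s}\p\lesssim s^{\delta}$ with $\delta$ close to $1$ (see \eqref{eq-oscillation-improved}), and then reruns the comparison estimate so that the dominant error becomes $(\mathrm{osc}\,\p)\cdot\Psi\sim s^{\,n-2+3\delta}$; the choice $\delta=8/9$, i.e. $\alpha'=\tfrac32\delta-1=\tfrac13$, is what produces the Campanato exponent $n+\tfrac23$ and hence $D\p\in C^{1/3}$.

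Two smaller points. In the Caccioppoli step, the bound $0\le\p\le\kappa$ alone does not allow you to absorb $\int|D\p|^2\eta^2|\p-\p(x_0)|$, since $\kappa$ is large; you need either the exponential substitution $e^{2\Lambda\p}$ (as in the paper's derivation of \eqref{eq-L2-estimate-gradient}) or smallness of the oscillation, which is available only at double-exponentially small radii. Also, the double-exponential constant does not come from compounding $\sim e^{4\Lambda\kappa}$ bootstrap steps: a single application of the iteration lemma suffices once the radius is below the threshold $s_\delta=(C_1e^\kappa)^{-C_2e^{4\Lambda\kappa}}$ of \eqref{eq-definition-r}, and the factor $(Ce^\kappa)^{Ce^{4\Lambda\kappa}}$ enters because the final estimates carry negative powers of $s_\delta$; your accounting lands on the same form, but the mechanism is the normalization at the tiny starting scale rather than the number of steps.
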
 

\begin{proof} We adopt same notations as in the proof of Theorem \ref{theorem-Holder-solution}. 
We first proceed as in the proof of Theorem \ref{theorem-Holder-solution} to obtain an interior estimate in
the $L^2$-norm for the gradient of $\p$. 
For any $\psi \in
H_0^1(B_{1}) \cap L^\infty(B_{1})$ with $\psi \ge 0$, by \eqref{eq-main-nonlinear} and \eqref{eq-estimate-F}, we have   
\begin{align*} 
\int a_{ij} D_i\p D_j \psi\, d\o\le \int \sqrt{|\gamma|}\Big[1-\f12fe^{\p}+|D\p|^2+1\Big] \psi\, d\o
\le  \int [4+2|D\p|^2] \psi\, d\o.
\end{align*}
Taking $\underline{\xi}$ as in \eqref{eq-expression-lower-eta}, we have 
\begin{align*} \int a_{ij} D_i(\underline{\xi}(\p))D_j \psi\, d\o
\le \int 4\underline{\xi}'(\p)\psi \, d\o,
\end{align*}
which is equivalent to
\begin{align*} \int a_{ij} e^{2\Lambda\p}D_i\p D_j \psi\, d\o
\le \int 4e^{2\Lambda\p}\psi \, d\o.
\end{align*}
Take $\psi=\p \mu^2$  for some $\mu\in C^1_0(B_1)$
with $\mu=1$ on $B_{3/4}$. By the Cauchy-Schwarz inequality, we get 
\begin{equation*}
\int  e^{2\Lambda\p}|D\p|^2 \mu^2\, d\o
\le C\int  e^{2\Lambda\p}\big(\p^2|D \mu|^2+\p \mu^2\big)\, d\o,
\end{equation*} 
and hence 
\begin{equation}\label{eq-L2-estimate-gradient}
\int_{B_{3/4}}|D\p|^2\, d\o\le C\kappa^2 e^{2\Lambda\kappa},\end{equation}
where $C$ is a positive constant depending only on $n$ and $\Lambda$.  

We next derive an estimate of the H\"older norm of the gradient and prove 
\begin{equation}\label{eq-estimate-gradient}
|D\p|_{C^{1/3}(B_{1/2})}\le (Ce^{\kappa})^{Ce^{2\Lambda\kappa}}.\end{equation}
The proof below is similar to that of 
Theorem 4.24 in \cite{HL}. The difference is that in Theorem 4.24 of \cite{HL} $a_{ij}=a_{ij}(\o)$, while here $a_{ij}=a_{ij}(\o, \p(\o))$.  
We need to estimate $a_{ij}(\o_0, \p(\o_0))-a_{ij}(\o, \p(\o))$ and to control $\p(\o)-\p(\o_0)$ 
with Theorem \ref{theorem-Holder-solution}. In the following, $C$ is a positive constant depending only on $n$ and $\Lambda$.
We denote by $\p_{\o, s}$ and $(D\p)_{\o,s}$ the average of $\p$ and $D\p$ over $B_{s}(\o)$, respectively.

For any $B_{s}(\o_0) \subset B_1$,
take $\h w$ such that for any 
$\psi \in H^1_0\big(B_{ {\color{black}s}}(\o_0)\big)$ it holds
\begin{equation}\label{hat W linear}
\int_{B_{s}(\o_0)} a_{ij} (\o_0, \p(\o_0)) D_i \h w D_j\psi \, d\o= 0, 
\end{equation}
with $\h w-\p\in
H^1_0\big(B_{s}(\o_0)\big)$. The maximum principle implies 
\begin{equation}\label{W hat}
\inf_{B_{s}(\o_0)} \p \le \h w \le \sup_{B_{s}(\o_0)}\p\quad \text{in} \ B_{s}(\o_0). 
\end{equation}
For convenience, set $\h v=\p-\h w$ and recall \eqref{h=M-m}  
$$h(s)=h(\o_0;s)=\sup_{B_{s}(\o_0)}\p-\inf_{B_{s}(\o_0)} \p.$$ 
Then, $\h v \in H^1_0
\big(B_{s}(\o_0)\big)$
and by (\ref{W hat}) we get
\begin{equation}\label{eq-oscillation}\sup_{B_{s}(\o_0)} |\h v| \le h(s).\end{equation}
By  Corollary 3.11 in \cite{HL}, we have for any $0<\rho\le s$
\begin{equation}\label{eq-estimates-gradient}\int_{B_\rho(\o_0)} |D\p|^2 \, d\o\le c\Big[\Big(\frac{\rho}{s}\Big)^n
\int_{B_{s}(\o_0)}|D\p|^2\, d\o +
\int_{B_{s}(\o_0)}|D\h v|^2\Big]\, d\o,\end{equation} and
\begin{align}\label{eq-estimates-gradient-comparison}\begin{split}
\int_{B_\rho(\o_0)}|D\p-(D\p)_{\o_0,\rho}|^2 \, d\o
&\le c \Big[\Big(\frac{\rho}{s}\Big)^{n+2} \int_{B_{s}(\o_0)} |D\p - (D\p)_{\o_0,{ {\color{black}s}}}|^2\, d\o\\
&\qquad+\int_{B_{s}(\o_0)} |D\h v|^2\, d\o\Big].
\end{split}
\end{align}
Moreover, by  \eqref{hat W linear} and \eqref{eq-main-nonlinear}, we get 
for any $\psi \in H^1_0\big(B_{s}(\o_0)\big) \cap L^\infty
\big(B_{s}(\o_0)\big)$ 
\begin{align*} 
&\int_{B_{s}(\o_0)} a_{ij}(\o_0, \p(\o_0)) D_i \h vD_j\psi\, d\o\\
&\qquad=\int_{B_{s}(\o_0)} a_{ij}(\o_0, \p(\o_0)) D_i \p D_j\psi \, d\o\\
&\qquad=\int_{B_{s}(\o_0)}\sqrt{|\gamma|} \big[1-\f12f(\o)e^{\p(\o)}+F(\o,\p(\o), D\p(\o))\big]\psi\, d\o\\
&\qquad\quad+\int_{B_{s}(\o_0)}\big[a_{ij}(\o_0, \p(\o_0))-a_{ij}(\o, \p(\o))\big] D_i \p D_j\psi\, d\o.
\end{align*}
Note that, for any $\o\in B_{s}(\o_0)$ it holds
$$\big|a_{ij}(\o_0, \p(\o_0))-a_{ij}(\o, \p(\o))\big|\le C[|\o-\o_0|+ |\p(\o)-\p(\o_0)|]
\le C[s+  h(s)].$$
Taking $\psi = \h v$, together with \eqref{eq-estimate-F}, we obtain 
\begin{align*}
\int_{B_{s}(\o_0)} |D\h v|^2 \, d\o\le C\int_{B_{s}(\o_0)}\big[[s+ h(s)]|D\p||D\h v|
+ |D\p|^2|\h v| +e^{\kappa}|\h v|\big]\, d\o.
\end{align*}
We apply the Cauchy-Schwarz inequality to the first term in the right-hand side, substitute $\h v$ 
in the second term with \eqref{eq-oscillation}, and apply the Sobolev inequality to $\h v$ in the third term as follows:  
$$ \int_{B_{s}(\o_0)}|\h v|\, d\o
\le c{s}^{\frac{n+2}{2}}\Big(\int_{B_{s}(\o_0)}|D\h v|^{2}\, d\o\Big)^{\frac12}.$$
Then, we obtain
\begin{equation}\label{eq-estimates-gradient-improved}
\int_{B_{s}(\o_0)} |D\h v|^2 \, d\o\le C\big(h(s)+[h(s)]^2+s^2\big)\int_{B_{s}(\o_0)}|D\p|^2 \, d\o+Ce^{2\kappa}{s}^{n+2}.
\end{equation}
Therefore, by \eqref{eq-estimates-gradient} and \eqref{eq-estimates-gradient-improved}, 
we have for any $0<\rho\le s$ 
$$\int_{B_\rho(\o_0)} |D\p|^2 \, d\o\le C\Big[\Big({\frac{\rho}{s}}\Big)^n
+h(s)+[h(s)]^2+s^2\Big]\int_{B_{s}(\o_0)} |D\p|^2\, d\o +
Ce^{2\kappa}{s}^{n+2}.$$

Take any $\delta\in (0,1)$. For a small $\epsilon_\delta>0$, 
using \eqref{omega hat 1} we can choose some $s_\delta$ small such that $B_{{s}_\delta}(\o_0)\subset B_{3/4}$ and 
\begin{equation}\label{eq-condition-delta}
h(s_\delta)+[h(s_\delta)]^2+s_\delta^2<\epsilon_\delta. \end{equation}
By Lemma 3.4 in \cite{HL},  we have for any $s\le {s}_\delta$
\begin{equation*}
\int_{B_{s}(\o_0)} |D\p|^2\, d\o\le C{s}^{n-2+2\delta}\Big[{s}_\delta^{-(n-2+2\delta)}\int_{B_{{s}_\delta}(\o_0)} |D\p|^2\, d\o
+ e^{2\kappa}\Big].
\end{equation*}
By the choice of $\alpha$ in \eqref{eq-definition-alpha} and \eqref{eq-Holder-semi-norm}, 
using \eqref{omega hat 1} together with $\kappa\gg 1$, 
we can take  suitable positive constants $C_1$ and $C_2$ depending on $n$, $\Lambda$ and 
\begin{equation}\label{eq-definition-r}{s}_\delta=(C_1e^{\kappa})^{-C_2e^{4\Lambda\kappa}},\end{equation} 
so that \eqref{eq-condition-delta} is satisfied.  By \eqref{eq-L2-estimate-gradient}, we get 
\begin{equation}\label{eq-estimates-gradient-decay}
\int_{B_{s}(\o_0)} |D\p|^2\, d\o\le C{s}^{n-2+2\delta}\big[\kappa^2e^{2\Lambda\kappa} {s}_\delta^{-(n-2+2\delta)}
+ e^{2\kappa}\big].
\end{equation}
By the Poincar\'e inequality, we have for any $s\le {s}_\delta$ 
with $B_{{s}_\delta}(\o_0)\subset B_{3/4}$ 
\begin{equation*}
\int_{B_{s}(\o_0)} |\p-\p_{\o_0,s}|^2\, d\o\le C{s}^{n+2\delta}\big[\kappa^2e^{2\Lambda\kappa} {s}_\delta^{-(n-2+2\delta)}
+ e^{2\kappa}\big].
\end{equation*}
Hence, by Theorem 3.1 in \cite{HL}, we obtain for any $s\le {s}_\delta$ 
with $B_{{s}_\delta}(\o_0)\subset B_{3/4}$ 
\begin{equation}\label{eq-oscillation-improved}
h(\o_0; s)=\sup_{B_s(\o_0)}\p-\inf_{B_s(\o_0)}\p
\le C\big[\kappa e^{\Lambda\kappa} {s}_\delta^{-(\frac n2-1+\delta)}+e^{\kappa}\big]{s}^\delta.
\end{equation}
With \eqref{eq-estimates-gradient-improved},
\eqref{eq-estimates-gradient-decay}, and \eqref{eq-oscillation-improved}, 
we have for any $s\le {s}_\delta$ 
with $B_{{s}_\delta}(\o_0)\subset B_{3/4}$ 
\begin{align*} \int_{B_{s}(\o_0)} |D\h v|^2 \, d\o\le  C\kappa^{3}e^{3\Lambda\kappa} {s}_\delta^{-{\f32}(n-2+2\delta)}
{s}^{n-2+3\delta}+Ce^{2\kappa}{s}^{n+2}. 
\end{align*}
We take $\delta\in (2/3, 1)$ 
and write $\alpha'=3\delta/2-1$. 
Note that $\alpha'\in (0, 1/2)$ and $\alpha'$ can be as close as to 1/2 by taking $\delta$ as close as to 1. 
With \eqref{eq-estimates-gradient-comparison} we obtain, 
for any $\rho\leq s\le {s}_\delta$ 
with $B_{{s}_\delta}(\o_0)\subset B_{3/4}$, 
\begin{align*}\int_{B_\rho(\o_0)}|D\p-(D\p)_{\o_0,\rho}|^2\, d\o &\le C\Big(\frac{\rho}{s}\Big)^{n+2}\int_{B_{s}(\o_0)} |D\p - (D\p)_{\o_0,s}|^2\, d\o\\
&\qquad+C\kappa^{3}e^{3\Lambda\kappa} {s}_\delta^{-\f32(n-2+2\delta)}{s}^{n+2\alpha'}+Ce^{2\kappa} {s}^{n+2}.
\end{align*}
By Lemma 3.4 in \cite{HL}, together with \eqref{eq-L2-estimate-gradient}, we have for any $s\le {s}_\delta$ 
with $B_{{s}_\delta}(\o_0)\subset B_{3/4}$  
\begin{align*}\int_{B_{s}(\o_0)}|D\p-(D\p)_{\o_0,s}|^2 \, d\o\le 
C\kappa^{3}e^{3\Lambda\kappa} {s}_\delta^{-\f32(n-2+2\delta)}{s}^{n+2\alpha'}. 
\end{align*}
By Theorem 3.1 in \cite{HL}, together with (\ref{eq-L2-estimate-gradient}), we obtain for any $\o\in B_{1/2}$ 
\begin{equation}\label{C11}|D\p(\o)|\le C\big[\kappa^{\f32}e^{\f32\Lambda\kappa} {s}_\delta^{-\f34(n-2+2\delta)}+\kappa e^{\Lambda\kappa}\big]
\leq C \kappa^{\f32}e^{\f32\Lambda\kappa} {s}_\delta^{- \f34(n-2+2\delta)},\end{equation}
and for any $\o_1,\o_2\in B_{1/2}$ with $|\o_1-\o_2|<{s}_\delta$,
$$|D\p(\o_1)-D\p(\o_2)|\le C\kappa^{\f32}e^{\f32\Lambda\kappa} {s}_\delta^{-\f34(n-2+2\delta)}|\o_1-\o_2|^{\alpha'}.$$
Moreover, for any $\o_1,\o_2\in B_{1/2}$ it holds
\begin{equation}\label{C12}
|D\p(\o_1)-D\p(\o_2)|\le C\kappa^{\f32}e^{\f32\Lambda\kappa} {s}_\delta^{-\f34(n-2+2\delta)}{s}_\delta^{-\alpha'}
|\o_1-\o_2|^{\alpha'}.
\end{equation}
Recall ${s}_\delta$ as in \eqref{eq-definition-r} 
and note $0<\alpha'<1/2$. We can choose $\alpha'=1/3$ and, with (\ref{C11}) and (\ref{C12}), we hence conclude
\eqref{eq-estimate-gradient}. 
\end{proof}

\subsection{Interior Schauder Estimates}
Our final goal is to prove the following Schauder estimates. 
We introduce an extra assumption 
\begin{align}\label{eq-assumption-a-ij-Lambda-2}
|D^2_{\o}a_{kl}(\o,\p)|+|D_{\o}\partial_{\p} a_{kl}(\o,\p)|+|\partial^2_{\p} a_{kl}(\o,\p)|\le \Lambda_2,
\end{align} 
for some positive constant $\Lambda_2$. 

\begin{theorem}\label{theorem-Holder-second derivative} 
Let $\p\in H^1(B_1)$ be a positive solution to
\eqref{eq-main-nonlinear} with \eqref{eq-main-bound-u} satisfied, and $f\in C^{1/3}(B_1)$.  
Then,  
\begin{equation}\label{eq-estimate-second-derivative}
|\p|_{C^{2,1/3}(B_{1/2})}\le (Ce^{\kappa})^{Ce^{4\Lambda\kappa}},\end{equation}
where $C$ is a positive constant depending only on $n$, 
$\Lambda$, {\color{black} $\Lambda_1$, $\Lambda_2$,  
the $C^{1}$-norms of $\sqrt{|\gamma|}$, $c_{ij}$, $c_i$, and $c_0$ on 
$B_1\times \mathbb R_+$,}
and the $C^{1/3}$-norm of $f$ on $B_1$.
\end{theorem}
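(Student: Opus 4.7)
The plan is to bootstrap from the interior $C^{1,1/3}$-estimate of Theorem \ref{theorem-Holder-gradient} to the desired $C^{2,1/3}$-estimate by recasting \eqref{eq-main-nonlinear} as a linear elliptic equation with H\"older-continuous coefficients and right-hand side, and then invoking the classical interior Schauder theory. The main conceptual point is that once $\phi$ is known to be $C^{1,1/3}$, the coefficient $a_{ij}(\omega,\phi(\omega))$ and the nonlinear structural term $F(\omega,\phi,D\phi)$ both become $C^{1/3}$ functions of $\omega$ alone, with bounds explicit in $\kappa$; the main technical point is the careful bookkeeping required to track the propagation of the iterated-exponential bound $(Ce^{\kappa})^{Ce^{4\Lambda\kappa}}$.

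First, applying Theorem \ref{theorem-Holder-gradient} on a slightly larger ball $B_{3/4}$ (by a standard rescaling and covering argument) yields
\begin{equation*}
|\phi|_{C^{1,1/3}(B_{3/4})} \leq (Ce^{\kappa})^{Ce^{4\Lambda\kappa}}.
\end{equation*}
The new hypothesis \eqref{eq-assumption-a-ij-Lambda-2}, combined with the chain rule, shows that the composed coefficients $\tilde a_{ij}(\omega):=a_{ij}(\omega,\phi(\omega))$ lie in $C^{1,1/3}(B_{3/4})$ with a norm bounded by a polynomial in $|\phi|_{C^{1,1/3}(B_{3/4})}$, hence still by $(Ce^{\kappa})^{Ce^{4\Lambda\kappa}}$. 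The same reasoning applies to $b(\omega):=\sqrt{|\gamma|}(\omega,\phi(\omega))$ using the assumed $C^1$-regularity of $\sqrt{|\gamma|}$ on $B_1\times\mathbb{R}_+$. In particular, $\partial_j\tilde a_{ij}$ exists classically and lies in $C^{0,1/3}(B_{3/4})$.

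Next, I rewrite \eqref{eq-main-nonlinear} in non-divergence form by expanding the divergence:
\begin{equation*}
-\tilde a_{ij}(\omega)\,\partial_{ij}\phi \;=\; b(\omega)\Big[1-\tfrac12 f e^{\phi}+F(\omega,\phi,D\phi)\Big] + \big(\partial_j\tilde a_{ij}\big)(\omega)\,\partial_i\phi.
\end{equation*}
I then estimate the right-hand side in $C^{0,1/3}(B_{3/4})$. The function $f\in C^{1/3}(B_1)$ is given; $e^{\phi}$, $D\phi$, and $\partial_j\tilde a_{ij}$ lie in $C^{0,1/3}(B_{3/4})$ by the previous step; and the structural term
\begin{equation*}
F(\omega,\phi(\omega),D\phi(\omega)) = a^{-1/3}e^{\phi}\big[c_{ij}(\omega,\phi)D_i\phi\,D_j\phi + c_i(\omega,\phi)D_i\phi + c_0(\omega,\phi)\big]
\end{equation*}
is a polynomial in $D\phi$ whose coefficients are $C^1$-compositions in $(\omega,\phi(\omega))$, hence also belongs to $C^{0,1/3}(B_{3/4})$. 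A direct calculation based on the stability of the $C^{0,1/3}$-norm under products and compositions with $C^1$ functions of a $C^{1,1/3}$ map produces
\begin{equation*}
|\mathrm{RHS}|_{C^{1/3}(B_{3/4})} \leq (Ce^{\kappa})^{Ce^{4\Lambda\kappa}}.
\end{equation*}

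Finally, the equation is uniformly elliptic with ellipticity constant $\Lambda$ and has $C^{0,1/3}$ leading coefficients and $C^{0,1/3}$ right-hand side. The classical interior Schauder estimate for linear elliptic equations in non-divergence form (see, e.g., Theorem 6.2 of \cite{GT} or Theorem 3.13 of \cite{HL}) yields
\begin{equation*}
|\phi|_{C^{2,1/3}(B_{1/2})} \leq C\big[|\phi|_{L^\infty(B_{3/4})} + |\mathrm{RHS}|_{C^{1/3}(B_{3/4})}\big],
\end{equation*}
where the Schauder constant $C$ depends on $n$, $\Lambda$, and the $C^{1/3}$-norm of $\tilde a_{ij}$. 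Since both the Schauder constant and the right-hand side are dominated by expressions of the form $(Ce^{\kappa})^{Ce^{4\Lambda\kappa}}$, and this class of bounds is stable under polynomial operations after enlarging $C$, the estimate \eqref{eq-estimate-second-derivative} follows. The hard part, as already noted, is not any single step but the cumulative verification that composition, multiplication, and the Schauder constant together do not inflate the bound beyond the asserted form $(Ce^{\kappa})^{Ce^{4\Lambda\kappa}}$.
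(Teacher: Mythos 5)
Your proposal is correct and follows essentially the same route as the paper: rewrite \eqref{eq-main-nonlinear} in nondivergence form with the nonlinearity and the expanded divergence terms treated as a $C^{1/3}$ right-hand side bounded via Theorem \ref{theorem-Holder-gradient}, then apply the interior Schauder estimate while tracking that its constant depends on the $C^{1/3}$-norm of $a_{ij}(\o,\p(\o))$ (hence on $\p$) only polynomially, so the bound stays of the form $(Ce^{\kappa})^{Ce^{4\Lambda\kappa}}$. The paper makes this last point by explicitly re-running the freezing-of-coefficients argument, which is exactly the verification your final paragraph defers to.
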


\begin{proof} 
We write \eqref{eq-main-nonlinear} in nondivergence form
\begin{equation}\label{Theorem 4.7 application}
a_{ij} (\o,\p) \partial_{ij}\p = {h}(\omega), 
\end{equation}
where 
\begin{equation*}
{h}(\o)=-\partial_{j}a_{ij}(\o,\p)\partial_i \p
-\partial_{\p} a_{ij}(\o,\p)\partial_i \p\partial_j \p- \sqrt{|\gamma|}\Big[1-\f12fe^{\p}+F(\o, \p, D\p)\Big].
\end{equation*}
We treat the equation \eqref{Theorem 4.7 application} as a linear equation of $\p$, 
with ${h}$ as a given function defined in $B_1$. 
By the explicit expression of $F$ and Theorem \ref{theorem-Holder-gradient}, we have 
$$|{h}|_{C^{1/3}(B_{3/4})}\le (Ce^{\kappa})^{Ce^{4\Lambda\kappa}}.$$
If we quote the standard interior Schauder estimate, we will get  
\begin{equation*}
|\p|_{C^{2,1/3}(B_{1/2})}\le C_*\big[|\p|_{L^\infty(B_{3/4})}+|h|_{C^{1/3}(B_{3/4})}\big]
\le C_*(Ce^{\kappa})^{Ce^{4\Lambda\kappa}},\end{equation*}
where $C_*$ is a positive constant depending only on $n$, $\Lambda$, and the $C^{1/3}$-norm of $a_{ij}$ on $B_{3/4}$. 
Since the leading coefficient $a_{ij}$ is also a function of $\p$, i.e., $a_{ij}=a_{ij}(\o, \p)$, 
then $C_*$ also depends on $\p$.  

Instead of quoting the interior Schauder estimate, we  need to 
examine its proof by the technique of {\it freezing} coefficients, 
and to keep track of the dependence on the H\"older norms of the leading coefficients.  
In this process, 
all contributions due to the dependence of $a_{ij}$ on $\p$ can be absorbed by 
$(Ce^{\kappa})^{Ce^{4\Lambda\kappa}}$. 
Therefore, we obtain the desired estimate \eqref{eq-estimate-second-derivative}. 
\end{proof} 

We are ready to prove Theorem \ref{theorem-Schauder-phi}. 

\begin{proof}[Proof of Theorem \ref{theorem-Schauder-phi}] 
Fix an arbitrary ball $B$ on $\mathbb S^2$. In local coordinates within $B$, 
we write \eqref{main eqn lambda} as \eqref{u equation lambda main}. Specifically, we set 
$$a_{ij}(\o,\p)=\sqrt{|\gamma(\o, u)|}\gamma^{ij}(\o, u),$$
with $u=1-R=1-\ub ae^{-\p}.$ 
By \eqref{Lambda-gamma-1}, \eqref{Lambda-gamma-2}, and \eqref{Lambda-gamma-3}, we obtain 
\eqref{eq-assumption-a-ij-Lambda-1} and \eqref{eq-assumption-a-ij-Lambda-2} by renaming 
$\Lambda_1$ and $\Lambda_2$ if necessary. 
{\color{black} Next, we examine the function $F(\p,\ub)$ given by \eqref{eq-expression-cal-S-ub-nonlinear-v2}. 
Note that $F$ can be viewed as a linear combination of $\nab_i\p\nab_j\p, \nab_i\p, 1$ with coefficients given by 
$\gamma^{ij}$, $R^{-1}\Omega\chibh_{kl}$, $\eta_i$, $R\alpha_1$, and $R\alpha_2$, which are functions of $\o$ and $u=1-R=1-\ub ae^{-\p}$. 
In order to get an estimate of the $C^{1/3}$-norm of $F$, we need to bound these coefficients and their derivatives with respect to $\o$ and $\p$. 
Note $\partial_{\p}=R\partial_u$. Hence, we need to bound 
\begin{align*} 
&R^{-1}\Omega\chibh_{kl},\, \eta_k,\, R\alpha_1,\, R\alpha_2,\\
&R^{-1}\nab_i(\Omega\chibh_{kl}),\, \nab_i\eta_k,\, R\nab_i\alpha_1,\, R\nab_i\alpha_2,\\
&\partial_u(\Omega\chibh_{kl}),\, R\partial_u\eta_k,\, R^2\partial_u\alpha_1,\, R^2\partial_u\alpha_2.
\end{align*}
By \eqref{A1} and \eqref{A2}, these quantities are bounded by $e^{\p}a^{-\f13}$. 
Hence, we have an estimate of the $C^{1/3}$-norm of $F$.}
Therefore, we can apply Theorem \ref{theorem-Holder-solution}, 
Theorem \ref{theorem-Holder-gradient}, and Theorem \ref{theorem-Holder-second derivative} successively, 
and obtain the desired result.\end{proof}

\section{The Linearized Equation}\label{sec-linearized-equations} 

In this section, we derive and analyze the linearized equation. 
As in previous sections, we fix $\ub$. 

We first compute  the linearized operator. The approach here is different from that in \cite{An17}, where techniques via commutator formulas are used. Here we adopt a more direct calculation. Let $\mathcal S(\phi)$
be the operator defined by  \eqref{eq-expression-cal-S-ub-v2}. 

\begin{proposition}\label{form of linearization}
Let $\p$ be a given function on $\mathbb S^2$. 
The linearized operator $\partial_{\p} \mathcal S(\p)$ is of the form
\begin{equation}\label{eq-linearization-S}
\partial_{\p} \mathcal S(\p)[w]
=\Delta_{\gamma}w-\f12f(\o)e^{\p}w+c^{i}\nab_i w+cw,
\end{equation}
where the connection is with respect to the metric $\gamma$, and 
$c=c(\o,\p,\ub)$ and $c^i=c^i(\o,\p,\ub)$ are functions satisfying 
$$|c|+|c^i|\leq Ca^{-\f13} e^{\p}{\color{black} (|\nabla^2_{\gamma}\p|+|\nabla_\gamma\p|^2+1)}.$$
\end{proposition}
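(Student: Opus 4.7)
The strategy is a direct computation of
$$\partial_{\phi}\mathcal S(\phi)[w]=\frac{d}{d\epsilon}\bigg|_{\epsilon=0}\mathcal S(\phi+\epsilon w).$$
The key conceptual observation is that the metric $\gamma(\o,u)$ and the geometric coefficients $\Omega,\chibh,\eta,\alpha_1,\alpha_2$ appearing in the expression \eqref{eq-expression-cal-S-ub-v2}--\eqref{eq-expression-cal-S-ub-nonlinear-v2} depend on $\phi$ \emph{only} through the spacetime coordinate $u=1-R$ with $R=\ub a e^{-\phi}$. Writing $\dot{\phantom{H}}=\frac{d}{d\epsilon}|_{\epsilon=0}$, we have $\dot R=-Rw$ and $\dot u=Rw$, so for any such geometric quantity $H=H(\o,u)$,
$$\dot H=(\partial_u H)\,\dot u=Rw\,\partial_u H.$$
Thus the $\phi$-dependence of every coefficient is captured by a single universal formula.

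\textbf{Steps.} First, I would linearize the three pieces of $\mathcal S(\phi)$ separately. For the principal part $\Delta_\gamma\phi$, I apply Lemma \ref{lemma-variation-Laplacians} with $\dot\gamma_{ij}=Rw\,\partial_u\gamma_{ij}$; this gives $\Delta_\gamma w$ (the principal term of the linearization) plus contributions of the form $\dot\gamma\cdot\nabla^2\phi$ and $\nabla\dot\gamma\cdot\nabla\phi$, which are linear in $w$ and therefore feed into $cw$. Second, the algebraic term $1-\tfrac12 fe^{\phi}$ linearizes trivially to $-\tfrac12 fe^{\phi}w$. Third, in $F(\phi,\ub)$ the derivatives $\nabla_i\phi$ contribute terms of the form $\nabla_i w$ (producing the transport term $c^{i}\nabla_i w$), while all variations coming from $R$, $\gamma^{ij}$, and the coefficients $\Omega\chibh_{kl},\eta_i,\alpha_1,\alpha_2$ produce factors of $w$ through the universal formula above, and hence feed into $cw$.

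\textbf{Estimates and main difficulty.} To verify the bound $|c|+|c^{i}|\le Ca^{-1/3}e^{\phi}(|\nabla^{2}_{\gamma}\phi|+|\nabla_\gamma\phi|^{2}+1)$, I would argue term by term, using the scale-critical estimates collected in Appendix \ref{appendix-A}. For the $c^{i}$ terms, these estimates already imply that the coefficients of $\nabla_i\phi\nabla_j\phi$, $\nabla_i\phi$ and the constant in $F$ (namely $R^{-1}\Omega\chibh_{kl}\gamma^{ik}\gamma^{jl}$, $\gamma^{ij}\eta_i$, $R\alpha_1$, $R\alpha_2$) are bounded by $Ca^{-1/3}e^{\phi}$, so differentiating in $\nabla\phi$ gives the desired control. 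For the $c$ terms, the essential input is that $R\,\partial_u H$ is again of size $Ca^{-1/3}e^{\phi}$ for each relevant $H$; for $H=\gamma_{ij}$ this follows from rewriting $R\,\partial_u\gamma_{ij}$ using $\partial_u g_{kl}=\Omega\,\mathrm{tr}_g\chib\, g_{kl}+2\Omega\chibh_{kl}$ together with $g=R^{2}\gamma$, after which the combination $\Omega\,\mathrm{tr}_g\chib+2/R$ is controlled by $\alpha_1$ and $\Omega\omb$ (cf.\ \eqref{eq-definitions-alphas}) and hence by $Ca^{-1/3}e^{\phi}/R$. The main obstacle is the sheer amount of bookkeeping: the variation of $\Delta_\gamma\phi$ alone produces three separate lower-order contributions involving $\dot\gamma$ and $\nabla\dot\gamma$, and each summand of $F$ generates several variations. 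The technical heart of the argument is to track each of these terms and confirm that the ``scale-critical" structure provided by Appendix \ref{appendix-A} produces a factor of $a^{-1/3}$ in every coefficient, rather than merely a universal bound; once this is done, combining all contributions yields the claimed form \eqref{eq-linearization-S} with the asserted estimate on $c$ and $c^{i}$.
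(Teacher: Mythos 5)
Your proposal follows essentially the same route as the paper's proof: linearize term by term, use that $\gamma,\Omega,\chibh,\eta,\alpha_1,\alpha_2$ depend on $\p$ only through $u=1-R$ so that $\dot H=Rw\,\partial_uH$ (in particular $\dot\gamma_{kl}=w(R\O\tr_g\chib+2)\gamma_{kl}+2wR^{-1}\O\chibh_{kl}$), feed this into Lemma \ref{lemma-variation-Laplacians}, and verify the coefficient bounds from the quantities listed in Appendix \ref{appendix-A} (where $|R\O\tr_g\chib+2|\le Ce^{\p}a^{-1/3}$ is available directly, with no need for your detour through $\alpha_1$ and $\O\omb$). One minor bookkeeping correction: the $\nabla\dot\gamma\cdot\nabla\p$ terms in the Laplacian variation also produce $\nabla_i w$ contributions (the paper's term $-2R^{-1}\O\chibh_{kl}\gamma^{ik}\gamma^{jl}\nabla_j\p\,\nabla_i w$ in $\dot I_1$), so they land in the $c^i\nabla_i w$ slot rather than in $cw$ as you state, but they satisfy the same admissible bound and the conclusion is unchanged.
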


\begin{proof} The proof is divided into four steps. 

{\it Step 1. Setup.} Recall the relations given by \eqref{eq-relation-g-gamma}  
and \eqref{eq-relation-u-R-phi}; namely, $g=R^{2}\gamma$, and 
\begin{equation*}u=1-R, \, R=\ub a e^{-\p}.\end{equation*} 
By \eqref{eq-expression-cal-S-ub-v2} and \eqref{eq-expression-cal-S-ub-nonlinear-v2}, we write
\begin{align}\label{eq-expression-S-I-1-5}\mathcal S(\p)=I_1+I_2+I_3+I_4+I_5,\end{align}
where 
\begin{align}\label{eq-expression-S-I-1}
I_1=\D_{\gamma}\phi+1-\f12fe^{\p}\end{align}
and 
\begin{align}\label{eq-expression-S-I-2-5}\begin{split}
I_2=2R^{-1}\Omega\chibh_{kl} \gamma^{ik}\gamma^{jl}\nab_i \p\nab_j \p, \quad
&I_3=2\gamma^{ij}\eta_i \nab_j \p,\\
I_4=R\alpha_1 |\nab_\gamma \p|^2,\quad
&I_5=R\alpha_2. 
\end{split}\end{align}

{\it Step 2. Parametrized functions.}  
Now, we fix functions $\p$, $R$ with $R=\ub a e^{-\p}, u=1-R$.  Consider $\p(\e)=\p+\e w$. Then, 
$R(\e)=Re^{-\e w}$. We proceed to compute the linearized operator 
$$\partial_{\p} \mathcal S(\p)[w]
=\f{d\,}{d\epsilon}\Big|_{\epsilon=0}\mathcal S(\p+\epsilon w).$$ 
For brevity, we denote $\dot{\,}=\f{d\,}{d\epsilon}\Big|_{\epsilon=0}$. Hence, 
$$\dot{\p}=w, \, \dot{R}=-Rw,\, \dot{u}=Rw.$$
Note that $\gamma, \chib, \chi, \eta, \O, \omb$ all depend on $\e$ through $u=1-R$.  
We now derive several general formulas in terms of $\dot{\gamma}$. 

Note that  
$$\dot{\gamma}^{ij}=-\gamma^{ik}\gamma^{jl}\dot{\gamma}_{kl}.$$ 
By $|\nabla_\gamma\phi|^2=\gamma^{ij}\nab_i\p\nab_j\p$, we obtain
\begin{equation}\label{eq-variation-gradient-square}
(|\nabla_\gamma\phi|^2)^{\boldsymbol\cdot}=2\gamma^{ij}\nab_i\p\nab_jw
-\gamma^{ik}\gamma^{jl}\dot{\gamma}_{kl}\nab_i\p\nab_j\p.\end{equation}
By applying Lemma \ref{lemma-variation-Laplacians} to $\gamma$, together with $\dot{\phi}=w$, we have 
\begin{equation}\label{eq-variation-Laplace}
(\Delta_{\gamma}\p)^{\boldsymbol\cdot}=\Delta_\gamma w-\gamma^{ik}\gamma^{jl}\dot{\gamma}_{kl}\nabla_{ij}\p
-\gamma^{ij}\gamma^{kl}\nabla_i\dot{\gamma}_{jl}\nabla_k\p+\frac12\gamma^{ij}\gamma^{kl}\nabla_l\dot{\gamma}_{ij}\nabla_k\p.
\end{equation}

Recall \eqref{eq-variation-g-u}, i.e.,  
\begin{equation*}
\partial_ug_{kl}=\O\tr_g\chib g_{kl}+2\O \chibh_{\,kl}.
\end{equation*}
Then, with $g=R^2\gamma$, 
\begin{equation*}
\partial_u\gamma_{kl}=R^{-1}(R\O\tr_g\chib+2)\gamma_{kl}+2R^{-2}\O \chibh_{\,kl},
\end{equation*}
and hence 
\begin{equation*}
\dot{\gamma}_{kl}=\dot{u}\partial_u \gamma_{kl}
=w(R\O\tr_g\chib+2)\gamma_{kl}+2wR^{-1}\O \chibh_{\,kl}.
\end{equation*}
A substitution in \eqref{eq-variation-gradient-square} and \eqref{eq-variation-Laplace} yields 
\begin{equation}\label{eq-variation-gradient-square1}
(|\nabla_\gamma\phi|^2)^{\boldsymbol\cdot}=2\gamma^{ij}\nab_j\p\nab_iw-w(R\O\tr_g\chib+2) |\nab_\gamma\p|^2-2
wR^{-1}\O \chibh_{\,kl}
\gamma^{ik}\gamma^{jl}\nab_i\p\nab_j\p,\end{equation}
and 
\begin{equation}\label{eq-variation-Laplace1}\begin{split}
(\Delta_{\gamma}\p)^{\boldsymbol\cdot}&=\Delta_\gamma w-2R^{-1}\O\chibh_{kl}\gamma^{ik}\gamma^{jl}\nab_j \p\nab_iw\\
&\qquad -w\big[(R\O\tr_g\chib+2)\Delta_\gamma\p+2R^{-1}\nab_i(\O\chibh_{kl})\gamma^{ik}\gamma^{jl} \nab_j\p\\
&\qquad+2R^{-1}\O \chibh_{kl}\gamma^{ik}\gamma^{jl}(\nab_{ij}\p
+\nab_i\p \nab_j \p)\big].
\end{split}
\end{equation}

{\it Step 3. Computations of $\dot{I}_i$, for $i=1, \cdots, 5$.}
We first compute $\dot{I}_1$. By \eqref{eq-expression-S-I-1} and \eqref{eq-variation-Laplace1}, we have
\begin{equation}\label{eq-variation-I1}
\begin{split}
\dot{I}_1&=\Delta_{\gamma}w-\f12fe^{\p}w-2R^{-1}\O\chibh_{jl}\gamma^{ij}\gamma^{kl}\nab_k \p\nab_iw
-w(2+R\O\tr_g\chib)\Delta_{\gamma}\p\\
&\qquad -2w\big[R^{-1}\nab_i(\O\chibh_{jl})\gamma^{ij}\gamma^{kl} \nab_k\p
+R^{-1}\O \chibh_{kl}\gamma^{ik}\gamma^{jl}(\nab_{ij}\p
+ \nab_i\p \nab_j \p)\big].
\end{split}
\end{equation}
We note that the first two terms in $\dot{I}_1$ are the first two terms on the right-hand side of \eqref{eq-linearization-S}. 
We will  demonstrate that the rest terms in $\dot{I}_1$ and all other $\dot{I}_i$, for $i=2, \cdots, 5$, 
are in the forms of the last two terms on the right-hand side of \eqref{eq-linearization-S}. 

Without presenting details of standard calculation process, here we list the expressions of $\dot{I}_i$, for $i=2, \cdots, 5$. We have 
\begin{align}\label{eq-dot-I2-5}\begin{split}
\dot{I}_2&=4R^{-1}\Omega\chibh_{kl} \gamma^{ik}\gamma^{jl}\nab_j\p\nab_i w\\
&\qquad+2w\big(\partial_u(\Omega\chibh_{kl})+R^{-1}\Omega\chibh_{kl}\big) \gamma^{ik}\gamma^{jl}\nab_i\p\nab_j \p\\
&\qquad -4wR^{-1}\Omega\chibh_{kl}\big((R\O\tr_g\chib+2)\gamma^{ki}
+2R^{-1}\Omega\chibh_{pq}\gamma^{ip}\gamma^{kq}\big)\gamma^{lj}\nab_i\p\nab_j \p,\\
\dot{I}_3&=2\gamma^{ij}\eta_j \nab_i w+2wR\partial_u\eta_i \gamma^{ij}\nab_j \p\\
&\qquad -2w\big((R\O\tr_g\chib+2)\gamma^{ij}
+2R^{-1}\Omega\chibh_{kl}\gamma^{ik}\gamma^{jl}\big)\eta_i \nab_j \p,\\
\dot{I}_4&=2R\alpha_1\gamma^{ij}\nab_i\p\nab_jw+w(R^2\partial_u\alpha_1-R\alpha_1)|\nab_{\gamma}\phi|^2\\
&\qquad -wR\alpha_1\big((R\O\tr_g\chib+2) \gamma^{ij}+2R^{-1}\O \chibh_{\,kl}\gamma^{ik}\gamma^{jl}\big)\nab_i\p\nab_j\p,\\
\dot{I}_5 
&=w(R^2\partial_u\alpha_2-R\alpha_2). 
\end{split}\end{align}

{\it Step 4. Expressions of coefficients.} 
We now collect $\dot{I}_i$, for $1\le i\le 5$, and write the sum in the form of \eqref{eq-linearization-S}. We obtain 
\begin{equation}\label{eq-expression-ci}
c_i=2R^{-1}\Omega\chibh_{kl} \gamma^{ik}\gamma^{jl}\nab_j\p+2\gamma^{ij}\eta_j
+2R\alpha_1\gamma^{ij}\nab_j\p,
\end{equation}
and 
\begin{equation}\label{eq-expression-c}
\begin{split}
c&=-(2+R\O\tr_g\chib)\Delta_{\gamma}\p\\
&\qquad-2\big[R^{-1}\O \chibh_{kl}\gamma^{ik}\gamma^{jl}(\nab_{ij}\p+\nab_i\p \nab_j \p)
+R^{-1}\nab_i(\O\chibh_{jl})\gamma^{ij}\gamma^{kl} \nab_k\p\big]\\
&\qquad -\big((R\O\tr_g\chib+2) \gamma^{ij}+2R^{-1}\O \chibh_{\,kl}\gamma^{ik}\gamma^{jl}\big)\\
&\qquad\qquad\cdot\big(4R^{-1}\Omega\chibh_{ip}\gamma^{pq}\nab_q\p\nab_j \p+2\eta_i \nab_j \p+R\alpha_1\nab_i\p\nab_j\p)\\
&\qquad+2\big(\partial_u(\Omega\chibh_{kl})+R^{-1}\Omega\chibh_{kl}\big) \gamma^{ik}\gamma^{jl}\nab_i\p\nab_j \p
+2R\partial_u\eta_i\gamma^{ij}\nab_j \p\\
&\qquad+(R^2\partial_u\alpha_1-R\alpha_1)|\nab_{\gamma}\phi|^2+R^2\partial_u\alpha_2-R\alpha_2.
\end{split}
\end{equation}
We point out that $c_1, c_2$, and $c$ can be viewed as linear combinations of  $\nab_{ij}\p$, $\nab_i\p\nab_j\p$, $\nab_i\p$, and $1$,
with coefficients given by $\gamma^{ij}$ and 
\begin{equation}\label{eq-expressions-coefficients}\begin{split}
&2+R\O\tr_g\chib,\, R^{-1}\Omega\chibh_{kl},\,  \eta_i,\, R\alpha_1,\, R\alpha_2,\, R^{-1}\nab_i(\Omega\chibh_{kl}), \\
&\partial_u(\Omega\chibh_{kl}),\, R\partial_u\eta_i,\, R^2\partial_u\alpha_1,\, 
R^2\partial_u\alpha_2. 
\end{split}\end{equation}
By \eqref{A1} and \eqref{A2}, 
all expressions in \eqref{eq-expressions-coefficients} are controlled by $a^{-\f13} e^{\p}$.
\end{proof} 

Next, we discuss the invertibility of the linearized operator $\partial_{\p} S(\p)$. 
We first examine the linear operator introduced by the first two terms on the left-hand side of \eqref{eq-linearization-S}. 
Set
\begin{equation}\label{eq-linearization-model}
L_0w=\D_\gamma w-\f12fe^{\phi}w.
\end{equation}
Here, we view $f$ and $\p$ as two fixed nonnegative functions on 
$\mathbb S^2$.  

\begin{lemma}\label{lemma-H1-estimates-L0} 
Let $f$ and $\p$ be given nonnegative functions on $\mathbb S^2$, with \eqref{eq-assumption-lower-bound-f} satisfied. 
Then the equation $L_0w=0$ admits only the trivial solution.  Moreover, the first eigenvalue $\mu_1$ of $-L_0$ is positive and simple, and 
has a positive eigenfunction.
\end{lemma}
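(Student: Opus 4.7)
The plan is to establish (i) triviality of $L_0 w = 0$ via an energy identity, (ii) positivity of $\mu_1$ via the Rayleigh quotient, and (iii) positivity of the first eigenfunction together with simplicity by means of the strong maximum principle/Harnack inequality. Throughout, the crucial fact is that $f \ge m > 0$ on $B_p(\e)$ and $e^\p > 0$, so the potential $V := \tfrac12 f e^\p$ is nonnegative on $\mathbb S^2$ and strictly positive on $B_p(\e)$.

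First, suppose $w \in H^1(\mathbb S^2)$ satisfies $L_0 w = 0$. Testing with $w$ and integrating by parts on $\mathbb S^2$ yields
\[
\int_{\mathbb S^2}\!\bigl(|\nab_\gamma w|^2 + \tfrac12 f e^{\p} w^2\bigr)\,dV_\gamma = 0.
\]
Both integrands are nonnegative, so $w$ is constant and $\int_{\mathbb S^2} f e^{\p} w^2\,dV_\gamma = 0$; combined with the lower bound on $f$ on $B_p(\e)$, this forces $w \equiv 0$. The same reasoning applies to the Rayleigh quotient
\[
\mu_1 \;=\; \inf_{w\in H^1(\mathbb S^2)\setminus\{0\}}\frac{\int_{\mathbb S^2}\bigl(|\nab_\gamma w|^2 + \tfrac12 f e^\p w^2\bigr)\,dV_\gamma}{\int_{\mathbb S^2} w^2\,dV_\gamma},
\]
which is manifestly nonnegative. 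The compact embedding $H^1(\mathbb S^2) \hookrightarrow L^2(\mathbb S^2)$ together with the direct method produces a minimizer $\psi_1$ with $\|\psi_1\|_{L^2}=1$. If $\mu_1$ were zero, the argument just applied to $\psi_1$ would give $\psi_1 \equiv 0$, contradicting the normalization; hence $\mu_1 > 0$.

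For the existence of a positive eigenfunction, observe that $|\psi_1| \in H^1(\mathbb S^2)$ with $|\nab_\gamma|\psi_1|| = |\nab_\gamma \psi_1|$ almost everywhere, so $|\psi_1|$ is also a minimizer and hence a nonnegative weak solution of
\[
-\Delta_\gamma |\psi_1| + \bigl(\tfrac12 f e^{\p} - \mu_1\bigr)|\psi_1| = 0.
\]
Elliptic regularity upgrades $|\psi_1|$ to a smooth function, and the strong maximum principle (equivalently, the Harnack inequality) for linear elliptic equations with bounded coefficients on the connected manifold $\mathbb S^2$ shows that a nonnegative nontrivial solution is strictly positive. Thus $\psi_1$ does not change sign, and we take $\psi_1 > 0$. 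Simplicity follows from a standard argument: given any other first eigenfunction $\tilde\psi$, set $c = \min_{\mathbb S^2}(\tilde\psi/\psi_1)$ so that $\tilde\psi - c\psi_1 \ge 0$ vanishes at some point; since $\tilde\psi - c\psi_1$ is again a first eigenfunction, the strong positivity principle forces $\tilde\psi \equiv c\psi_1$. The only technical subtlety is that the zeroth-order coefficient $\tfrac12 f e^\p - \mu_1$ need not have a definite sign, but the invariant form of the Harnack inequality for $-\Delta + c$ with $c$ bounded (cf.\ Theorem 8.20 of \cite{GT}) handles this without modification.
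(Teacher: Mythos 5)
Your proposal is correct and follows essentially the same variational route as the paper: an energy identity using $f\ge m$ on $B_p(\e)$ for triviality, and the Rayleigh-quotient characterization of $\mu_1$ for positivity. The only difference is cosmetic: the paper runs the first step quantitatively (a Poincar\'e-type inequality giving $\|w\|_{L^2}\le C\|L_0w\|_{L^2}$, which also yields the $H^1$-norm equivalence behind $\mu_1>0$) and then cites Theorem 8.38 of \cite{GT} for simplicity and the positive eigenfunction, whereas you prove those spectral facts directly via the $|\psi_1|$ trick, the Harnack inequality, and the standard quotient argument --- all of which are sound.
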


\begin{proof} Integrating $-wL_0w$ by parts on $\mathbb S^2$, we have 
$$\int_{\mathbb S^2}\Big(|\nab_\gamma w|^2+\f12fe^{\phi}w^2\Big)d\o= -\int_{\mathbb S^2}wL_0wd\o.$$
By restricting the second integral to $B_{p}(\e)$, using $f\geq m$ in $B_p(\e)$ in \eqref{eq-assumption-lower-bound-f},  we get 
$$\int_{\mathbb S^2}|\nab_\gamma w|^2d\o+\f12{m}\int_{B_{p}(\e)}w^2d\o\le -\int_{\mathbb S^2}wL_0wd\o.$$
It is easy to verify 
$$\int_{\mathbb S^2}w^2d\o\le C\Big[\int_{\mathbb S^2}|\nab_\gamma w|^2d\o+\int_{B_{p}(\e)}w^2d\o\Big].$$
Hence, 
$$\int_{\mathbb S^2}w^2d\o\le C\int_{\mathbb S^2}|wL_0w|d\o.$$
A simple application of the Cauchy-Schwarz inequality implies 
$$\int_{\mathbb S^2}w^2d\o\le C\int_{\mathbb S^2}|L_0w|^2d\o.$$
Hence, $L_0w=0$ implies $w=0$.

Set 
$$\langle w_1, w_2\rangle =\int_{\mathbb S^2}\Big(\nab_\gamma w_1\cdot\nab_\gamma w_2+\f12fe^{\phi}w_1w_2\Big)d\o.$$
The arguments above also demonstrate that $\langle w_1, w_2\rangle$ is an inner-product on $H^1(\mathbb S^2)$, 
which induces a norm equivalent to the standard $H^1(\mathbb S^2)$-norm. 
{\color{black} The standard $L^2$-theory of elliptic equations applies to the operator $L_0$. 
In particular, the first eigenvalue of $-L_0$ is characterized by
$$\mu_1=\inf\Big\{\int_{\mathbb S^2}\Big(|\nab_\gamma w|^2+\f12fe^{\phi}w^2\Big)d\o; \int_{\mathbb S^2}w^2d\o=1\Big\}.$$
Hence, $\mu_1>0$. Moreover, $\mu_1$ is simple and has a positive eigenfunction. 
Refer to Theorem 8.38 in \cite{GT}. 
We need to point out that Theorem 8.38 in \cite{GT} is formulated for self-adjoint uniformly elliptic operators 
on bounded domains in the Euclidean space with zero Dirichlet data. 
However, the same conclusion holds for self-adjoint uniformly elliptic operators on closed manifolds.} 
\end{proof} 

We now prove the crucial invertibility of the linearized operators.  

\begin{lemma}\label{lemma-linearized-op-invert}
{\color{black} Let $\beta\in (0,1)$ be a fixed constant, and 
$f$ and $\p$ be given nonnegative functions on $\mathbb S^2$, 
with \eqref{eq-assumption-lower-bound-f} satisfied and 
$|\p|_{C^2(\mathbb S^2)}\le K$ for some positive constant $K$.
Then,  
$\partial_{\p}\mathcal S(\p): C^{2, \beta}(\mathbb{S}^2)\rightarrow C^{{0},\beta}(\mathbb{S}^2)$
is invertible, for $a$ sufficiently large depending on $K$.}
\end{lemma}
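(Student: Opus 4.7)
The plan is to combine the positive eigenfunction $\psi_1$ from Lemma \ref{lemma-H1-estimates-L0} with the smallness of the lower-order coefficients $c$ and $c^i$ coming from Proposition \ref{form of linearization} to obtain a strong maximum principle for the full linearized operator $\partial_\phi\mathcal S(\phi)$. Once injectivity is established in this quantitative form, the Schauder theory and the Fredholm alternative on the closed manifold $\mathbb S^2$ will promote this to an isomorphism statement in H\"older spaces.

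First I would apply $\partial_\phi\mathcal S(\phi)$ to the first eigenfunction $\psi_1$ of $-L_0$ constructed in Lemma \ref{lemma-H1-estimates-L0}. By the expression \eqref{eq-linearization-S} and the definition of $\mu_1$,
\begin{equation*}
\partial_\phi\mathcal S(\phi)[\psi_1]=L_0\psi_1+c^i\nabla_i\psi_1+c\,\psi_1=-\mu_1\psi_1+c^i\nabla_i\psi_1+c\,\psi_1.
\end{equation*}
Because $\psi_1>0$ is a smooth fixed function on $\mathbb S^2$ (depending only on $f$ and $\phi$), there are positive constants $c_0(f,\phi)$ and $C_0(f,\phi)$ with $c_0\le \psi_1\le C_0$ and $|\nabla_\gamma\psi_1|\le C_0$. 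Using the assumption $|\phi|_{C^2(\mathbb S^2)}\le K$, Proposition \ref{form of linearization} gives $|c|+|c^i|\le Ca^{-1/3}e^K(K^2+1)$. Hence for $a$ sufficiently large, depending on $K$, $\mu_1$, $c_0$ and $C_0$,
\begin{equation*}
\partial_\phi\mathcal S(\phi)[\psi_1]\le -\tfrac12\mu_1\psi_1<0\quad\text{on }\mathbb S^2.
\end{equation*}

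Next I would invoke the strong maximum principle in the form of Theorem 2.11 in \cite{HL}: the existence of a strictly positive supersolution $\psi_1$ whose image is uniformly negative forces the quantitative bound
\begin{equation*}
\max_{\mathbb S^2}|w|\le C\max_{\mathbb S^2}|\partial_\phi\mathcal S(\phi)[w]|
\end{equation*}
for every $w\in C^2(\mathbb S^2)$, where $C$ depends on $f$, $\phi$, $K$ and $a$ but not on $w$. Concretely, one compares $w$ with $\pm M\psi_1$ for $M=C\max_{\mathbb S^2}|\partial_\phi\mathcal S(\phi)[w]|/\mu_1$ and applies the maximum principle to the difference, whose zeroth-order coefficient has a sign after absorbing the small terms $c,c^i$ into the strictly negative principal part $L_0\psi_1=-\mu_1\psi_1$. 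In particular $\partial_\phi\mathcal S(\phi)[w]=0$ forces $w\equiv 0$, so the operator is injective.

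Finally, I would combine this $L^\infty$-bound with the standard interior Schauder estimate on the closed manifold $\mathbb S^2$, applied to the uniformly elliptic operator $\partial_\phi\mathcal S(\phi)$ with $C^{0,\beta}$ coefficients, to obtain
\begin{equation*}
|w|_{C^{2,\beta}(\mathbb S^2)}\le C\bigl(|w|_{L^\infty(\mathbb S^2)}+|\partial_\phi\mathcal S(\phi)[w]|_{C^{0,\beta}(\mathbb S^2)}\bigr)\le C|\partial_\phi\mathcal S(\phi)[w]|_{C^{0,\beta}(\mathbb S^2)}.
\end{equation*}
Since $\partial_\phi\mathcal S(\phi)$ is a linear elliptic operator of second order on the closed manifold $\mathbb S^2$, it is Fredholm of index zero as a map $C^{2,\beta}(\mathbb S^2)\to C^{0,\beta}(\mathbb S^2)$; injectivity therefore yields surjectivity, and the closed-graph theorem (equivalently, the a priori estimate just displayed) gives a bounded inverse. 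The main obstacle is making the use of $\psi_1$ as a barrier truly quantitative: one must choose $a$ large enough, depending on $K$ through $|\phi|_{C^2}$ and through the fixed data $\psi_1,\mu_1$ associated with $f,\phi$, so that the perturbation $c^i\nabla_i+c$ cannot destroy the sign produced by $-\mu_1\psi_1$; the explicit bound $|c|+|c^i|\lesssim a^{-1/3}e^K(K^2+1)$ from Proposition \ref{form of linearization} is exactly what makes this possible.
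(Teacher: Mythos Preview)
Your proposal is correct and follows essentially the same route as the paper: apply the linearized operator to the positive first eigenfunction $\psi_1$ of $-L_0$, use the smallness of $c,c^i$ from Proposition \ref{form of linearization} together with $|\phi|_{C^2}\le K$ to make $\partial_\phi\mathcal S(\phi)[\psi_1]<0$ for $a$ large, invoke Theorem 2.11 in \cite{HL} for the $L^\infty$ bound, and then upgrade via Schauder. Your explicit appeal to the Fredholm index-zero property to pass from injectivity to surjectivity is a helpful elaboration of a step the paper leaves implicit.
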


\begin{proof} Let $L_0$ be the operator as in \eqref{eq-linearization-model}. 
By Lemma \ref{lemma-H1-estimates-L0},  we can take a positive 
eigenfunction $\psi_1$  corresponding to the first eigenvalue $\mu_1>0$; namely, 
$$\D_{\gamma} \psi_1-\f12fe^{\p}\psi_1=-\mu_1\psi_1.$$
By Proposition \ref{form of linearization}, we have 
\begin{align*}
\partial_{\p} \mathcal S(\p)[\psi_1]
&=\Delta_{\gamma}\psi_1-\f12fe^{\p}\psi_1+c^{i}\nab_i \psi_1+c\psi_1\\
&=-\mu_1\psi_1+c^{i}\nab_i \psi_1+c\psi_1.
\end{align*}
Hence, $\partial_{\p} \mathcal S(\p)[\psi_1]<0$ on $\mathbb S^2$ by choosing $a$ appropriately large depending on $K$.
By Theorem 2.11 and its proof in \cite{HL}, the strong maximum principle holds for the operator $\partial_{\p}\mathcal S(\p)$. 
Thus,   for any $w\in C^2(\mathbb S^2)$,
\begin{equation}\label{eq-estimate-linearization}
\max_{\mathbb S^2}|w|\le C\max_{\mathbb S^2}|\partial_{\p}\mathcal S(\p)[w]|,\end{equation}
where $C$ is a positive constant independent of $w$. 
By the Schauder theory, we further obtain for any $w\in C^{2,\beta}(\mathbb S^2)$ 
\begin{equation*}
|w|_{C^{2,\beta}(\mathbb S^2)}\le C|\partial_{\p}\mathcal S(\p)[w]|_{C^{0,\beta}(\mathbb S^2)},\end{equation*}
for some positive constant $C$ independent of $w$. Therefore,   
$\partial_{\p}\mathcal S(\p): C^{2, \beta}(\mathbb{S}^2)\rightarrow C^{{0},\beta}(\mathbb{S}^2)$
is invertible.
\end{proof} 

In \eqref{eq-estimate-linearization}, the constant $C$ may depend on the given function $\p$. 
The dependence of $C$ on $\p$ arises from the dependence of the eigenfunction $\psi_1$ on $\p$. This is because the operator $L_0$ depends on $\p$ via the factor $e^{\p}$ and the metric $\gamma$, 
which is parametrized by $u=1-\ub ae^{-\p}$. 
Moreover, this constance $C$ may also depend on $\ub$ through the function $f=f(\o, \ub)$. In the next result, we will eliminate such dependence on specific $\p$ and $\ub$.

\begin{lemma}\label{lemma-linearized-apriori-estimates}
Let $f$ and $\p$ be given nonnegative functions on $\mathbb S^2$, 
with \eqref{eq-assumption-lower-bound-f} satisfied and 
$|\p|_{C^2(\mathbb S^2)}\le K$ for some positive constant $K$.
Then, for $a$ sufficiently large depending on $K$ and for any $w\in C^2(\mathbb S^2)$, it holds
 \begin{equation}\label{eq-estimate-linearization-v2}
\max_{\mathbb S^2}|w|\le C\max_{\mathbb S^2}|\partial_{\p}\mathcal S(\p)[w]|,\end{equation}
where $C$ is a positive constant depending only on {\color{black} $m$ and $\e$ in \eqref{eq-assumption-lower-bound-f} and $K$, 
independent of specific $\p$ and $\ub\in (0,\delta]$.}
\end{lemma}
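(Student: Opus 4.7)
The plan is to upgrade the constant in \eqref{eq-estimate-linearization} of Lemma \ref{lemma-linearized-op-invert} to one depending only on $m$, $\epsilon$, and $K$ by producing quantitative uniform bounds on the first eigenvalue $\mu_1$ of $-L_0$ and on a normalized first eigenfunction $\psi_1$, and then repeating the comparison argument of Lemma \ref{lemma-linearized-op-invert} with these uniform quantities in place of the abstract ones. The essential point is that under $|\p|_{C^2(\mathbb S^2)}\le K$, all coefficients appearing in $L_0=\D_\gamma-\tfrac12 fe^{\p}$ and in the linearization of $\mathcal S$ are controlled in H\"older norms purely in terms of $m,\epsilon,K$ and the fixed ambient data from \eqref{Lambda-gamma-1}--\eqref{Lambda-gamma-3} and Appendix \ref{appendix-A}, so the constants in the classical elliptic theorems applied below can be chosen uniformly in $\p$ and $\ub$.

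First I would establish a uniform lower bound $\mu_1\ge c_0=c_0(m,\epsilon,K)>0$. By the Rayleigh quotient argument already used in Lemma \ref{lemma-H1-estimates-L0}, for any $w\in H^1(\mathbb S^2)$ with $\|w\|_{L^2}=1$, one has
\[
\int_{\mathbb S^2}\Big(|\nab_\gamma w|^2+\tfrac12 fe^{\p}w^2\Big)d\o\ge \int_{\mathbb S^2}|\nab_\gamma w|^2 d\o+\frac{m\,e^{-K}}{2}\int_{B_p(\epsilon)}w^2\,d\o,
\]
using $f\ge m$ on $B_p(\epsilon)$ and $\p\ge -K$. A standard Poincar\'e-type inequality on $\mathbb S^2$, with constant depending only on $\epsilon$ and the ellipticity of $\gamma$, bounds the right-hand side below by a positive constant depending only on $m,\epsilon,K$, yielding the uniform lower bound on $\mu_1$.

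Next I would normalize the positive eigenfunction by $\max_{\mathbb S^2}\psi_1=1$. Since $\psi_1$ solves the uniformly elliptic equation $-\D_\gamma\psi_1+\tfrac12 fe^{\p}\psi_1=\mu_1\psi_1$ with coefficients bounded in $C^{0,\beta}$ in terms of $K$ via \eqref{Lambda-gamma-1}--\eqref{Lambda-gamma-3}, the Harnack inequality gives $\psi_1\ge c_1(m,\epsilon,K)>0$ on $\mathbb S^2$, and Schauder estimates give $|\psi_1|_{C^{2,\beta}(\mathbb S^2)}\le C(m,\epsilon,K)$. Combining this with the bound $|c|+|c^i|\le Ca^{-1/3}e^K(K^2+1)$ from Proposition \ref{form of linearization}, for $a$ sufficiently large depending on $K,m,\epsilon$ we obtain
\[
\partial_\p \mathcal S(\p)[\psi_1]=-\mu_1\psi_1+c^i\nab_i\psi_1+c\psi_1\le -\tfrac12\mu_1\psi_1,
\]
so that $\partial_\p \mathcal S(\p)[\psi_1]/\psi_1\le -c_0/2$ is uniformly strictly negative.

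Finally, for an arbitrary $w\in C^2(\mathbb S^2)$ I would set $v=w/\psi_1$. A direct computation shows
\[
\frac{\partial_\p \mathcal S(\p)[\psi_1 v]}{\psi_1}=\tilde L v,
\]
where $\tilde L$ is a second-order linear operator whose leading part is $\gamma^{ij}\partial_{ij}$, whose drift has uniformly bounded coefficients, and whose zero-order coefficient equals $\partial_\p \mathcal S(\p)[\psi_1]/\psi_1\le -c_0/2<0$. The classical maximum principle for operators with strictly negative zero-order term gives $\max_{\mathbb S^2}|v|\le (2/c_0)\max_{\mathbb S^2}|\partial_\p \mathcal S(\p)[w]/\psi_1|$; multiplying back by $\psi_1$ and using $c_1\le\psi_1\le 1$ yields
\[
\max_{\mathbb S^2}|w|\le \frac{2}{c_0 c_1}\max_{\mathbb S^2}|\partial_\p \mathcal S(\p)[w]|,
\]
which is \eqref{eq-estimate-linearization-v2} with $C$ depending only on $m,\epsilon,K$. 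The main technical obstacle is ensuring the quantitative Harnack and Schauder constants for $\psi_1$ genuinely depend only on the data and $K$, not on the specific $\p$ or $\ub$; this is precisely where the uniform bounds \eqref{Lambda-gamma-1}--\eqref{Lambda-gamma-3} together with Appendix \ref{appendix-A} are crucial, since they ensure the coefficients of the eigenvalue equation admit uniform $C^{0,\beta}$ control in terms of $K$ alone.
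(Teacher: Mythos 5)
Your proposal is correct in substance, but it reaches uniformity in $\p$ and $\ub$ by a genuinely different route than the paper. You keep the $\p$- and $\ub$-dependent eigenvalue problem for $L_0=\D_\gamma w-\f12 fe^{\p}w$ and make its first eigenpair quantitatively uniform: a lower bound $\mu_1\ge c_0(m,\e,K)$ from the Rayleigh quotient plus a Poincar\'e inequality whose constant is uniform by the ellipticity bounds \eqref{Lambda-gamma-1}--\eqref{Lambda-gamma-3}, a uniform positive lower bound for the normalized $\psi_1$ from Harnack, and a gradient bound from elliptic regularity; you then run the comparison $\partial_\p\mathcal S(\p)[\psi_1]\le-\tfrac12\mu_1\psi_1$ and make the generalized maximum principle explicit via $v=w/\psi_1$, which yields the explicit constant $2/(c_0c_1)$. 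The paper instead sidesteps all quantitative eigenfunction estimates by replacing $L_0$ with a \emph{fixed} comparison operator $L_*w=\D_{\gamma_0}w-\f12 f_0 w$, where $\gamma_0$ is the round metric and $f_0$ is a fixed smooth minorant of $f(\cdot,\ub)$ depending only on $m$ and $\e$; its eigenpair is then automatically independent of $\p$ and $\ub$, and the comparison uses only $\p\ge 0$, $f\ge f_0$, the Laplacian-difference estimate of Lemma \ref{lemma-difference-Laplacians}, and the coefficient bounds of Proposition \ref{form of linearization}, after which Theorem 2.11 of \cite{HL} gives \eqref{eq-estimate-linearization-v2}. Your approach is more self-contained (it does not even invoke Lemma \ref{lemma-difference-Laplacians}) and produces an explicit constant; the paper's is more economical, needing no Harnack or regularity theory for the eigenfunction. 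One caveat in your write-up: the asserted bound $|\psi_1|_{C^{2,\beta}(\mathbb S^2)}\le C(m,\e,K)$ would require H\"older control of the zero-order coefficient $\f12 fe^{\p}$, hence of $f$ itself, which is not determined by $m$, $\e$, $K$; but all your argument actually uses is $|\nab\psi_1|\le C(m,\e,K)$, which follows from $C^{1,\alpha}$ estimates treating $(\f12 fe^{\p}-\mu_1)\psi_1$ as a bounded right-hand side (using $0\le f\le 1$, $0\le\p\le K$, and the trivial upper bound $\mu_1\le \f12 e^{K}$), so the proof goes through with that adjustment.
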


\begin{proof} 
We first choose a smooth nonnegative function $f_0=f_0(\o)$ on $\mathbb S^2$ such that $f_0\ge m/2$ on $B_p(\e)$ and $f(\cdot, \ub)\ge f_0$ on $\mathbb S^2$ for any $\ub\in (0,\delta]$.
We take the standard spherical metric $\gamma_0$ on $\mathbb S^2$.  
{\color{black} Instead of  $L_0$ in \eqref{eq-linearization-model}, we consider 
$$L_*w=\D_{\gamma_0} w-\f12f_0w.$$
By applying Lemma \ref{lemma-H1-estimates-L0} to $L_*$,}  we can take a positive 
eigenfunction $\psi_1$  corresponding to the first eigenvalue $\mu_1>0$; namely, 
$$\D_{\gamma_0} \psi_1-\f12f\psi_1=-\mu_1\psi_1.$$
{\color{black} We point out that $\mu_1$ and $\psi_1$ depend only on $m$ and $\e$, independent of $\ub\in (0,\delta]$.} By $\p\ge 0$, {\color{black} $f\ge f_0$}, Lemma \ref{lemma-difference-Laplacians}, and Proposition \ref{form of linearization}, we have 
\begin{align*}
\partial_{\p} \mathcal S(\p)[\psi_1]
&=\Delta_{\gamma_0}\psi_1-\f12fe^{\p}\psi_1+\Delta_{\gamma}\psi_1-\Delta_{\gamma_0}\psi_1+c^{i}\nab_i \psi_1+c\psi_1\\
&\le\Delta_{\gamma_0}\psi_1-\f12f_0\psi_1+\Delta_{\gamma}\psi_1-\Delta_{\gamma_0}\psi_1+c^{i}\nab_i \psi_1+c\psi_1\\
&=-\mu_1\psi_1+\Delta_{\gamma}\psi_1-\Delta_{\gamma_0}\psi_1+c^{i}\nab_i \psi_1+c\psi_1\le -{\color{black} c(m, \e, K)}<0,
\end{align*}
via choosing $a$ appropriately large depending on {\color{black} $m$, $\e$, and $K$}.
By Theorem 2.11 and its proof in \cite{HL}, the strong maximum principle holds, and in particular we obtain \eqref{eq-estimate-linearization-v2}.
\end{proof}

We further prove the following result.   

\begin{lemma}\label{lemma-comparison}
Let $f$ and $\p_1, \p_2$ be given nonnegative functions on $\mathbb S^2$, 
with \eqref{eq-assumption-lower-bound-f} satisfied and 
$|\p_i|_{C^2(\mathbb S^2)}\le K$ for some positive constant $K$
and $i=1,2$.
If $\mathcal S(\p_1)\ge \mathcal S(\p_2)$, 
then either $\p_1<\p_2$ or $\p_1\equiv\p_2$ on $\mathbb S^2$. 
\end{lemma}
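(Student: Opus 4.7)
\textbf{Proof proposal for Lemma \ref{lemma-comparison}.} The strategy is to reduce the nonlinear comparison to a linear strong maximum principle via integrating the linearization along the segment joining $\phi_2$ and $\phi_1$, and then to invoke the positive-supersolution trick already used in the proof of Lemma \ref{lemma-linearized-op-invert}.

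First I would set $w=\phi_1-\phi_2$ and $\phi_t=\phi_2+tw$ for $t\in[0,1]$. Since $\phi_1,\phi_2\ge 0$ and $|\phi_i|_{C^2(\mathbb S^2)}\le K$, the family $\{\phi_t\}$ lies in the same $C^2$-ball and remains nonnegative. The fundamental theorem of calculus gives
\begin{equation*}
\mathcal S(\phi_1)-\mathcal S(\phi_2)=\int_0^1\partial_\phi\mathcal S(\phi_t)[w]\,dt\,=:\,\mathcal L[w].
\end{equation*}
By Proposition \ref{form of linearization}, each $\partial_\phi\mathcal S(\phi_t)$ is a linear elliptic operator $\Delta_{\gamma_t}w-\tfrac12 f e^{\phi_t}w+c^i_t\nabla_i w+c_tw$, with the lower-order coefficients controlled by $Ca^{-1/3}e^{\phi_t}(|\nabla_{\gamma_t}^2\phi_t|+|\nabla_{\gamma_t}\phi_t|^2+1)\le Ca^{-1/3}e^K(K^2+K+1)$. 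Averaging in $t$, the operator $\mathcal L$ takes the form $\bar a^{ij}\nabla_{ij}+\bar b^i\nabla_i+\bar c$, is uniformly elliptic with ellipticity constants depending only on $\Lambda$, and has coefficient bounds depending only on $m,\e,K$. The hypothesis of the lemma becomes $\mathcal L[w]\ge 0$ on $\mathbb S^2$.

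Next I would build a positive supersolution for $\mathcal L$, paralleling the argument of Lemma \ref{lemma-linearized-apriori-estimates}. Choose $f_0$ and the standard metric $\gamma_0$ as there and let $\psi_1>0$ be the first eigenfunction of $\Delta_{\gamma_0}-\tfrac12 f_0$, with eigenvalue $-\mu_1<0$ (Lemma \ref{lemma-H1-estimates-L0}). Using $f\ge f_0$, $\phi_t\ge 0$, and Lemma \ref{lemma-difference-Laplacians} to pass from $\Delta_{\gamma_t}$ to $\Delta_{\gamma_0}$, together with the small coefficients $\bar c^i,\bar c$ above, one obtains
\begin{equation*}
\mathcal L[\psi_1]\le -\tfrac{\mu_1}{2}\psi_1\quad\text{on }\mathbb S^2,
\end{equation*}
provided $a$ is taken large enough depending on $m,\e,K$. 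This is the crucial qualitative input that makes the strong maximum principle applicable to $\mathcal L$ even though $\bar c$ has no definite sign.

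With the positive supersolution in hand, I would conclude by the substitution $v=w/\psi_1$. A direct computation (the product-rule identity $\mathcal L[\psi_1 v]=\psi_1 M[v]+v\mathcal L[\psi_1]$, where $M$ contains only second- and first-order terms) converts $\mathcal L[w]\ge 0$ into $M[v]-\mu(\o) v\ge 0$ with $\mu=-\mathcal L[\psi_1]/\psi_1\ge \mu_1/2>0$. Applying the strong maximum principle (Theorem 2.11 of \cite{HL}) to the operator $M-\mu$, which has strictly negative zero-order part, on the closed manifold $\mathbb S^2$: if $v$ attains a nonnegative maximum, $v$ must be constant, and plugging a constant into $M[v]-\mu v\ge 0$ forces this constant to be $\le 0$, hence $0$; otherwise $v<0$ everywhere. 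Translating back through $w=\psi_1 v$ yields the dichotomy $\phi_1\equiv\phi_2$ or $\phi_1<\phi_2$ on $\mathbb S^2$.

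The main obstacle is the second step: verifying that the averaged operator $\mathcal L$ inherits a positive supersolution with a quantitative spectral gap, uniformly in the pair $(\phi_1,\phi_2)$. The key point is that the metrics $\gamma_t$, the lower-order coefficients $c^i_t,c_t$, and the exponential factors $e^{\phi_t}$ are all controlled by universal quantities (the a priori $C^2$-bound $K$, the constants $m,\e$ in \eqref{eq-assumption-lower-bound-f}, the geometric estimates of Appendix \ref{appendix-A}), so that after choosing $a$ large the perturbation of $\Delta_{\gamma_0}-\tfrac12 f_0$ appearing in $\mathcal L$ is dominated by $\mu_1$.
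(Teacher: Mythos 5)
Your proposal is correct and follows essentially the same route as the paper: write $\mathcal S(\p_1)-\mathcal S(\p_2)=\int_0^1\partial_{\p}\mathcal S(t\p_1+(1-t)\p_2)[\p_1-\p_2]\,dt$, observe the averaged operator has the structure of Proposition \ref{form of linearization} with the same coefficient bounds, produce a positive strict supersolution as in Lemmas \ref{lemma-H1-estimates-L0}--\ref{lemma-linearized-apriori-estimates} for $a$ large, and invoke the strong maximum principle of Theorem 2.11 in \cite{HL}. The only difference is that you unwind the mechanism behind that theorem (the $v=w/\psi_1$ substitution), which the paper simply cites.
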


\begin{proof} For the given $\p_1$ and $\p_2$, we write 
$$\mathcal L(\p_1-\p_2)\equiv\mathcal S(\p_1)-\mathcal S(\p_2)=\int_0^1\partial_{\p} \mathcal S(t\p_1+(1-t)\p_2)[\p_1-\p_2]dt.$$
Here,  $\mathcal L$ is viewed as a linear operator acting on the difference $\p_1-\p_2$. The operator $\int_0^1 \partial_{\phi}\mathcal S(t\p_1+(1-t)\p_2)dt$ has a similar structure to the linearized operator $\partial_{\phi}\mathcal S(\phi)$ in Proposition \ref{form of linearization}. The corresponding Lemmas \ref{lemma-H1-estimates-L0}-\ref{lemma-linearized-apriori-estimates} can be proved in the same manner. By the assumption, we get $\mathcal L(\p_1-\p_2)\ge 0$. As in the proof of Lemma \ref{lemma-linearized-op-invert}, by Theorem 2.11 in \cite{HL}, the strong maximum principle holds for the operator $\mathcal L$. We hence obtain the desired result.
\end{proof}

\begin{remark}
As a consequence, the equation $\mathcal S(\p)=0$ admits at most one solution $\p$ with the property 
$|\p|_{C^2(\mathbb S^2)}\le K$.  
\end{remark}

\section{The Existence and Properties of Solutions}\label{sec-existence-solutions}  

In this section, for each small $\ub>0$ we construct a MOTS along $\Hb_{\ub}$, and we
further prove that, with $\ub$ as a parameter, these MOTSs form a smooth apparent horizon. 
Mathematically, this is equivalent to first solving the elliptic equation $\mathcal S(\phi, \ub)=0$
with the operator $\mathcal S$ defined in \eqref{eq-expression-cal-S-ub-v2}, and then to proving 
that the resulted solution is also smooth in terms of $\ub$.

We first fix some positive constants. 
Let $\kappa$ be the maximum of the supersolution $\overline{\phi}$ constructed in Lemma  \ref{lemma-super-solutions-phi}
and take an arbitrary constant $K$ such that 
$$K\ge\max\{C(\kappa), |\overline{\p}|_{C^2(\mathbb S^2)}\},$$
where $C(\kappa)$ is the bound established in Theorem \ref{theorem-Schauder-phi}; namely, 
$$C(\kappa)=(Ce^{\kappa})^{Ce^{4\Lambda\kappa}}.$$
Then, we choose $a$ sufficiently large depending on $\kappa$, according to Lemma \ref{lemma-linearized-op-invert}.

We are now ready to prove the existence of MOTSs on each fixed $\Hb_{\ub}$. 

\begin{theorem}\label{theorem-existence-solutions}
Let $\ub$ be a fixed constant in $(0,\delta]$, $f$ be a given smooth function on $\mathbb S^2$, 
satisfying \eqref{eq-assumption-f-0-1} and \eqref{eq-assumption-lower-bound-f}, 
and $\underline{\p}$ and  $\overline{\p}$ 
be as in Lemma \ref{lemma-sub-solution-phi} and Lemma \ref{lemma-super-solutions-phi}, respectively.  
Then, there exists a unique smooth solution ${\p}=\p(\o,\ub)$ of $\mathcal S(\p, \ub)=0$, satisfying 
$\underline{\p}< \p(\cdot, \ub)<\overline{\p}$ on $\mathbb S^2$. 
\end{theorem}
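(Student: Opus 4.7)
The plan splits into uniqueness and existence. Uniqueness is essentially Lemma \ref{lemma-comparison}: two solutions $\phi_1,\phi_2$ with $\underline{\phi}\le\phi_i\le\overline{\phi}$ share a common $C^2$-bound from Theorem \ref{theorem-Schauder-phi}, so the lemma applied in both orderings yields $\phi_1\equiv\phi_2$.

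For existence I would run the method of continuity along the family
\begin{equation*}
\mathcal S_\tau(\phi)\;=\;\mathcal S(\phi)\;-\;(1-\tau)\,\mathcal S(\underline{\phi}),\qquad \tau\in[0,1],
\end{equation*}
so that $\mathcal S_1=\mathcal S(\cdot,\ub)$ is the target equation while $\phi\equiv\underline{\phi}$ solves $\mathcal S_0(\phi)=0$. Lemma \ref{lemma-sub-solution-phi} and Lemma \ref{lemma-super-solutions-phi} give $\mathcal S(\underline{\phi})>0$ and $\mathcal S(\overline{\phi})<0$ on $\mathbb S^2$, hence
\begin{equation*}
\mathcal S_\tau(\underline{\phi})=\tau\,\mathcal S(\underline{\phi})\ge 0,\qquad \mathcal S_\tau(\overline{\phi})=\mathcal S(\overline{\phi})-(1-\tau)\,\mathcal S(\underline{\phi})<0
\end{equation*}
for every $\tau\in[0,1]$. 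Crucially the deformation is independent of $\phi$, so $\partial_\phi\mathcal S_\tau=\partial_\phi\mathcal S$ and Lemma \ref{lemma-linearized-op-invert} applies verbatim at every $\tau$.

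Fix $\beta\in(0,1/3)$ and set
\begin{equation*}
T=\bigl\{\tau\in[0,1]:\ \exists\,\phi_\tau\in C^{2,\beta}(\mathbb S^2),\ \mathcal S_\tau(\phi_\tau)=0,\ \underline{\phi}\le\phi_\tau\le\overline{\phi}\bigr\}.
\end{equation*}
Clearly $0\in T$. Openness: at any $\tau_0\in T$, Theorem \ref{theorem-Schauder-phi} bounds $\phi_{\tau_0}$ in $C^{2,1/3}$ and Lemma \ref{lemma-linearized-op-invert} inverts $\partial_\phi\mathcal S(\phi_{\tau_0}):C^{2,\beta}\to C^{0,\beta}$, so the implicit function theorem produces a $C^{2,\beta}$ curve $\tau\mapsto\phi_\tau$ of solutions near $\tau_0$; since $\mathcal S(\phi_\tau)=(1-\tau)\mathcal S(\underline{\phi})$ lies between $\mathcal S(\overline{\phi})$ and $\mathcal S(\underline{\phi})$, two applications of Lemma \ref{lemma-comparison} keep $\phi_\tau$ between $\underline{\phi}$ and $\overline{\phi}$, strictly so for $\tau>0$. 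Closedness: for $\tau_k\to\tau_*$, the $L^\infty$-bound $0\le\phi_{\tau_k}\le\kappa$ and Theorem \ref{theorem-Schauder-phi} yield a $\tau$-uniform $C^{2,1/3}$ estimate, so Arzel\`a--Ascoli extracts a $C^{2,\beta}$ limit $\phi_{\tau_*}$ solving $\mathcal S_{\tau_*}(\phi_{\tau_*})=0$ within the barriers. Hence $T=[0,1]$, and $\phi=\phi_1$ is the desired solution; smoothness on $\mathbb S^2$ follows by a standard elliptic bootstrap on $\mathcal S(\phi,\ub)=0$.

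The main obstacle is to verify that the estimates of Theorem \ref{theorem-Schauder-phi} and Lemma \ref{lemma-linearized-op-invert} hold uniformly along the deformation. The invertibility is automatic since the perturbation is independent of $\phi$. For the Schauder estimate one must note that the added term $-(1-\tau)\mathcal S(\underline{\phi})$ is a $\tau$-bounded smooth function on $\mathbb S^2$ that only enters as an inhomogeneity, leaving the structural ingredients of Theorem \ref{theorem-Schauder-phi}, namely the natural growth of $F$, the ellipticity constants $\Lambda,\Lambda_1,\Lambda_2$ of $\gamma$, and the $L^\infty$-bound $0\le\phi_\tau\le\kappa$ maintained by the barrier pair, unchanged.
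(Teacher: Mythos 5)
Your argument is correct and rests on the same pillars as the paper's proof --- uniqueness from Lemma \ref{lemma-comparison}, openness from the invertibility in Lemma \ref{lemma-linearized-op-invert} plus the implicit function theorem, closedness from the a priori estimates in Theorem \ref{theorem-Schauder-phi}, and the barrier pair from Lemmas \ref{lemma-sub-solution-phi} and \ref{lemma-super-solutions-phi} --- but your continuity path is genuinely different. The paper deforms the coefficients of the equation itself, taking $\mathcal S_\lambda(\p)=\Delta_\gamma\p+1-\frac12\big[\lambda f+(1-\lambda)\big]e^{\p}+\lambda(I_2+I_3+I_4+I_5)$ and starting from the explicit solution $\p_0=\ln 2$ of $\Delta_\gamma\p+1-\frac12 e^{\p}=0$; since $\lambda f+(1-\lambda)\ge m$ on $B_p(\e)$ and each $\lambda I_i$ obeys the same bounds as $I_i$, every equation along that path is literally of the form treated in Sections \ref{sec-Barriers}--\ref{sec-linearized-equations}, so Theorem \ref{theorem-Schauder-phi} and Lemma \ref{lemma-linearized-op-invert} apply without modification. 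You instead subtract the fixed inhomogeneity $(1-\tau)\mathcal S(\underline{\p})$ and start from the subsolution $\underline{\p}$. What this buys you: the linearized operator is unchanged in $\tau$, so Lemma \ref{lemma-linearized-op-invert} applies verbatim and uniformly, and the barrier control plus strictness follow directly from Lemma \ref{lemma-comparison} applied to $\mathcal S$ itself, since $\mathcal S(\p_\tau)=(1-\tau)\mathcal S(\underline{\p})$ sits between $\mathcal S(\overline{\p})<0$ and $\mathcal S(\underline{\p})>0$ (and equality with either barrier is excluded for $\tau>0$). The price, which you correctly flag, is that Theorem \ref{theorem-Schauder-phi} is stated for $\mathcal S(\p)=0$ and must be rechecked for the inhomogeneous equation $\mathcal S(\p)=(1-\tau)\mathcal S(\underline{\p})$; this is routine because $\mathcal S(\underline{\p})=1-\frac34 f+F(\ln(3/2))$ is a fixed smooth function with uniformly bounded $C^{1/3}$-norm by \eqref{eq-estimate-F-v1} and Appendix \ref{appendix-A}, so it only shifts the bounded zero-order data in the Moser iteration and Schauder steps, possibly after enlarging the constant $K$ fixed at the start of Section \ref{sec-existence-solutions}. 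In short, the paper's homotopy keeps every intermediate equation in the exact form of the a priori estimates, while yours keeps the linearization and the original $f$ fixed; both are valid, with comparable bookkeeping.
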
 

\begin{proof} We suppress the dependence on $\ub$ in this proof. 
For any constant $\lambda\in [0,1]$ and 
any $C^2$-function $\p$ on $\mathbb S^2$, set  
$$\mathcal S_\lambda(\p)=\Delta_{\gamma}\p+1-\f12\big[\lambda f+(1-\lambda)]e^{\p}+\lambda(I_2+I_3+I_4+I_5),$$
where $I_i$, for $2\le i\le 5$, are defined in \eqref{eq-expression-S-I-2-5}. 
We have $\mathcal S_1(\phi)=\mathcal S(\p)$ by \eqref{eq-expression-S-I-1-5}. 
We will prove the existence of the solution to $\mathcal S_1(\p)=0$ via the method of continuity. 
For $\lambda=0$, the operator $\mathcal S_0(\p)$ is reduced to 
$$\mathcal S_0(\p)=\Delta_{\gamma}\p+1-\f12e^{\p}.$$
It is obvious that $\p_0=\ln 2$ is a solution to $\mathcal S_0(\p)=0$. For suitably small $\e>0$ {\color{black} as in \eqref{eq-assumption-lower-bound-f}}, note that $\underline{\p}< \p_0<\overline{\p}$ on $\mathbb S^2$. 

Before we proceed to prove the openness and the closedness, we point out 
that results established in Sections \ref{sec-Barriers}-\ref{sec-linearized-equations}  for $\mathcal S(\phi)$ also
hold for $\mathcal S_\lambda (\phi)$ with all $\lambda\in [0,1]$. The proofs are exactly the same with $f$ replaced by $\lambda f+(1-\lambda)$ and with $I_i$ replaced by $\lambda I_i$, where $\lambda\in [0,1]$ and $2\leq i \leq 5$.
In particular, the estimate in Theorem \ref{theorem-Schauder-phi} holds for solutions to $\mathcal S_\lambda(\p)=0$, 
uniformly for $\lambda\in [0,1]$. 

For the openness, we fix a constant $\beta\in (0,1)$, say $\beta=1/3$. 
For some $\lambda\in [0,1)$, assume there exists a solution $\p_{\lambda}\in C^{2,\beta}(\mathbb{S}^2)$
to $\mathcal S_\lambda(\p_{\lambda})=0$, satisfying $\underline{\p}< \p_\lambda<\overline{\p}$ on $\mathbb S^2$.
By Theorem \ref{theorem-Schauder-phi}, we have $|\phi_\lambda|_{C^2(\mathbb S^2)}\le K$. By Lemma \ref{lemma-linearized-op-invert}, 
the operator $\partial_{\p}\mathcal S_\lambda(\p_\lambda): C^{2, \beta}(\mathbb{S}^2)\rightarrow C^{{0},\beta}(\mathbb{S}^2)$ is invertible. 
By the implicit function theorem, the equation $\mathcal S_{\lambda'}(\p)=0$ admits a $C^{2,\beta}$-solution 
$\p=\p_{\lambda'}$ for every $\lambda'$ close to $\lambda$. Moreover, 
$\underline{\p}< \p_{\lambda'}<\overline{\p}$ on $\mathbb S^2$, for any $\lambda'$ even closer to $\lambda$. 

The closedness follows from the estimate established in Theorem \ref{theorem-Schauder-phi}.
Lemma \ref{lemma-comparison} guarantees that the strict inequalities $\underline{\p}< \p<\overline{\p}$ hold in the limiting process of $\lambda$, under the assumption $\underline{\p}\le \p\le\overline{\p}$. 
The openness and closeness together prove the existence of the solution to $\mathcal S_\lambda(\p)=0$, satisfying 
$\underline{\p}< \p<\overline{\p}$ on $\mathbb S^2$, for any $\lambda\in [0,1]$, and in particular for $\lambda=1$. 

Last, we prove the uniqueness of the solution to $\mathcal S(\p)=0$ with $\phi$ satisfying 
$\underline{\p}< \p<\overline{\p}$ on $\mathbb S^2$. 
Let $\p$ be one such solution. 
By Theorem \ref{theorem-Schauder-phi}, we have $|\phi|_{C^2(\mathbb S^2)}\le K$. As a consequence of Lemma \ref{lemma-comparison}, such a solution is unique. 
\end{proof} 

Next, we view the solution $\p$ in Theorem \ref{theorem-existence-solutions} as a function of $\o$ and $\ub$. We now study its regularity in terms of  $\ub$.

\begin{theorem}\label{thrm-smoothness-tubes}  
Under the assumptions in Theorem \ref{theorem-existence-solutions}, 
for each $\ub \in (0,\delta]$, let $\p=\p(\o, \ub)$ be the solution 
of $\mathcal S(\p, \ub)=0$ as in Theorem \ref{theorem-existence-solutions}. 
Then, the function $\p=\p(\o, \ub)$ is smooth in terms of the variables $(\o, \ub)\in \mathbb S^2\times(0,\d]$.
\end{theorem}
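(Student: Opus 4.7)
The plan is to apply the Banach-space implicit function theorem at each $\ub_0 \in (0,\delta]$, using the invertibility of the linearized operator from Lemma \ref{lemma-linearized-op-invert}, and then to bootstrap to joint smoothness via higher-order Schauder regularity. Fix $\beta = 1/3$ and consider the map
$$\mathcal F : C^{2,\beta}(\mathbb S^2) \times (0,\delta] \to C^{0,\beta}(\mathbb S^2), \qquad \mathcal F(\phi, \ub) = \mathcal S(\phi, \ub).$$
From the expressions \eqref{eq-expression-cal-S-ub-v2}--\eqref{eq-expression-cal-S-ub-nonlinear-v2} and the estimates collected in Appendix \ref{appendix-A}, the geometric coefficients $\gamma, \Omega, \chibh, \eta, \alpha_1, \alpha_2$ depend smoothly on $(\o, u, \ub)$ via the spacetime metric constructed in \cite{AL}, and $f$ is smooth in $(\o, \ub)$ by hypothesis; hence $\mathcal F$ is a Fr\'echet $C^\infty$ map of Banach spaces.

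Fix $\ub_0 \in (0,\delta]$ and let $\phi_0 = \phi(\cdot, \ub_0)$ denote the solution furnished by Theorem \ref{theorem-existence-solutions}. Theorem \ref{theorem-Schauder-phi} gives the a priori bound $|\phi_0|_{C^{2,1/3}(\mathbb S^2)} \le K$, so Lemma \ref{lemma-linearized-op-invert} yields that
$$\partial_\phi \mathcal S(\phi_0, \ub_0) : C^{2,\beta}(\mathbb S^2) \to C^{0,\beta}(\mathbb S^2)$$
is a linear isomorphism. The implicit function theorem then produces a neighborhood $V$ of $\ub_0$ in $(0,\delta]$ and a smooth curve $\tilde\phi : V \to C^{2,\beta}(\mathbb S^2)$ with $\tilde\phi(\ub_0) = \phi_0$ and $\mathcal S(\tilde\phi(\ub), \ub) = 0$ on $V$. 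Since $\phi_0$ lies strictly between the barriers $\underline\phi$ and $\overline\phi$ on $\mathbb S^2$, continuity of $\tilde\phi$ in $\ub$ keeps $\tilde\phi(\ub)$ between the barriers for $\ub$ sufficiently close to $\ub_0$, and the uniqueness clause of Theorem \ref{theorem-existence-solutions} forces $\tilde\phi(\ub) = \phi(\cdot, \ub)$ throughout $V$. Covering $(0,\delta]$ by such neighborhoods shows that $\ub \mapsto \phi(\cdot, \ub)$ is globally $C^\infty$ as a $C^{2,\beta}(\mathbb S^2)$-valued map.

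To upgrade to joint regularity on $\mathbb S^2 \times (0,\delta]$, the plan is to repeat the argument in every higher H\"older pair $C^{k,\beta}(\mathbb S^2) \to C^{k-2,\beta}(\mathbb S^2)$ for $k \ge 2$. The spatial $C^\infty$-regularity of $\phi_0$ is obtained by iterating interior Schauder estimates on the quasilinear equation $\mathcal S(\phi_0, \ub_0) = 0$, since all coefficients are smooth in $\o$; then the a priori bound \eqref{eq-estimate-linearization-v2} together with higher-order Schauder estimates applied to the linearization gives the invertibility of $\partial_\phi \mathcal S(\phi_0, \ub_0)$ between the spaces $C^{k,\beta}$ and $C^{k-2,\beta}$. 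Iterating the IFT yields $\partial_\ub^j \phi(\cdot, \ub) \in C^{k,\beta}(\mathbb S^2)$ for every $j, k$, from which a standard mixed-partial argument delivers $\phi \in C^\infty(\mathbb S^2 \times (0,\delta])$. The principal technical point throughout is the smooth dependence of the coefficients of $\mathcal S$ on $\ub$; this is ensured by the smoothness of the spacetime metric established in \cite{AL} and of $f$ in the hypotheses. An equivalent and more concrete route, pursued in the subsequent analysis of spacelikeness, is to differentiate $\mathcal S(\phi, \ub) = 0$ in $\ub$ to obtain the linear elliptic equation \eqref{eq-ub-differentiation} satisfied by $\partial_\ub \phi$ and to bootstrap Schauder regularity directly.
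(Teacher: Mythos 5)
Your proposal is correct and follows essentially the same route as the paper: the paper's proof also views $\ub$ as a parameter and applies the implicit function theorem, with the invertibility of $\partial_{\p}\mathcal S(\p,\ub):C^{2,\beta}(\mathbb S^2)\to C^{0,\beta}(\mathbb S^2)$ from Lemma \ref{lemma-linearized-op-invert} (valid for any $\p$ with $|\p|_{C^2(\mathbb S^2)}\le K$, furnished by Theorem \ref{theorem-Schauder-phi}) as the key ingredient. You simply spell out details the paper leaves implicit, namely the identification of the IFT solution curve with $\p(\cdot,\ub)$ via the barriers and uniqueness, and the higher-order Schauder bootstrap for joint smoothness.
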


\begin{proof} We restore $\ub$ back to our notations. 
By Theorem \ref{theorem-existence-solutions}, 
there exists a unique solution ${\p}=\p(\o,\ub)$ to $\mathcal S(\p, \ub)=0$, smooth in $\o$ and satisfying 
$\underline{\p}< \p(\cdot, \ub)<\overline{\p}$ on $\mathbb S^2$.
By Theorem \ref{theorem-Schauder-phi}, we have $|\phi(\cdot, \ub)|_{C^2(\mathbb S^2)}\le K$. We now let $\ub$ as a new parameter and will employ the method of continuity for one more time. Assume $\phi(\o,\ub)$ is a solution to $\mathcal S(\p,\ub)=0$, we will use the implicit function theorem to obtain a solution $\p(\o,\ub')$ to $\mathcal S(\p,\ub')=0$ with $\ub'$ close to $\ub$. The \textit{key} is the invertibility of $\partial_{\p}\mathcal{S}(\p,\ub): C^{2,\beta}(\mathbb{S}^2)\rightarrow C^{0,\beta}(\mathbb{S}^2)$ with any $\phi$ satisfying $|\phi|_{C^2(\mathbb{S}^2)}\leq K$. And this was \textit{already} proved in Lemma \ref{lemma-linearized-op-invert}. Hence,  $\p(\o, \ub)$ is a smooth function of both $\o$ and $\ub$. 
 \end{proof}

In the rest of this section, we show that $\{u=1-\ub a e^{-\p(\o, \ub)}\}$ is a spacelike hypersurface. A crucial step in this part is to derive the equation satisfied by $\partial_{\ub}\p$ {\color{black} and to obtain an estimate of $\partial_{\ub}\p$}. 

\begin{proposition}\label{prop-equation-partial-ub-phi}  
For each $\ub \in (0,\delta]$, let $\p=\p(\o, \ub)$ be the solution 
to $\mathcal S(\p, \ub)=0$ as in Theorem \ref{theorem-existence-solutions}. 
Then, $\partial_{\ub}\p$ satisfies 
\begin{equation}\label{eq-equation-partial-ub-phi}
\Delta_{\gamma}(\partial_{\ub}\p)-\f12f(\o, \ub)e^{\p}\partial_{\ub}\p+c^{i}\nab_i (\partial_{\ub}\p)+c\partial_{\ub}\p=\frac{1}{\ub}h
+\f12\partial_{\ub}fe^{\p},
\end{equation}
where the connection is with respect to the metric $\gamma$, and 
$c=c(\o,\p,\ub)$, $c^i=c^i(\o,\p,\ub)$ and $h=h(\o,\p,\ub)$
are functions satisfying 
\begin{equation}\label{eq-estimates-partial-ub-coefficients}
|c|+|c^i|+|h|\leq Ca^{-\f13} e^{\p}{\color{black} (|\nabla^2_{\gamma}\p|+|\nabla_\gamma\p|^2+1)}.\end{equation}
\end{proposition}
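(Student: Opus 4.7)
The strategy is to differentiate the identity $\mathcal{S}(\phi(\o,\ub),\ub)=0$ in $\ub$. By the chain rule,
$$\partial_{\p}\mathcal{S}(\phi,\ub)[\partial_{\ub}\phi] \;=\; -\,\partial_{\ub}^{\mathrm{exp}}\mathcal{S}(\phi,\ub),$$
where $\partial_{\ub}^{\mathrm{exp}}$ denotes the $\ub$-derivative taken with $\phi$ frozen as a function of $\o$. By Proposition~\ref{form of linearization}, the left-hand side equals $\Delta_{\gamma}(\partial_{\ub}\phi)-\f12 f e^{\phi}\partial_{\ub}\phi+c^{i}\nab_i(\partial_{\ub}\phi)+c\,\partial_{\ub}\phi$, with coefficients $c,c^i$ already satisfying the claimed bound. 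The remaining task is to rewrite the right-hand side in the form $\f{1}{\ub}h+\f12\partial_{\ub}f\cdot e^{\phi}$ with $h$ obeying the same bound.

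To compute $\partial_{\ub}^{\mathrm{exp}}\mathcal{S}$, I would follow the structural template of Proposition~\ref{form of linearization} but replace $\partial_{\epsilon}|_{0}$ by $\partial_{\ub}|_{\phi}$. The only genuinely new input comes from $u=1-\ub a e^{-\phi}$, which gives $\partial_{\ub}u|_{\phi}=-a e^{-\phi}=-R/\ub$. Consequently, for any spacetime coefficient $Q(\o,u,\ub)$ entering $\mathcal{S}$, we have the chain-rule decomposition
$$\partial_{\ub}Q\big|_{\phi}\;=\;-\tfrac{R}{\ub}\,\partial_{u}Q\;+\;\partial_{\ub}Q\big|_{u,\o}.$$
Applied with $Q=\gamma_{kl}$ and combined with \eqref{eq-variation-g-u}--\eqref{eq-variation-g-ub}, this yields an explicit expression for $\partial_{\ub}\gamma_{kl}|_{\phi}$ as a combination of $\O\tr_g\chib$, $\O\chibh_{kl}$, $\O\tr_g\chi$, $\O\chih_{kl}$, and $\Omega d^{A}\partial_{\theta^{A}}\chi_{kl}$. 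Feeding this into Lemma~\ref{lemma-variation-Laplacians} (with $\dot{\phi}=0$) produces $\partial_{\ub}^{\mathrm{exp}}(\Delta_{\gamma}\phi)$. The same mechanism, applied summand by summand to \eqref{eq-expression-cal-S-ub-nonlinear-v2}, produces $\partial_{\ub}^{\mathrm{exp}}F$ as a linear combination of $\nab^{2}\phi$, $\nab\phi\otimes\nab\phi$, $\nab\phi$ and $1$, with coefficients built from $\gamma^{ij}$ and from the quantities in \eqref{eq-expressions-coefficients} together with their $u$- and $\ub$-derivatives.

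For the $1-\f12 fe^{\phi}$ piece, only the explicit $\ub$-dependence of $f$ contributes at fixed $\phi$, producing $-\f12\partial_{\ub}f\cdot e^{\phi}$; carrying the minus sign across gives the term $\f12\partial_{\ub}f\cdot e^{\phi}$ on the right of \eqref{eq-equation-partial-ub-phi}. All remaining contributions to $-\partial_{\ub}^{\mathrm{exp}}\mathcal{S}$ carry an explicit factor of $\f{1}{\ub}$: those arising from $\partial_{\ub}u|_{\phi}=-R/\ub$ expose it directly, while those arising from $\partial_{\ub}Q|_{u,\o}$ do so via the $\ub$-weighted scale-invariant estimates of Appendix~\ref{appendix-A}, where differentiating in $\ub$ a quantity already controlled by $a^{-1/3}$ generates an extra factor $\ub^{-1}$. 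Factoring this $1/\ub$ out and collecting terms defines $h$; the estimate $|h|\le Ca^{-1/3}e^{\phi}(|\nab_{\gamma}^{2}\phi|+|\nab_{\gamma}\phi|^{2}+1)$ then follows by exactly the same accounting used for $c$ and $c^{i}$ at the end of the proof of Proposition~\ref{form of linearization}.

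The main obstacle will be organizational: verifying that every coefficient produced by the explicit $\ub$-differentiation admits a clean extraction of $\ub^{-1}$ with residual bounded by $a^{-1/3}e^{\phi}(|\nab_{\gamma}^{2}\phi|+|\nab_{\gamma}\phi|^{2}+1)$. This amounts to checking that each Ricci-coefficient and curvature-component bound from Appendix~\ref{appendix-A} remains sharp under one $\partial_{\ub}$-differentiation in the chosen scale-invariant norms, a point that is available from \cite{AL} and essentially reduces to a scale-invariance check; modulo this bookkeeping, the argument is a direct $\ub$-analog of Proposition~\ref{form of linearization}.
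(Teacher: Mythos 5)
Your proposal is correct and follows essentially the same route as the paper: the paper also differentiates $\mathcal S(\p(\cdot,\ub),\ub)=0$ in $\ub$, with the terms multiplying $\partial_{\ub}\p$ reproducing exactly the linearized-operator coefficients of Proposition \ref{form of linearization}, and the remaining explicit-$\ub$ contributions yielding $\frac1{\ub}h+\frac12\partial_{\ub}f\,e^{\p}$ via $\partial_{\ub}R|_{\p}=R/\ub$ together with the $\ub$-weighted estimates \eqref{A1}--\eqref{A3}. Your chain-rule split (linearization plus explicit $\ub$-derivative) merely streamlines the paper's term-by-term computation of $\dot I_1,\dots,\dot I_5$; the substance, including the cancellations such as $R\O\tr_g\chib+2$ being small, is identical.
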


\begin{proof} The proof is similar to that of Proposition \ref{form of linearization}. 
By $\mathcal S(\p, \ub)=0$, we have 
\begin{equation}\label{eq-equation-partial-ub-phi-v1}
\f{d\,}{d\ub}\mathcal S(\p, \ub)=0.\end{equation}
We proceed to compute the left-hand side and express it in terms of $\partial_{\ub}\p$. 
Throughout the proof, we denote $\dot{\,}=\f{d\,}{d\ub}$. 

Recall the relations given by \eqref{eq-relation-g-gamma} and \eqref{eq-relation-u-R-phi}, i.e., 
$$g=R^{2}\gamma, \quad u=1-R, \quad R=\ub a e^{-\p}. $$ 
Then, 
\begin{equation}\label{dot for ub}
\dot{u}=-\dot{R}, \, \dot{R}=R(\ub^{-1}-\dot{\p}).
\end{equation}
Note that $\gamma, \chib, \chi, \eta, \O, \omb$ are functions of $\o, u, \ub$ with $u=1-R$. 
We first derive a general formula. Consider a smooth function $\psi=\psi(\o, u, \ub)$ defined along the apparent horizon. Then, 
$$\dot{\psi}=\dot{u}\partial_u\psi+\partial_{\ub}{\psi}=R(\dot{\p}-\ub^{-1})\partial_u\psi+\partial_{\ub}{\psi},$$
and hence
\begin{align}\label{eq-dot-ub-general}
\dot{\psi}=R\partial_u\psi\dot{\phi}-\ub^{-1}(R \partial_u\psi-\ub \partial_{\ub}{\psi}).
\end{align}
By $g=R^2\gamma$ and $\dot{u}=-\dot{R}$, we get 
\begin{equation*}
R^2\dot{\gamma}_{kl}=-2R\dot{R}\gamma_{kl}-\dot{R}\partial_ug_{kl}+\partial_{\ub}g_{kl}.  
\end{equation*}
Recall \eqref{eq-variation-g-u} and \eqref{eq-variation-g-ub}, i.e., 
\begin{align*}
\partial_ug_{kl}=\O\tr_g\chib g_{kl}+2\O \chibh_{kl},\quad 
\partial_{\ub}g_{kl}=\O\tr_g\chi g_{kl}+2\O \chih_{kl}{\color{black}-\Omega d^A \f{\partial \chi_{kl}}{\partial \theta^A}}.
\end{align*}
Hence,  
\begin{align}\label{eq-dot-ub-metric}\begin{split}
\dot{\gamma}_{kl}&=\dot{\p}\big[(R\O\tr_g\chib+2)\gamma_{kl}+ 2R^{-1}\O \chibh_{kl}\big]\\
&\qquad-\ub^{-1}\big[(R\O\tr_g\chib+2)\gamma_{kl}+ 2R^{-1}\O \chibh_{kl}\\
&\qquad \quad \quad \quad \, \, \,  {\color{black}-\ub\O\tr_g\chi \gamma_{kl}
-\ub R^{-2}(2\O \chih_{kl}-\Omega d^A \f{\partial \chi_{kl}}{\partial \theta^A})}\big].
\end{split}\end{align}
We also note 
\begin{align}\label{eq-dot-ub-metric-1}
\dot{\gamma}^{ij}=-\gamma^{ik}\gamma^{jl}\dot{\gamma}_{kl}. 
\end{align}

In \eqref{eq-expression-S-I-1-5}, \eqref{eq-expression-S-I-1}, and \eqref{eq-expression-S-I-2-5}, 
we decompose $\mathcal S(\p, \ub)$ as
\begin{align*}
\mathcal S(\p, \ub)=\sum_{l=1}^5I_{l}.\end{align*}
By \eqref{eq-equation-partial-ub-phi-v1}, we have 
\begin{equation}\label{eq-zero-sum-dot}\sum_{l=1}^5 \dot{I}_{l}=0.\end{equation}

In the following, we compute each of $\dot{I}_{l}$, for $l=1, \cdots, 5$. We start with $\dot{I}_1$. 
Recall that $I_1$ is given by \eqref{eq-expression-S-I-1}, i.e., 
\begin{align*}
I_1=\D_{\gamma}\phi+1-\f12fe^{\p}.\end{align*}
Then, 
\begin{equation}\label{I1 dot}
\dot{I}_1=(\Delta_{\gamma}\p)^{\boldsymbol\cdot}-\f12fe^{\p}\dot{\p}-\f12\dot{f}e^{\p}.
\end{equation}
By Lemma \ref{lemma-variation-Laplacians}, we have 
\begin{equation}\label{eq-variation-Laplace-ub-phi}
(\Delta_{\gamma}\p)^{\boldsymbol\cdot}=\Delta_\gamma \dot{\p}-\gamma^{ik}\gamma^{jl}\dot{\gamma}_{kl}\nabla_{ij}\p
-\gamma^{ij}\gamma^{kl}\nabla_i\dot{\gamma}_{jl}\nabla_k\p+\frac12\gamma^{ij}\gamma^{kl}\nabla_l\dot{\gamma}_{ij}\nabla_k\p.
\end{equation}
{\color{black} By substituting \eqref{eq-variation-Laplace-ub-phi} in \eqref{I1 dot},} we have 
\begin{equation}\label{eq-variation-I1-ub} 
\dot{I}_1=\Delta_\gamma \dot{\p}-\f12fe^{\p}\dot{\p}-\f12\partial_{\ub}fe^{\p}
+{c}_1^{i}\nab_i \dot{\p}+{c}_1\dot{\p}-\frac{1}{\ub}{h}_1,
\end{equation}
where 
\begin{align*}
{c}^{i}_1&=-2R^{-1}\O\chibh_{jl}\gamma^{ij}\gamma^{kl}\nab_k \p,\\
{c}_1&=-(2+R\O\tr_g\chib)\Delta_{\gamma}\p
-2\gamma^{ij}\gamma^{kl}\big[R^{-1}\nab_i(\O\chibh_{jl}) \nab_k\p
+R^{-1}\O \chibh_{jl}(\nab_{ik}\p
+ \nab_i\p \nab_k \p)\big],
\end{align*}
and 
\begin{align*}
{h}_1&=-(2+R\O\tr_g\chib)\Delta_{\gamma}\p+\ub\O\tr_g\chi\Delta_{\gamma}\p\\
&\qquad-2\gamma^{ij}\gamma^{kl}\big[R^{-1}\nab_i(\O\chibh_{jl}) \nab_k\p
+R^{-1}\O \chibh_{jl}(\nab_{ik}\p
+ \nab_i\p \nab_k \p)\big]\\
&\qquad+\ub\gamma^{ij}\gamma^{kl}\big[R^{-2}\nab_i(2\O\chih_{jl}{\color{black}-\Omega d^A \f{\partial \chi_{jl}}{\partial \theta^A}}) \nab_k\p\\
&\qquad \qquad \qquad \quad
{\color{black}+R^{-2}(2\O \chih_{jl}-\Omega d^A \f{\partial \chi_{jl}}{\partial \theta^A})(\nab_{ik}\p
+ 2\nab_i\p \nab_k \p)\big]}.
\end{align*}
{\color{black} We point out that $c_1^i, c_1$, and $h_1$ can be viewed as linear combinations of  $\nab_{ij}\p$, $\nab_i\p\nab_j\p$, and $\nab_i\p$, 
with coefficients given by $\gamma^{ij}$ and 
\begin{equation}\label{eq-expressions-coefficients-z}\begin{split}
&2+R\O\tr_g\chib,\, R^{-1}\Omega\chibh_{kl},\,  R^{-1}\nab_i(\Omega\chibh_{kl}), \\
&\ub\O\tr_g\chi,\, \ub R^{-2}\O \chih_{kl}, \, \ub R^{-2}\nab_i(\O \chih_{kl}),\\
&{\color{black}\ub R^{-2}\Omega d^A \f{\partial \chi_{kl}}{\partial \theta^A}, \, \ub R^{-2}\nab_i(\Omega d^A \f{\partial \chi_{kl}}{\partial \theta^A})}.
\end{split}\end{equation}
By \eqref{A1}, all expressions in \eqref{eq-expressions-coefficients-z} are controlled by $e^{\p}a^{-\f13}$.

Next, we compute $\dot{I}_2, \cdots, \dot{I}_5$. Recall that $I_2, \cdots, I_5$ are given by \eqref{eq-expression-S-I-2-5}, i.e., 
\begin{align*}
I_2=2R^{-1}\Omega\chibh_{kl} \gamma^{ik}\gamma^{jl}\nab_i \p\nab_j \p, \quad
&I_3=2\gamma^{ij}\eta_i \nab_j \p,\\
I_4=R\alpha_1 \gamma^{ij}\nab_i\p\nab_j\p,\quad
&I_5=R\alpha_2. 
\end{align*}
Straightforward computations yield 
\begin{align*}
\dot{I}_2&=-2R^{-2}\dot{R}\Omega\chibh_{kl} \gamma^{ik}\gamma^{jl}\nab_i \p\nab_j \p
+2R^{-1}(\Omega\chibh_{kl})^{\boldsymbol\cdot} \gamma^{ik}\gamma^{jl}\nab_i \p\nab_j \p \\
&\qquad +4R^{-1}\Omega\chibh_{kl} \dot{\gamma}^{ik}\gamma^{jl}\nab_i \p\nab_j \p
+4R^{-1}\Omega\chibh_{kl} \gamma^{ik}\gamma^{jl}\nab_i \dot{\p}\nab_j \p,\\
\dot{I}_3&=2\dot{\gamma}^{ij}\eta_i \nab_j \p+2\gamma^{ij}\dot{\eta}_i \nab_j \p+2\gamma^{ij}\eta_i \nab_j \dot{\p},
\end{align*} 
and 
\begin{align*}
\dot{I}_4&=\dot{R}\alpha_1 \gamma^{ij}\nab_i\p\nab_j\p+R\dot{\alpha}_1 \gamma^{ij}\nab_i\p\nab_j\p\\
&\qquad+R\alpha_1 \dot{\gamma}^{ij}\nab_i\p\nab_j\p+2R\alpha_1 \gamma^{ij}\nab_i\dot{\p}\nab_j\p,\\
\dot{I}_5&=\dot{R}\alpha_2+R{\dot\alpha}_2. 
\end{align*}
In the above expressions, we substitute $\dot{R}$ by \eqref{dot for ub}, $\dot{\gamma}^{ij}$ by \eqref{eq-dot-ub-metric} and \eqref{eq-dot-ub-metric-1}, and $\dot\psi$ by \eqref{eq-dot-ub-general} for $\psi=\Omega\chibh_{kl}$, $\eta_i$, $\alpha_1$, and $\alpha_2$. 
For each $l=2, 3, 4, 5$, we can write 
\begin{equation}\label{eq-variation-I-i-ub} 
\dot{I}_{l}={c}^{i}_{l}\nab_i \dot{\p}+{c}_{l}\dot{\p}-\frac{1}{\ub}{h}_{l},
\end{equation}
where $c_{l}^i, c_{l}$, and $h_{l}$ can be viewed as linear combinations of  $\nab_{ij}\p$, $\nab_i\p\nab_j\p$, and $\nab_i\p$,
with coefficients given by $\gamma^{ij}$ and 
\begin{equation}\label{eq-expressions-coefficients-y}\begin{split}
&R\O\tr_g\chib+2,\, R^{-1}\O \chibh_{kl},\,  \eta_i,\, R\alpha_1, \, R\alpha_2,\,\\
&\ub\O\tr_g\chi, \, \ub R^{-2}\O \chih_{kl}, \, {\color{black}\ub R^{-2}\Omega d^A \f{\partial \chi_{kl}}{\partial \theta^A}},\\
&\partial_u(\O \chibh_{kl}),\,  R\partial_u\eta_i,\, R^2\partial_u\alpha_1, \, R^2\partial_u\alpha_2, \\ 
&\ub R^{-1}\partial_{\ub}(\O \chibh_{kl}),\,  \ub \partial_{\ub}\eta_i,\, \ub R\partial_{\ub}\alpha_1, \, \ub R\partial_{\ub}\alpha_2. 
\end{split}\end{equation}
By \eqref{A1},  \eqref{A2},  and \eqref{A3}, 
all expressions in \eqref{eq-expressions-coefficients-y} are controlled by $e^{\p}a^{-\f13}$. 
}

By combining \eqref{eq-zero-sum-dot},  \eqref{eq-variation-I1-ub}, and \eqref{eq-variation-I-i-ub},  
we then have the desired result.
\end{proof} 

As a consequence, we have the following estimate.   

\begin{corollary}\label{cor-equation-partial-ub-phi}
Under the assumptions in Theorem \ref{theorem-existence-solutions}, 
for each $\ub \in (0,\delta]$, let $\p=\p(\o, \ub)$ be the solution 
of $\mathcal S(\p, \ub)=0$ as in Theorem \ref{theorem-existence-solutions}. 
Then, 
\begin{equation}\label{eq-estimate-partial-ub-phi}
\max_{\mathbb S^2}|\partial_{\ub}\p|\le C\big[\ub^{-1}a^{-\f13} \max_{\mathbb S^2}e^{\p}
(|\p|^2_{C^2(\mathbb S^2)}+1)+\max_{\mathbb S^2}|\partial_{\ub}fe^{\p}|\big],
\end{equation}
where $C$ is a positive constant independent of $a$, $\ub$, and $\p$. 
\end{corollary}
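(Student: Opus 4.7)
The idea is to interpret the equation in Proposition \ref{prop-equation-partial-ub-phi} as a linear equation for $w:=\partial_{\ub}\p$ whose principal and first-order/zeroth-order structure matches the linearized operator $\partial_\p \mathcal S(\p)$ analyzed in Section \ref{sec-linearized-equations}, and then apply the a priori bound of Lemma \ref{lemma-linearized-apriori-estimates}.

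First, I would write \eqref{eq-equation-partial-ub-phi} in the form
\begin{equation*}
\partial_{\p}\mathcal S(\p)[\partial_{\ub}\p]=\frac{1}{\ub}h+\f12\partial_{\ub}fe^{\p},
\end{equation*}
by comparing the expression \eqref{eq-linearization-S} for $\partial_{\p}\mathcal S(\p)$ with the left-hand side of \eqref{eq-equation-partial-ub-phi}. The coefficients $c^i$ and $c$ appearing in the two identities have exactly the same structure (they are linear combinations of $\nabla^2_\gamma\p$, $\nabla\p\otimes\nabla\p$, $\nabla\p$ and $1$ with the same geometric coefficients bounded by \eqref{A1}--\eqref{A3}), and in particular satisfy the bound $|c|+|c^i|\le Ca^{-1/3}e^\p(|\nabla^2_\gamma\p|+|\nabla_\gamma\p|^2+1)$ from \eqref{eq-estimates-partial-ub-coefficients}.

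Next, since $\p$ is the solution from Theorem \ref{theorem-existence-solutions}, we have $\underline{\p}\le \p\le\overline{\p}$ and hence, by Theorem \ref{theorem-Schauder-phi}, $|\p|_{C^2(\mathbb S^2)}\le K$. Choosing $a$ large in terms of $K$ (and on $m,\epsilon$ from \eqref{eq-assumption-lower-bound-f}) as in the proof of Lemma \ref{lemma-linearized-op-invert}, the positive first eigenfunction $\psi_1$ of $L_\ast$ still provides a strict supersolution for $\partial_\p\mathcal S(\p)$ uniformly in $\ub\in(0,\delta]$ and in the choice of $\p$ within the barriers, so the $L^\infty$ estimate \eqref{eq-estimate-linearization-v2} of Lemma \ref{lemma-linearized-apriori-estimates} applies to $w=\partial_{\ub}\p$ with a constant independent of $a,\ub,\p$. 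This yields
\begin{equation*}
\max_{\mathbb S^2}|\partial_{\ub}\p|\le C\max_{\mathbb S^2}\left|\frac{1}{\ub}h+\f12\partial_{\ub}fe^{\p}\right|.
\end{equation*}

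Finally, I would substitute the estimate \eqref{eq-estimates-partial-ub-coefficients} on $h$, bounding the pointwise $C^2$ seminorms by the global quantity $|\p|_{C^2(\mathbb S^2)}$, to obtain
\begin{equation*}
\max_{\mathbb S^2}\left|\frac{h}{\ub}\right|\le C\ub^{-1}a^{-1/3}\max_{\mathbb S^2}e^{\p}\bigl(|\p|_{C^2(\mathbb S^2)}^2+1\bigr),
\end{equation*}
which combined with the triangle inequality gives exactly \eqref{eq-estimate-partial-ub-phi}. The main conceptual obstacle lies already behind us: it is the verification that the differentiated equation for $\partial_{\ub}\p$ has the same maximum-principle-friendly structure as the linearized equation, so that the uniform invertibility proved in Section \ref{sec-linearized-equations} can be reused. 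The only mild technical point is ensuring that the dependence of all ``nuisance'' coefficients on $\ub$ (through $f$, $\gamma$, Ricci coefficients, and $R=\ub a e^{-\p}$) is absorbed into the factor $a^{-1/3}e^\p$, which is guaranteed by the scale-invariant bounds \eqref{A1}--\eqref{A3} recalled from \cite{AL}.
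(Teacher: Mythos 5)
Your proposal is correct and follows essentially the same route as the paper: the paper likewise observes that the left-hand side of \eqref{eq-equation-partial-ub-phi} is a linear operator with the same structure as $\partial_{\p}\mathcal S(\p,\ub)$ in \eqref{eq-linearization-S}, applies Lemma \ref{lemma-linearized-apriori-estimates} with that operator in place of $\partial_{\p}\mathcal S(\p,\ub)$, and then bounds the right-hand side via \eqref{eq-estimates-partial-ub-coefficients}. Your only care point—that the operator is not literally $\partial_{\p}\mathcal S(\p)$ but shares its structure and coefficient bounds—is exactly how the paper phrases it, so nothing is missing.
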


\begin{proof} The left-hand side of \eqref{eq-equation-partial-ub-phi} can be viewed as a linear operator acting on 
$\partial_{\ub}\phi$. This linear operator has the same structure as $\partial_{\p}\mathcal S(\p, \ub)$
in \eqref{eq-linearization-S}. Applying Lemma \ref{lemma-linearized-apriori-estimates} with $\partial_{\phi}\mathcal{S}(\p,\ub)$ replaced by the linear operator on the left hand side of \eqref{eq-equation-partial-ub-phi}, we then obtain the desired result. 
We emphasize that $C$ can be chosen independent of $a$, $\ub$, and $\p$. 
\end{proof}

Now, we are ready to prove another main result in this paper.  

\begin{theorem}\label{thrm-spacelike} 
Under the assumptions in Theorem \ref{theorem-existence-solutions},
for each $\ub \in (0,\delta]$, let $\p=\p(\o, \ub)$ be the solution 
of $\mathcal S(\p, \ub)=0$ as in Theorem \ref{theorem-existence-solutions}. 
Assume 
\begin{equation}\label{eq-assumption-ub-f}
|\ub\partial_{\ub}f(\o,\ub)|\leq a^{-\f13}.\end{equation} 
Then, the metric $g$ restricted to $\{(\o, u, \ub);\,u=1-\ub a e^{-\p(\o, \ub)}\}$ is Riemannian. \end{theorem}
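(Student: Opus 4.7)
The plan is to combine the quantitative estimate from Corollary \ref{cor-equation-partial-ub-phi} with a direct pullback computation of the spacetime metric to $\mathcal H=\{u=1-\ub a e^{-\p(\o,\ub)}\}$, and then verify positive-definiteness by computing the discriminant of the resulting quadratic form, exactly as sketched in the introduction.

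First I would feed the Schauder bound $|\p|_{C^2(\mathbb S^2)}\le K$ from Theorem \ref{theorem-Schauder-phi}, the a priori bound $0\le \p\le \kappa$ (from the barriers of Section \ref{sec-Barriers}), and the hypothesis \eqref{eq-assumption-ub-f} into the estimate \eqref{eq-estimate-partial-ub-phi}. The first term on the right-hand side of \eqref{eq-estimate-partial-ub-phi} contributes $Ce^{\kappa}(K^2+1)\,a^{-1/3}$ and the second contributes $Ce^{\kappa}\,a^{-1/3}$ thanks to \eqref{eq-assumption-ub-f}, so
$$\max_{\mathbb S^2}\big|\ub\,\partial_{\ub}\p\big|\le Ce^{\kappa}(K^2+1)\,a^{-1/3}.$$
Choosing $a$ sufficiently large (depending on $\kappa$ and $K$ only, in line with the hierarchy \eqref{eq-choice-a-b}), this yields $|\ub\,\partial_{\ub}\p|\le 1/2$ on $\mathbb S^2$, so that $1-\ub\,\partial_{\ub}\p\ge 1/2$ pointwise.

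Next I would parametrize $\mathcal H$ by $(\theta^1,\theta^2,\ub)\mapsto(\theta^1,\theta^2,1-R,\ub)$ with $R=\ub a e^{-\p}$ and pull the spacetime metric \eqref{equation g} back to $\mathcal H$. Using $du=-dR=R\,\partial_i\p\,d\theta^i-R\ub^{-1}(1-\ub\partial_{\ub}\p)\,d\ub$ along $\mathcal H$ together with the relation \eqref{eq-relation-g-gamma}, a direct computation (as laid out in the introduction) produces, for any tangent vector $X=\lambda_1\partial_{\theta^1}+\lambda_2\partial_{\theta^2}+\lambda_3\partial_{\ub}$,
$$g'(X,X)=\ub^2a^2e^{-2\p}\gamma_{ij}\lambda_i\lambda_j-4\ub a e^{-\p}\lambda_i\lambda_3\partial_i\p+4\lambda_3^2\,a e^{-\p}(1-\ub\partial_{\ub}\p).$$
Viewing this expression as a quadratic polynomial in $\lambda_3$ with $(\lambda_1,\lambda_2)$ fixed, its discriminant equals
$$16\ub^2a^2e^{-2\p}(\lambda_i\partial_i\p)^2-16\ub^2a^3e^{-3\p}(1-\ub\partial_{\ub}\p)\,\gamma_{ij}\lambda_i\lambda_j.$$
The Cauchy--Schwarz inequality in the metric $\gamma$ gives $(\lambda_i\partial_i\p)^2\le|\nabla_{\gamma}\p|^2\,\gamma_{ij}\lambda_i\lambda_j$; since $|\nabla_{\gamma}\p|^2$ is uniformly bounded by Theorem \ref{theorem-Schauder-phi} while $ae^{-\p}(1-\ub\partial_{\ub}\p)\ge \tfrac12 ae^{-\kappa}$ grows linearly in $a$, the discriminant is strictly negative for $a$ large whenever $(\lambda_1,\lambda_2)\neq 0$. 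Combined with strict positivity of the $\lambda_3^2$-coefficient (bounded below by $2ae^{-\kappa}>0$), this shows $g'(X,X)>0$ for every nonzero $X$, so $g'$ is Riemannian.

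The principal obstacle is the first step: making the estimate \eqref{eq-estimate-partial-ub-phi} effective requires controlling $|\p|_{C^2(\mathbb S^2)}$ by a constant independent of $a$, and pairing that with the $a^{-1/3}$ smallness coming from both the coefficient bounds \eqref{eq-ub-coefficients-estimates} and the hypothesis \eqref{eq-assumption-ub-f}. This is precisely what the scale-invariant estimates of Appendix \ref{appendix-A} together with the hierarchy \eqref{eq-choice-a-b} supply; once $|\ub\partial_{\ub}\p|\le 1/2$ is secured, the remaining positivity argument is essentially algebraic and reduces to the discriminant computation above.
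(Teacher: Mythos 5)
Your proposal is correct and follows essentially the same route as the paper: the bound $\max_{\mathbb S^2}|\ub\partial_{\ub}\p|\le 1/2$ from Corollary \ref{cor-equation-partial-ub-phi} together with \eqref{eq-assumption-ub-f}, then the induced-metric computation yielding the quadratic form $\ub^2a^2e^{-2\p}\gamma_{ij}\lambda_i\lambda_j-4\ub ae^{-\p}\lambda_i\lambda_3\partial_i\p+4\lambda_3^2ae^{-\p}(1-\ub\partial_{\ub}\p)$. Your discriminant check is just an equivalent packaging of the paper's Cauchy--Schwarz/absorption step, so the argument matches the paper's proof in substance.
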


\begin{proof} With $R=\ub a e^{-\p(\o, \ub)}$, we have 
$$\partial_{\ub}R=a e^{-\p(\o, \ub)}\big(1-\ub \partial_{\ub}\p\big).$$
By \eqref{eq-estimate-partial-ub-phi} and \eqref{eq-assumption-ub-f},  taking $a$ sufficiently large, we have 
$$\max_{\mathbb S^2}|\ub\partial_{\ub}\p|\le 1/2.$$
Hence, $\partial_{\ub}R\ge a e^{-\p(\o, \ub)}/2$. 

Let $g'$ be the induced metric to the hypersurface
$u=1-R(\o, \ub)$. Denote $\o=(\theta_1, \theta_2)$. Then, 
\begin{align*}
g'_{\theta_i \theta_j}&=g_{\theta_i \theta_j}+\partial_{\theta_i} (1-R)\partial_{\theta_j}(1-R) \cdot g(\partial u, \partial u)=g_{\theta_i \theta_j},\\
g'_{\ub\, \ub}&=g_{\ub\, \ub}+2\partial_{\ub} (1-R)\partial_{\ub}\ub \cdot g(\partial_u, \partial_{\ub})=4\partial_{\ub} R,\\
g'_{\theta_i \ub}&=g_{\theta_i \ub}+\partial_{\theta_i} (1-R)\partial_{\ub}\ub\cdot g(\partial_{u}, \partial_{\ub})=2\partial_{\theta_i}R,
\end{align*}
where we used $g(\partial_u, \partial_u)=0$, $g(\partial_u, \partial_{\ub})=-2$, and $g(\partial_u, \partial_{\theta_i})=0$. 

Let $X=\lambda_1\partial_{\theta_1}+\lambda_2\partial_{\theta_2}+\lambda_3\partial_{\ub}$ be an arbitrary 
nonzero tangent vector, for 
some constants $\lambda_1, \lambda_2, \lambda_3$. Then, we have
\begin{align*}
g'(X,X)
=\lambda_i\lambda_jg_{ij}+4\lambda_i \lambda_3 \partial_{i}R+4\lambda_3^2 \partial_{\ub} R,\end{align*}
{\color{black} where $i,j$ are summed over $1,2$.}
If $\lambda_3=0$, then $g'(X,X)
=\lambda_i\lambda_jg_{ij}>0$. If $\lambda_3\neq0$, then by using  $g_{ij}=R^2\gamma_{ij}$ in \eqref{eq-relation-g-gamma}, we obtain
\begin{align*}
g'(X,X)
&\ge\ub^2a^2e^{-2\p}\lambda_i\lambda_j\gamma_{ij}-4\ub ae^{-\p}\lambda_i \lambda_3 \partial_{i}\p+2\lambda_3^2ae^{-\p}\\
&\ge2\lambda_3^2ae^{-\p}-4\lambda_3^2|\nab_\gamma \p|^2>0,
\end{align*}
if $a$ is sufficiently large.\end{proof}  

Therefore, the corresponding apparent horizon is spacelike, 
and  hence a dynamical horizon defined in \cite{AG, AK03, AK}.

\section{Anisotropic Apparent Horizons}\label{anisotropic examples}  

In previous sections, we constructed a unique MOTS in a certain range along each $\Hb_{\ub}$, and gathering all the MOTSs  
we further proved that it is a smooth apparent horizon (except at the center). 
In this section, we will construct anisotropic apparent horizons. 
Specifically, we will construct MOTSs with isolated {\it valleys}. Via Theorem \ref{thrm-smoothness-tubes}, these MOTSs also form a smooth apparent horizon with valleys.  We now focus on the operator $\mathcal S(\p, \ub)$ defined by \eqref{eq-expression-cal-S-ub-v2}  
and construct single-valley and multi-valley solutions, 
which translates to solutions of $\mathcal G(R, \ub)=0$ with isolated {\it peaks}.

We allow $f$ depends on $\ub$, but has to satisfy assumptions imposed earlier on its $\ub$-derivative in Theorem \ref{main thm}. In this section, when there is no danger of confusion, we suppress the department of $f$ on $\ub$. For fixed $\ub$, we treat $f(\ub, \o)$ as a function of $\o$.

We start from constructing solution with one valley.
 
\begin{theorem}\label{prop-anisotropic-phi-one}
Let $p$ be a point on $\mathbb S^2$, $\e$ be a sufficiently small positive constant, 
and $f$ be a smooth function on $\mathbb S^2$, such that $0\le f\le 1$ and
$$f\ge 1/2\,\text{ on }B_p(\e/2),\quad \quad f=0\,\text{ on }\mathbb S^2\setminus B_p(\e).$$
Then, for each ${\ub}$,  the equation $\mathcal S(\phi, \ub)=0$ admits a solution  $\phi_{\e}(\o, \ub)$  satisfying, 
for any $r_0\gg \e$, 
\begin{equation}\label{eq-estimate-anisotropic-max-min-one}
\phi_{\e}(p, \ub)-\inf_{\o\in\mathbb S^2\setminus B_p(r_0)}\phi_{\e}(\o, \ub)  
\le \f12\log\e+C,\end{equation}
where $C$ is a positive constant depending on $r_0$, independent of $\e$ and $\ub$. 
\end{theorem}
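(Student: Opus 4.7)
The plan is to produce $\phi_\epsilon$ from Theorem \ref{theorem-existence-solutions} and then capture the valley by sandwiching $\phi_\epsilon$ between the supersolution of Lemma \ref{lemma-super-solutions-phi} (evaluated at $p$) and a new explicit subsolution of logarithmic type with exponent tuned to $5/4$. The hypotheses on $f$ match \eqref{eq-assumption-f-0-1} and \eqref{eq-assumption-lower-bound-f} with $m=1/2$, so Theorem \ref{theorem-existence-solutions} and Theorem \ref{thrm-smoothness-tubes} give the desired smooth solution on $\mathbb S^2 \times (0,\delta]$. The supersolution of Lemma \ref{lemma-super-solutions-phi} with some fixed $\tau > 2.2$ gives the upper bound $\phi_\epsilon(p,\ub) \le -\log(m\epsilon^2) + C(\tau) \le -2\log\epsilon + C$.

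For the lower bound away from $p$, I would introduce the explicit candidate subsolution
\[
\underline{\phi}_\epsilon^{\textrm{val}}(\omega) = \log\bigl(\sin^2(\lambda/2) + \epsilon^\kappa\bigr) - 2\kappa\log\epsilon, \qquad \lambda = \mathrm{dist}_{\gamma_0}(\omega,p),
\]
with $\kappa = 5/4$. The algebraic point is the clean identity
\[
\Delta_{\gamma_0}\underline{\phi}_\epsilon^{\textrm{val}} + 1 = \frac{\epsilon^\kappa(1+\epsilon^\kappa)}{(\sin^2(\lambda/2) + \epsilon^\kappa)^2} \ge 0,
\]
which follows from $\Delta_{\gamma_0}\log\sin^2(\lambda/2) = -1$ (away from $p$) together with a direct calculation. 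On the support $B_p(\epsilon)$ of $f$, the positive part is of size $\epsilon^{-\kappa}$ and dominates the only negative contribution $-\tfrac12 f e^{\underline{\phi}_\epsilon^{\textrm{val}}}\ge -\tfrac12\epsilon^{-\kappa}$ (using $f\le 1$ and $\sin^2(\lambda/2)=O(\epsilon^2)\ll\epsilon^\kappa$ there); outside $B_p(\epsilon)$ that negative term vanishes. Estimating $\Delta_\gamma-\Delta_{\gamma_0}$ via Lemma \ref{lemma-difference-Laplacians} and $F$ via \eqref{eq-estimate-F-v1}, choosing $a$ sufficiently large in terms of $\epsilon$ absorbs these corrections into the positive leading term, yielding $\mathcal S(\underline{\phi}_\epsilon^{\textrm{val}}) \ge 0$ on $\mathbb S^2$. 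Since the elementary pointwise comparison $\underline{\phi}_\epsilon^{\textrm{val}} \le \overline{\phi}$ is straightforward, the strong maximum principle argument of Lemma \ref{lemma-comparison} then gives $\phi_\epsilon \ge \underline{\phi}_\epsilon^{\textrm{val}}$ on $\mathbb S^2$.

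For $r_0 \gg \epsilon^{\kappa/2}$ and $\omega$ with $\lambda\ge r_0$, the inequality $\sin^2(\lambda/2) \gg \epsilon^\kappa$ yields $\phi_\epsilon(\omega,\ub) \ge 2\log\sin(r_0/2) + 2\kappa|\log\epsilon| - C$. Subtracting the upper bound at $p$ produces
\[
\phi_\epsilon(p,\ub) - \inf_{\omega \notin B_p(r_0)}\phi_\epsilon(\omega,\ub) \le (2-2\kappa)|\log\epsilon| - 2\log\sin(r_0/2) + C,
\]
and the choice $\kappa = 5/4$ makes the $|\log\epsilon|$-coefficient equal to $-1/2$, delivering $\tfrac12\log\epsilon + C(r_0)$ as asserted. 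The main obstacle is verifying the subsolution condition in the intermediate transition zone $\sin^2(\lambda/2) \sim \epsilon^\kappa$, where the positive leading term and the nonlinear errors coming from $F$ and $\Delta_\gamma-\Delta_{\gamma_0}$ are both of scale $\epsilon^{-\kappa}$; the scale-invariant estimates of Appendix \ref{appendix-A} together with $a$ chosen sufficiently large (as a polynomial in $\epsilon^{-1}$) are precisely what make the $a^{-1/3}$ corrections in Lemma \ref{lemma-difference-Laplacians} and \eqref{eq-estimate-F-v1} harmless and close the subsolution inequality there.
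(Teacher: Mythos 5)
Your proposal is correct and follows the same skeleton as the paper's proof (existence from Theorem \ref{theorem-existence-solutions}, the supersolution of Lemma \ref{lemma-super-solutions-phi} with $m=1/2$ and $\e$ replaced by $\e/2$ for the upper bound $\phi_\e(p)\le -2\log\e+C$, then a new explicit anisotropic subsolution pushed under the solution via Lemma \ref{lemma-comparison}); the difference lies in the barrier itself. The paper takes $\underline{\p}_*=\mu\log\big[(1+\delta-\cos d)/\delta\big]$ with exponent $\mu\approx 0.80<1$ and $\delta=\e^{\beta}$, tuned so that $\beta(1+\mu)=2(2+\mu)$ and $\mu\beta=2.5$; since $1-\cos d=2\sin^2(d/2)$, your barrier $\log\big(\sin^2(\lambda/2)+\e^{\kappa}\big)-2\kappa\log\e$ with $\kappa=5/4$ is the exponent-$1$ member of the same family, and your identity $\D_{\gamma_0}\underline{\p}^{\mathrm{val}}_\e+1=\e^{\kappa}(1+\e^{\kappa})/(\sin^2(\lambda/2)+\e^{\kappa})^2$ checks out by the same polar computation the paper uses for \eqref{eq-identity-Laplacian-one}. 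The trade-off: with $\mu<1$ the paper retains the uniform margin $1-\mu\ge 0.1$ everywhere, so the $a^{-1/3}$-errors from Lemma \ref{lemma-difference-Laplacians} and \eqref{eq-estimate-F-v1} are absorbed against a constant; in your version the margin degenerates to $\sim\e^{\kappa}$ away from $p$ (not only in the transition zone $\sin^2(\lambda/2)\sim\e^{\kappa}$ you flag, but also in the far region, which is actually where the constraint on $a$ is tightest, roughly $a\gtrsim \e^{-9\kappa}|\log\e|^3$), so everything hinges on taking $a$ polynomially large in $\e^{-1}$. That is legitimate here, since the paper's own argument already requires $a$ large depending on $\e$ (through $\kappa\sim(\tau+2)|\log\e|$ in \eqref{eq-choice-a-b}, the supersolution's $O(\e^{-2})$ Hessian, and the enlargement of $K$), and your exponent bookkeeping $(2-2\kappa)|\log\e|=\f12\log\e$ reproduces the paper's $(\mu\beta-2)\log\e$ exactly. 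Two small points: the pointwise comparison $\underline{\p}^{\mathrm{val}}_\e\le\overline{\p}$ is unnecessary, since Lemma \ref{lemma-comparison} compares the subsolution directly with the solution; and, as the paper does for its $\underline{\p}_*$, you should note that $K$ must be enlarged to accommodate $|\underline{\p}^{\mathrm{val}}_\e|_{C^2(\mathbb S^2)}\sim\e^{-5/4}$ before invoking Lemma \ref{lemma-comparison}, with $a$ enlarged accordingly.
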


\begin{proof} We fix a $\ub$ and consider the operator $\mathcal S$  as in \eqref{3.0-v1}. 
Note that $f$ satisfies \eqref{eq-assumption-lower-bound-f}, with $m=1/2$. 
Let $\underline{\p}$ and  $\overline{\p}$ 
be the subsolution and the supersolution as in Lemma \ref{lemma-sub-solution-phi} and Lemma \ref{lemma-super-solutions-phi}, 
with $m=1/2$ and $\e$ there replaced by $\e/2$. 
By \eqref{eq-estimate-super-phi}, we have 
\begin{align}\label{eq-estimate-super-phi-special-one}\begin{split}
\overline{\phi}_{\min}&=\overline{\phi}(p)=-2\log\e+O(1),\\
\overline{\phi}_{\max}&=\overline{\phi}(-p)=-(\tau+2)\log\e+O(1).
\end{split}\end{align}
By Theorem  \ref{theorem-existence-solutions}, 
there exists a unique solution  $\phi$ of $\mathcal S(\phi)=0$, satisfying 
\begin{equation}\label{eq-estimate-solution-upper-lower} 
\underline{\phi}< \phi< \overline \phi.\end{equation}
Recall that $\underline{\p}$ constructed in Lemma \ref{lemma-sub-solution-phi} is constant. 
With constant $\underline{\phi}$, the estimate \eqref{eq-estimate-solution-upper-lower} is not sufficient to prove the desired result. 
In the following, we construct a new anisotropic subsolution $\underline{\phi}_*$
to improve the lower bound of $\p$ such that 
$\underline{\phi}_*$, like $\overline{\phi}$, attains its minimum and maximum at two different points $p$ and $-p$, respectively. Moreover, we will show $\overline{\phi}(p)$ is significantly less than $\underline{\phi}_*$ valued away from $p$.

For the given  point $p\in\mathbb S^2$,  denote by $d$ the distance on $\mathbb S^2$ from $p$.  
Then, $d(p)=0$ and $d(-p)=\pi$. For some positive constant $\mu \in (0.1, 0.9)$ 
and a small parameter $\delta$ to be determined, construct
\begin{equation}\label{eq-expression-phi-one}
\underline{\phi}_*=\mu\log\Big[\f{1+\delta-\cos d}{\delta} \Big].\end{equation}
Then, 
\begin{equation*}
\underline{\phi}_{*\min}=\underline{\phi}_*(p)=0, 
\quad \underline{\phi}_{*\max}=\underline{\phi}_*(-p)=-\mu\log\Big[\f{\delta}{2+\delta} \Big]
=-\mu\log\delta+O(1).
\end{equation*}
In fact, for any fixed $r_0\in (0,\pi)$ and any $q\in \mathbb S^2\setminus B_p(r_0)$, we have 
\begin{equation}\label{eq-subsol-phi-bound-outside-one}
\underline{\phi}_*(q)
=-\mu\log\delta+O(1). 
\end{equation}
We claim
\begin{equation}\label{eq-identity-Laplacian-one}
\D_{\gamma_0}\underline{\phi}_*+\mu=\f{\mu\delta(2+\delta)}{(1+\delta-\cos d)^2}, 
\end{equation}
where $\D_{\gamma_0}$ is the Laplacian with respect to the spherical metric ${\gamma_0}$ on $\mathbb S^2$.  
We postpone the proof of \eqref{eq-identity-Laplacian-one} after \eqref{eq-subsol-phi-bound-outside-v2-one} and it is a direct calculation. Then, 
\begin{equation*}
(\D_{\gamma_0}\underline{\phi}_*+\mu)e^{-\underline{\phi}_*}=\frac12 \underline{f}, 
\end{equation*}
where 
\begin{equation}\label{eq-expression-f-one}
\underline{f}=\f{2\mu(2+\delta)\delta^{1+\mu}}{(1+\delta-\cos d)^{2+\mu}}.\end{equation}
We now claim $\underline{f}> f$ on $\mathbb S^2$. 
Thus back to \eqref{3.0-v2}, with $a$ sufficiently large, 
we have $\mathcal S(\underline{\phi})>0$, since $\mu<0.9$. 
To prove the claim, we only need to prove $\underline{f}>1$ on $B_p(\e)$, since $f\le 1$ on $B_p(\e)$ 
and $f=0$ outside $B_p(\e)$. By ${\color{black} \cos d\ge 1-d^2/2}$  and \eqref{eq-expression-f-one}, we have 
\begin{equation*}
\underline{f}\geq
2\mu(2+\delta)\cdot\f{\delta^{1+\mu}}{(\delta+\e^2/2)^{2+\mu}}\quad\text{on }B_p(\e).\end{equation*}
For some $\beta>0$ to be determined, set $\delta=\e^\beta$. Then, 
\begin{equation*}
\underline{f}\geq 2\mu(2+\e^\beta)\cdot\f{\e^{\beta(1+\mu)}}{(\e^\beta+\e^2/2)^{2+\mu}}\quad\text{on }B_p(\e).\end{equation*}
If $\beta>2$, then 
\begin{equation*}
\underline{f}\geq \mu2^{4+\mu}\cdot\f{1+\e^\beta/2}{(1+2\e^{\beta-2})^{2+\mu}} 
\cdot\e^{\beta(1+\mu)-2(2+\mu)}\quad\text{on }B_p(\e).\end{equation*}
Thus, $\underline{f} \ge \mu2^3$  on $B_p(\e)$ if $\beta(1+\mu)=2(2+\mu)$ and $\e$ is small. In the following,  we take
\begin{equation}\label{eq-expression-beta-one}\beta=\f{2(2+\mu)}{1+\mu}.\end{equation}
Then, \eqref{eq-subsol-phi-bound-outside-one} reduces to, for any $q\in \mathbb S^2\setminus B_p(r_0)$,  
\begin{equation}\label{eq-subsol-phi-bound-outside-v2-one} 
\underline{\phi}_*(q)
=-\mu\beta\log\e+O(1).
\end{equation}

To prove \eqref{eq-identity-Laplacian-one}, we position $p$ at the north pole
and denote by $(\theta, \psi)$ the spherical coordinates on $\mathbb S^2$, with $\theta\in [0,\pi]$ and $\psi\in [0,2\pi)$. 
Then, $d=\theta$. 
For a function $\phi=\phi(\theta)$, we have 
\begin{equation*}
\begin{split}
\D_{0}\phi(\theta)
=\f{1}{\sin\theta}\partial_{\theta}(\sin\theta \partial_\theta\phi(\theta)).
\end{split}
\end{equation*}
Then, a straightforward computation yields \eqref{eq-identity-Laplacian-one}. 

We now compare $\underline{\phi}_*$ and $\overline{\phi}$. 
By \eqref{eq-estimate-super-phi-special-one} and \eqref{eq-subsol-phi-bound-outside-v2-one}, we have 
for any $q\in \mathbb S^2\setminus B_p(r_0)$ 
\begin{equation*}
\overline{\phi}(p)-\underline{\phi}_*(q)
\le(\mu\beta-2)\log\e+C,
\end{equation*} 
where we replaced $O(1)$ by $C$, a positive constant, independent of $\epsilon$ and $\ub$.
With $\beta$ as in \eqref{eq-expression-beta-one}, we can take $\mu$ such that $\mu\beta>2$, say $\mu\beta=2.5$. 
This yields $2\mu (2+\mu)/(1+\mu)=2.5$, and we have $\mu=(\sqrt{89}-3)/8\approx 0.80$.

By $\mathcal S(\p)=0$, $\mathcal S(\underline{\p}_*)>0$, and Lemma \ref{lemma-comparison}, 
we have $\underline{\p}_*<\p$. Here, in order to apply Lemma \ref{lemma-comparison}, 
we may need to enlarge $K$ so that $|\underline{\p}_*|_{C^2(\mathbb S^2)}\le K$, if necessary. 
Then, for any $q\in \mathbb S^2\setminus B_p(r_0)$, we obtain
$$\phi(p)-\phi(q)\le \overline{\phi}(p)-\underline{\phi}_*(q)
\le \f12\log\e+C.$$
We now restore $\e$ and $\ub$, and write $\p_{\e}(\cdot, \ub)$ for $\phi$. Then $\phi_{\epsilon}(\o,\ub)$ is our desired solution.
\end{proof} 

In Theorem \ref{prop-anisotropic-phi-one}, the solution $\p$
has a value at $p$ significantly less than the values away from $p$. In this sense, we say that 
$\p$ has a valley at $p$.

Theorem  \ref{prop-anisotropic-phi-one} can be reformulated easily with the operator $\mathcal G(R, \ub)$ defined by  
\eqref{eq-expression-cal-G-ub}. 

\begin{proposition}\label{prop-anisotropic-R-one}
Let $p$ be a point on $\mathbb S^2$, $\e$ be a sufficiently small positive constant, 
and $f$ be a smooth function on $\mathbb S^2$, such that $0\le f\le 1$ and
$$f\ge 1/2\,\text{ on }B_p(\e/2),\quad\quad f=0\,\text{ on }\mathbb S^2\setminus B_p(\e).$$
Then, for each ${\ub}$, the equation $\mathcal G(R, \ub)=0$ admits a unique solution  $R_{\e}(\o, \ub)$  satisfying 
\begin{equation}\label{eq-estimate-anisotropic-R-max-min}
\sup_{\o\in \mathbb S^2\setminus B_p(r_0)}R_{\e}(\o, \ub)\le C\e^{{\color{black}\f12}} R_{\e}(p, \ub),\end{equation} 
where $C$ is a positive constant, independent of $\e$ and $\ub$. 
\end{proposition}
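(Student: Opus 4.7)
The strategy is to transfer Theorem \ref{prop-anisotropic-phi-one} through the change of unknown $R(\o,\ub)=\ub a\, e^{-\phi(\o,\ub)}$ and the identity \eqref{eq-relation-R-S}, i.e.\ $\mathcal S(\phi,\ub)=-R\,\mathcal G(R,\ub)$. Since $R>0$, the equation $\mathcal G(R,\ub)=0$ is equivalent to $\mathcal S(\phi,\ub)=0$ under this correspondence, so the existence assertion is immediate: take $\phi_\e(\o,\ub)$ as produced by Theorem \ref{prop-anisotropic-phi-one} and define $R_\e(\o,\ub):=\ub a\,e^{-\phi_\e(\o,\ub)}$.

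For the anisotropy estimate \eqref{eq-estimate-anisotropic-R-max-min}, I would simply take logarithms. From $\log R_\e=\log(\ub a)-\phi_\e$ we get, for every $\o\in\mathbb S^2\setminus B_p(r_0)$,
\begin{equation*}
\log R_\e(\o,\ub)-\log R_\e(p,\ub)=\phi_\e(p,\ub)-\phi_\e(\o,\ub)
\le \phi_\e(p,\ub)-\inf_{\o'\in\mathbb S^2\setminus B_p(r_0)}\phi_\e(\o',\ub).
\end{equation*}
Applying \eqref{eq-estimate-anisotropic-max-min-one} yields
\begin{equation*}
\log R_\e(\o,\ub)-\log R_\e(p,\ub)\le \tfrac12\log\e+C,
\end{equation*}
and exponentiating gives $R_\e(\o,\ub)\le e^{C}\,\e^{1/2}\,R_\e(p,\ub)$, which is \eqref{eq-estimate-anisotropic-R-max-min} with constant $e^C$, independent of $\e$ and $\ub$. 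Taking a supremum over $\o$ on the left completes this step.

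For uniqueness, the natural venue is again the $\phi$-picture. One checks that the lower and upper barriers $\underline\phi$, $\overline\phi$ (Lemma \ref{lemma-sub-solution-phi} and Lemma \ref{lemma-super-solutions-phi}, applied with $m=1/2$ and $\e/2$) correspond, via $R=\ub a\,e^{-\phi}$, to a pair of super/sub-solutions to $\mathcal G(R,\ub)=0$ enclosing $R_\e$. Any solution $R$ of $\mathcal G(R,\ub)=0$ lying between these barriers corresponds to a $\phi$ satisfying $\underline\phi<\phi<\overline\phi$, hence falls in the uniqueness class of Theorem \ref{theorem-existence-solutions} (ultimately relying on the comparison principle Lemma \ref{lemma-comparison}), so $R=R_\e$.

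There is essentially no analytic obstacle here; the proposition is a translation exercise. The only mild subtlety is bookkeeping: one must verify that the $C^2$-bound needed to invoke Lemma \ref{lemma-comparison} in the $\phi$-variable (i.e.\ $|\phi|_{C^2(\mathbb S^2)}\le K$) is supplied by the a priori Schauder bound of Theorem \ref{theorem-Schauder-phi} for any $\phi$ caught between $\underline\phi$ and $\overline\phi$, so that uniqueness in the $\phi$-class does transfer cleanly to uniqueness in the corresponding $R$-class. Once this is in place, the proof is just the one-line exponentiation above.
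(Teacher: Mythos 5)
Your proposal is correct and coincides with the paper's intent: the paper treats Proposition \ref{prop-anisotropic-R-one} as an immediate reformulation of Theorem \ref{prop-anisotropic-phi-one} under the substitution $R=\ub a\,e^{-\phi}$ and the relation $\mathcal S(\phi,\ub)=-R\,\mathcal G(R,\ub)$, which is exactly the exponentiation argument you give, with uniqueness inherited from Theorem \ref{theorem-existence-solutions} and Lemma \ref{lemma-comparison} in the $\phi$-picture.
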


As a consequence, the corresponding MOTS and apparent horizon are \textit{anisotropic}.

In the rest of the section, we construct solutions with multiple valleys. 
We first modify the arguments in the proof of Lemma \ref{lemma-super-solutions-phi}
to construct a supersolution with multiple valleys. 

\begin{lemma}\label{lemma-super-solutions-phi-multi}
Let $p_1, \cdots, p_n$ be fixed $n$ points on $\mathbb S^2$, 
$\e$ be a positive constant 
much less than the distance between any two distinct $p_i$ and $p_j$ for $i, j=1, \cdots, n$, 
and $f$ be a nonnegative continuous function on $\mathbb S^2$ such that 
$f\ge m$ on $\cup_{i=1}^nB_{p_i}(\e)$, for some constant  $m>0$. 
Then, for any $\tau>2.2/n$, there exists  $\overline{\phi}\in C^\infty(\mathbb S^2)$ such that $\mathcal S(\overline{\p})<0$ and, 
for any $i=1, \cdots, n$,  
\begin{align}\label{eq-estimate-super-phi-multi}\begin{split}
\overline{\phi}(p_i)&=-\log(m\e^2)+C,\\
\overline{\phi}_{\max}&=-\log(m\e^{\tau+2})+C,
\end{split}\end{align}
for some positive constant $C$ depending only on $n$ and $\tau$. 
\end{lemma}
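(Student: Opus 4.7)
The plan is to mimic the construction in Lemma \ref{lemma-super-solutions-phi}, replacing the single logarithmic profile $w=\log[\sin(\lambda/2)]$ with the sum of such profiles centered at the $n$ points $p_1,\dots,p_n$. The gain of a factor $n$ on the left side of the key Poisson-type identity is precisely what allows us to relax the exponent $\tau$ from $\tau>2.2$ in the single-valley case to $\tau>2.2/n$.

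First, for each $i=1,\dots,n$, denote by $\lambda_i$ the geodesic distance from $p_i$ on $(\mathbb S^2,\gamma_0)$ and set $w_i=\log[\sin(\lambda_i/2)]$, so that $\Delta_{\gamma_0}w_i+1/2=2\pi\delta_{p_i}$. Let $\chi^{(i)}_{\e}$ be a smooth cutoff equal to $0$ on $B_{p_i}(\e/2)$ and $1$ outside $B_{p_i}(\e)$, and put $\tilde w_{i,\e}=\chi^{(i)}_{\e}w_i+(1-\chi^{(i)}_{\e})\log\e$. Since $\e$ is much less than the pairwise distance between the $p_i$'s, the cutoff regions are disjoint, so for $\tilde W_{\e}=\sum_{i=1}^n\tilde w_{i,\e}$ we have
\begin{equation*}
\Delta_{\gamma_0}\tilde W_{\e}+\f n2=0\quad\text{on }\mathbb S^2\setminus\cup_{i=1}^nB_{p_i}(\e),
\end{equation*}
together with the pointwise bounds $\tilde w_{i,\e}=\log\e+O(1)$ on $B_{p_i}(\e)$ and $\tilde w_{j,\e}=O(1)$ in $B_{p_i}(\e)$ for $j\neq i$, with all derivatives controlled as in the single-valley case.

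Next, choose any $\tau>2.2/n$ and set $\tilde\phi_{\e}=\tau\tilde W_{\e}$ and $\phi_{\e}=\tilde\phi_{\e}-\log s_{\e}$ for $s_{\e}>0$ to be determined. Away from the valleys we get $\Delta_{\gamma_0}\phi_{\e}+1.1=-n\tau/2+1.1<0$, exactly the analogue of the single-valley estimate. Inside each $B_{p_i}(\e)$, the cutoff introduces $\Delta_{\gamma_0}\phi_{\e}\le C_0/\e^2$ for a universal $C_0$, and using $f\ge m$ on $\cup_i B_{p_i}(\e)$ together with $\tilde\phi_{\e}=\tau\log\e+O(1)$ on $B_{p_i}(\e)$, the inequality
\begin{equation*}
\Delta_{\gamma_0}\phi_{\e}+1.1<\tfrac12 f(\o)e^{\phi_{\e}}\quad\text{on }\mathbb S^2
\end{equation*}
reduces, as in Lemma \ref{lemma-super-solutions-phi}, to choosing $s_{\e}$ of order $m\e^{\tau+2}$. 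Then the full estimate $\mathcal S(\overline\phi)<0$ follows from the splitting \eqref{3.0-v2}, Lemma \ref{lemma-difference-Laplacians}, and \eqref{eq-estimate-F-v1}, provided $a$ is sufficiently large.

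Finally, the two equalities in \eqref{eq-estimate-super-phi-multi} read off from the formula $\overline\phi=\tau\tilde W_{\e}-\log s_{\e}$: at $p_i$ we have $\tilde w_{i,\e}(p_i)=\log\e$ and $\tilde w_{j,\e}(p_i)=O(1)$ for $j\neq i$, yielding $\overline\phi(p_i)=\tau\log\e-\log s_{\e}+O(1)=-\log(m\e^2)+C$; while $\tilde W_{\e}$ is bounded above uniformly in $\e$ (since each $\log\sin(\lambda/2)\le 0$ and the cutoff values $\log\e$ are compensated by the shift), so $\overline\phi_{\max}=-\log s_{\e}+O(1)=-\log(m\e^{\tau+2})+C$. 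The main point to verify carefully is the bookkeeping of cross-terms $\tilde w_{j,\e}$ in $B_{p_i}(\e)$ when computing $\Delta_{\gamma_0}\phi_{\e}$ and $\phi_{\e}$, which is manageable since the supports of the cutoffs are disjoint and the remaining $w_j$'s are smooth there.
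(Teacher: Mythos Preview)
Your proposal is correct and follows essentially the same route as the paper's proof: you sum the single-valley profiles $\tau[\chi^{(i)}_{\e}\log\sin(\lambda_i/2)+(1-\chi^{(i)}_{\e})\log\e]$ over $i$, use disjointness of the cutoff regions to get $\Delta_{\gamma_0}\tilde\phi_{\e}+1.1=-n\tau/2+1.1<0$ outside the balls, bound $\Delta_{\gamma_0}\tilde\phi_{\e}$ by $C_0/\e^2$ inside each $B_{p_i}(\e)$, and then shift by $-\log s_{\e}$ with $s_{\e}\sim m\e^{\tau+2}$ to force the right-hand side to dominate there. The paper does exactly this, writing the explicit constant $s_{\e}=m\e^{\tau+2}/(2C_0C_1^n)$, where the factor $C_1^n$ records the $O(1)$ contributions of the $n-1$ smooth terms $\tilde w_{j,\e}$, $j\neq i$, on $B_{p_i}(\e)$, precisely the cross-term bookkeeping you flag at the end.
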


\begin{proof} For each $i=1, \cdots, n$,  denote by $d_i$ the distance on $\mathbb S^2$ from $p_i$
and take a cutoff function
$$\chi_{i,\e} = 
\begin{cases}
        0  & \text{on $B_{p_i}(\epsilon/2)$},\\
        1 & \text{on $\mathbb{S}^2\backslash B_{p_i}(\epsilon)$}. \end{cases}$$
Set 
$$\tilde{\phi}_{i,\e}=\tau\Big[\chi_{i,\e}\log\sin\f{d_i}{2}+(1-\chi_{i,\e})\log\e\Big].$$
Then, we have
$$\tilde{\phi}_{i,\e}=\begin{cases}\tau\log \e  & \text{on $B_{p_i}(\epsilon/2)$},\\
        \tau\log\e+O(1) & \mbox{on $B_{p_i}(\e)\backslash B_{p_i}(\e/2)$},\\
        \tau\log d_i+O(1)  & \mbox{on $\mathbb{S}^2\backslash B_{p_i}(\epsilon)$},\end{cases}$$
and 
$$ |\nab_0\tilde{\phi}_{i,\epsilon}|=O(\e^{-1}), \quad
         |\nab_0^2\tilde{\phi}_{i,\epsilon}|=O(\e^{-2}) \quad\text{on $\mathbb{S}^2\backslash B_{p_i}(\epsilon/2)$}.$$
Moreover, it holds
$$\D_{0}\tilde{\phi}_{i,\e}=-\f{\tau}{2}
\quad\text{on }\mathbb{S}^2\backslash B_{p_i}(\e).$$     
Set 
$$\tilde{\phi}_{\e}=\sum_{i=1}^n\tilde{\phi}_{i,\e}.$$
Then, it follows
$$\D_{0}\tilde{\phi}_{\e}+1.1=-\f{n\tau}{2}+1.1<0
\quad\text{on }\mathbb{S}^2\backslash \cup_{i=1}^nB_{p_i}(\e),$$ 
as long as $\tau>2.2/n$. Next, for each $i=1, \cdots, n$, we have 
$$e^{\t\phi_{i,\e}}\ge \e^\tau/C_1, \quad e^{\t\phi_{j,\e}}\ge 1/C_1 \,(j\neq i) \quad\text{on }B_{p_i}(\e),$$ 
for some positive constant $C_1$. 
Hence, 
$$e^{\t\phi_{\e}}\ge \e^\tau/C_1^n \quad\text{on }\cup_{i=1}^nB_{p_i}(\e).$$
We can now follow the same argument as in the proof of Lemma \ref{lemma-super-solutions-phi}. Set 
$$\phi_{\e}=\tilde{\phi}_{\e}-\log s_{\e},$$
where $s_\e$ is a positive constant given by 
$$s_{\e}=\f{m\e^{\tau+2}}{2C_0 C_1^n}.$$
We then have the desired result. 
\end{proof}

\begin{theorem}\label{prop-anisotropic-supersolution-phi-multi}
Let $p_1, \cdots, p_n$ be fixed $n$ points on $\mathbb S^2$, 
$\e$ be a positive constant 
much less than the distance between any two distinct $p_i$ and $p_j$ for $i, j=1, \cdots, n$, 
and $f$ be a smooth function on $\mathbb S^2$, such that $0\le f\le \nu(n)$ and
$$f\ge \nu(n)/2\,\text{ on }\cup_{i=1}^nB_{p_i}(\e/2),
\quad\quad f=0\,\text{ on }\mathbb S^2\setminus \cup_{i=1}^nB_{p_i}(\e),$$
for some constructed constant $\nu(n)\in (0,1)$ depending only on $n$. 
Then for each ${\ub}$, 
the equation $\mathcal S(\phi, \ub)=0$ admits a solution  $\phi_{\e}(\o, \ub)$  satisfying, 
for any $r_0\gg \e$ and any $i=1, \cdots, n$,  
\begin{equation}\label{eq-estimate-anisotropic-max-min}
\phi_{\e}(p_i, \ub)-\inf_{\o\in\mathbb S^2\setminus \cup_{i=1}^nB_{p_i}(r_0)}\phi_{\e}(\o, \ub)   
\le \f1{2n}\log\e+C,\end{equation}
where $C$ is a positive constant, independent of $\e$ and $\ub$. 
\end{theorem}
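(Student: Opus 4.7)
The plan is to adapt the single-valley construction from the proof of Theorem~\ref{prop-anisotropic-phi-one} to the case of $n$ valleys. The supersolution is already provided by Lemma~\ref{lemma-super-solutions-phi-multi}: applied with $m=\nu(n)/2$ it yields $\overline{\phi}\in C^\infty(\mathbb S^2)$ with $\overline{\phi}(p_i)=-\log(\nu(n)\epsilon^2/2)+O(1)=2|\log\epsilon|+O(1)$ at each $i$. It therefore remains to construct a subsolution $\underline{\phi}_*$ that is large away from every $p_i$, and then to apply Lemma~\ref{lemma-comparison}.

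For the subsolution I take the natural $n$-point analogue of \eqref{eq-expression-phi-one},
\[
\underline{\phi}_*(\omega)\;=\;\mu_n\sum_{i=1}^{n}\log\!\left[\frac{1+\delta-\cos d_i(\omega)}{\delta}\right],
\]
where $d_i$ is the spherical distance from $p_i$ and $\mu_n\in(0,1/n)$ and $\delta=\epsilon^{\beta}$ are parameters to be chosen. Applying \eqref{eq-identity-Laplacian-one} term by term yields
\[
\Delta_{\gamma_0}\underline{\phi}_*+n\mu_n\;=\;\mu_n\delta(2+\delta)\sum_{i=1}^{n}\frac{1}{(1+\delta-\cos d_i)^2}\;=:\;\tfrac12\,\underline{f}\,e^{\underline{\phi}_*}.
\]
Following the same absorbing argument as in the proof of Theorem~\ref{prop-anisotropic-phi-one} (splitting off the term $(1-n\mu_n)$ in $\mathcal S(\underline{\phi}_*)$ and using Lemma~\ref{lemma-difference-Laplacians} together with \eqref{eq-estimate-F-v1}), the inequality $\mathcal S(\underline{\phi}_*)>0$ reduces, for $a$ sufficiently large, to the two pointwise conditions (i)~$n\mu_n<1$, and (ii)~$\underline{f}>f$ on $\mathbb S^2$.

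Condition (ii) is checked on each $B_{p_j}(\epsilon)$ by isolating the dominant factor $(1+\delta-\cos d_j)^{-(\mu_n+2)}$ (all other $d_i$ remain bounded away from $0$) and using $1+\delta-\cos d_j\le \delta+\epsilon^2/2$. The precise choice
\[
\beta\;=\;\frac{2(\mu_n+2)}{1+n\mu_n}
\]
cancels the powers of $\epsilon$ exactly and yields $\underline{f}\ge C\mu_n$ uniformly on $\cup_i B_{p_i}(\epsilon)$; I then define $\nu(n):=C\mu_n/2$, which gives $f\le\nu(n)<\underline{f}$ on those balls, while outside $\cup_i B_{p_i}(\epsilon)$ the inequality is trivial since $f\equiv 0$ there. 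At any $q$ with $d_i(q)\ge r_0$ for all $i$, the subsolution satisfies $\underline{\phi}_*(q)=n\mu_n\beta|\log\epsilon|+O(1)$, and the desired gap $\overline{\phi}(p_j)-\underline{\phi}_*(q)\le\tfrac{1}{2n}\log\epsilon+C$ reduces to the single scalar condition $n\mu_n\beta\ge 2+\tfrac{1}{2n}$.

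The hard part will be the parameter balancing. Setting $x:=n\mu_n$ and substituting the above $\beta$, the gap condition rearranges to the quadratic inequality
\[
4x^2+(4n-1)x-(4n+1)\;>\;0,
\]
with admissible range $x\in(x_n,1)$, where $x_n=\bigl(-(4n-1)+\sqrt{(4n-1)^2+16(4n+1)}\bigr)/8$. One verifies directly that $x_n<1$ for every $n\ge 1$ (evaluating the quadratic at $x=1$ gives $2>0$), and for $n=1$ one recovers $x_1=(\sqrt{89}-3)/8$ as in the proof of Theorem~\ref{prop-anisotropic-phi-one}. Fixing any $x\in(x_n,1)$ and setting $\mu_n=x/n$ makes (i), (ii), and the gap condition all hold simultaneously. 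Lemma~\ref{lemma-comparison} then delivers $\underline{\phi}_*<\phi_\epsilon<\overline{\phi}$ on $\mathbb S^2$, and \eqref{eq-estimate-anisotropic-max-min} follows by combining the lower bound on $\underline{\phi}_*$ off the valleys with the upper bound on $\overline{\phi}$ at each $p_j$; smoothness in $\ub$ is inherited from Theorem~\ref{thrm-smoothness-tubes} exactly as in the single-valley setting.
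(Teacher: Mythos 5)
Your proposal is correct and follows essentially the same route as the paper's proof: the supersolution of Lemma \ref{lemma-super-solutions-phi-multi}, the multi-point subsolution $\underline{\phi}_*=\mu\sum_i\log[(1+\delta-\cos d_i)/\delta]$ with $\delta=\e^\beta$ and $\beta=2(2+\mu)/(1+n\mu)$, and the comparison via Lemma \ref{lemma-comparison}. Your quadratic $4x^2+(4n-1)x-(4n+1)>0$ in $x=n\mu_n$ is exactly the paper's equation $2n\mu^2+(2-\gamma)n\mu-(2+\gamma)=0$ with $\gamma=1/(2n)$, so the parameter balancing agrees as well.
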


{\color{black} This is simply Theorem \ref{prop-anisotropic-supersolution-phi-multi-intro}.} 

\begin{proof} We follow the proof of Theorem \ref{prop-anisotropic-phi-one}. 
We fix a $\ub$ and consider the operator $\mathcal S$  as in \eqref{3.0-v1}. We take the subsolution $\underline{\p}$ as in Lemma \ref{lemma-sub-solution-phi}
and the supersolution $\overline{\p}$ as in Lemma \ref{lemma-super-solutions-phi-multi}, 
with $m=\nu(n)/2$ and $\e$ there replaced by $\e/2$. 
By \eqref{eq-estimate-super-phi-multi}, we have, for each $i=1, \cdots, n$ 
\begin{align}\label{eq-estimate-super-phi-special-multi}\begin{split}
\overline{\phi}(p_i)=-2\log\e+O(1).
\end{split}\end{align}
By Theorem  \ref{theorem-existence-solutions}, 
there exists a unique solution  $\phi$ of $\mathcal S(\phi)=0$, satisfying 
\begin{equation}\label{eq-estimate-solution-upper-lower-multi} 
\underline{\phi}< \phi< \overline \phi.\end{equation}
We now construct an anisotropic subsolution $\underline{\phi}_*$  so that 
the value of $\overline{\phi}$ at each $p_i$ is significantly less than the value of 
$\underline{\p}_*$ at any point away from all $p_i$.
For each $i=1, \cdots, n$,  denote by $d_i$ the distance on $\mathbb S^2$ from $p_i$.  
Then, $d_i(p_i)=0$. 

For some positive constant $\mu \in (0, 1)$ 
and a small positive constant $\delta$ to be determined, construct
\begin{equation}\label{eq-expression-phi-multi}
\underline{\phi}_*=\mu\sum_{i=1}^n\log\Big[\f{1+\delta-\cos d_i}{\delta} \Big].\end{equation}
Then, 
for any fixed $r_0\in (0,\pi)$ and any $q\in \mathbb S^2\setminus \cup_{i=1}^nB_{p_i}(r_0)$, we have 
\begin{equation}\label{eq-subsol-phi-bound-outside-multi}
\underline{\phi}_*(q)
=-n\mu\log\delta+O(1). 
\end{equation}
Moreover,  
\begin{equation*}
\D_0\underline{\phi}_*+n\mu=\sum_{i=1}^n\f{\mu\delta(2+\delta)}{(1+\delta-\cos d_i)^2}.
\end{equation*}
Then, it holds
\begin{equation*}
(\D_0\underline{\phi}_*+n\mu)e^{-\underline{\phi}_*}=\frac12 \underline{f}, 
\end{equation*}
where  
\begin{equation}\label{eq-expression-f-multi}
\underline{f}=2\Big[\sum_{i=1}^n\f{\mu\delta(2+\delta)}{(1+\delta-\cos d_i)^2}\Big]
\prod_{i=1}^n\Big[\f{\delta}{1+\delta-\cos d_i}\Big]^{\mu}.\end{equation}
In the following, we will take $\mu$ such that $n\mu< 1$.

We now claim $\underline{f}> f$ on $\mathbb S^2$. 
Thus by \eqref{3.0-v2}, we have $\mathcal S(\underline{\phi}_*)>0$, since $n\mu<1$. To prove the claim, we fix an $i=1, \cdots, n$ and  prove $\underline{f}> \nu(n)$ on $B_{p_i}(\e)$, 
since $f\le \nu(n)$ on $B_{p_i}(\e)$ 
and $f=0$ outside $\cup_{i=1}^nB_{p_i}(\e)$. 
In the summation in the expression \eqref{eq-expression-f-multi}, we keep the term involving $d_i$ and drop other terms; 
while in the product, we single out the factor involving $d_i$. Then, 
\begin{equation*}
\underline{f}\ge 2\mu(2+\delta)\cdot
\f{\delta^{1+\mu}}{(1+\delta-\cos d_i)^{2+\mu}}
\prod_{j\neq i}\Big[\f{\delta}{1+\delta-\cos d_j}\Big]^{\mu}.\end{equation*}
Note that there are $n-1$ factors in the product. By $\cos d_i\ge 1-d_i^2/2$ and $\cos d_j\ge -1$ for $j\neq i$, we have 
\begin{equation*}
\underline{f}\ge \frac{2\mu(2+\delta)}{(2+\delta)^{\mu(n-1)}}\cdot
\f{\delta^{1+n\mu}}{(\delta+d_i^2/2)^{2+\mu}}.\end{equation*}
For some $\beta>0$ to be determined, we set $\delta=\e^\beta$. Then, 
\begin{equation*}
\underline{f}\ge 2\mu(2+\e^\beta)^{1-\mu(n-1)}\cdot\f{\e^{\beta(1+n\mu)}}{(\e^\beta+\e^2/2)^{2+\mu}}
\quad\text{on }B_{p_i}(\e).\end{equation*}
If $\beta>2$, we write 
$$\e^\beta+\e^2/2=\f12\e^2(1+2\e^{\beta-2}),$$
and we hence have
\begin{equation*}
\underline{f}\ge \mu 2^{4+2\mu-n\mu} \cdot\f{(1+\e^\beta/2)^{1-\mu(n-1)}}{(1+2\e^{\beta-2})^{2+\mu}} 
\cdot\e^{\beta(1+n\mu)-2(2+\mu)}\quad\text{on }B_{p_i}(\e).
\end{equation*}
We can take
\begin{equation}\label{kappa n}
\nu(n)=\mu 2^{1+2\mu-n\mu}.
\end{equation}
Thus, $\underline{f}> \nu(n)$ on $B_{p_i}(\e)$ 
if $\beta(1+n\mu)=2(2+\mu)$ and $\e$ is small. In the following, we take  
\begin{equation}\label{eq-expression-beta-multi}\beta=\f{2(2+\mu)}{1+n\mu}.\end{equation}
Then, 
\eqref{eq-subsol-phi-bound-outside-multi} reduces to, 
for any $q\in \mathbb S^2\setminus \cup_{i=1}^nB_{p_i}(r_0)$,  
\begin{equation}\label{eq-subsol-phi-bound-outside-v2-multi} 
\underline{\phi}_*(q)
=-n\mu\beta\log\e+O(1).
\end{equation}
At the moment, $\nu(n)$ also depends on $\mu$. Later on, we will fix $\mu$ depending only on $n$.  

We now compare $\underline{\phi}_*$ and $\overline{\phi}$. 
By \eqref{eq-estimate-super-phi-special-multi} and \eqref{eq-subsol-phi-bound-outside-v2-multi}, we have,  
for any $i=1, \cdots, n$ and any $q\in \mathbb S^2\setminus \cup_{i=1}^nB_{p_i}(r_0)$, 
\begin{equation*}
\overline{\phi}(p_i)-\underline{\phi}_*(q)
\le(n\mu\beta-2)\log\e+C,
\end{equation*} 
where we replaced $O(1)$ by $C$, a positive constant independent of $\epsilon$.
Note that the maximal $\mu$ is given by $1/n$. 
With $\beta$ as in \eqref{eq-expression-beta-multi}, if $\mu=1/n$, then $n\mu\beta=2+1/n$. 
Take any $\gamma\in (0, 1/n)$, say $\gamma=1/(2n)$, and solve the algebraic equation 
$n\mu\beta=2+\gamma$, i.e., 
$$2n\mu^2+(2-\gamma)n\mu-(2+\gamma)=0.$$
Its positive solution is given by 
$$\mu=\f{1}{4n}\big[\sqrt{(2-\gamma)^2n^2+8n(2+\gamma)}-(2-\gamma)n\big].$$
We take such $\mu$ and fix $\nu(n)$ accordingly. For $n$ large, we note that 
$$\f1n-\f{1}{n^2}+o\big(\f1{n^2}\big)<\mu<\f1n,$$
where the expression on the left-hand side corresponds to the solution for $\gamma=0$. 

By $\mathcal S(\p)=0$, $\mathcal S(\underline{\p}_*)>0$, and Lemma \ref{lemma-comparison}, 
we have $\underline{\p}_*<\p$. Here, in order to apply Lemma \ref{lemma-comparison}, 
we may need to enlarge $K$ so that $|\underline{\p}_*|_{C^2(\mathbb S^2)}\le K$, if necessary. 
Hence, for any $i=1, \cdots, n$ and $q\in \mathbb S^2\setminus \cup_{i=1}^nB_{p_i}(r_0)$, we prove have
$$\phi(p_i)-\phi(q)\le \overline{\phi}(p_i)-\underline{\phi}_*(q)
\le \gamma\log\e+C.$$
We then restore $\e$ and $\ub$, and write $\p_{\e}(\cdot, \ub)$ for $\phi$. This $\phi_{\epsilon}(\o,\ub)$ is our desired multi-valley solution. 
\end{proof} 

\appendix

\section{Estimates of Geometric Quantities}\label{appendix-A}

In this appendix, we collect useful estimates for geometric quantities $\O, \chi, \chib, \eta, \omb$  from \cite{AL}, which play important roles in this paper. All these quantities are defined and pointwisely estimated as listed below at 
$(\o, u, \ub)$ with $u=1-R$ and $R=\ub a e^{-\p}$. Recall $g=R^2\gamma$ in \eqref{eq-relation-g-gamma},  
$\Omega$ defined in \eqref{equation g}, and $\chi, \chib, \eta, \omb$ defined in \eqref{def Ricci coefficients}. Moreover, 
$\alpha_1$, $\alpha_2$ are defined in \eqref{eq-definitions-alphas}, i.e, 
\begin{equation*}\begin{split}
\alpha_1&=-\Big(\f12\O\tr_g\chib+\f{1}{R}+4\O\omb\Big),\\
\alpha_2&=\f12\O^{-1}\tr_g\chi-\f{1}{R}+\f{\ub a}{2R^2} f(\o,\ub).
\end{split}\end{equation*}
With $C$ as a uniform constant, we have the following estimates. 

\begin{itemize}
\item Estimates of geometric quantities and their angular derivatives:
\end{itemize}
\begin{equation}\label{A1}
\begin{split}
&{\color{black}R^2}|\chibh_{ij}|_g+{\color{black}R^2}|\eta_i|_g +{\color{black}R}|\O-1|\le {Ce^{\p}a^{-\f13}}, 
\\
&|R\O\tr_g\chib+2|+|\nab_{\gamma}(R\O\tr\chi_g \chib)|_{\gamma}+R^{-1}|\Omega\chibh_{kl}|_{\gamma}+R^{-1}|\nab_{\gamma}(\Omega\chibh_{kl})|_{\gamma}\le Ce^{\p}a^{-\f13},\\
&{\color{black}\ub}|\O\tr_g\chi|+\ub|\nab_{\gamma}(\O\tr_g \chi)|_{\gamma}+{\color{black}\ub}R^{-2}|\O\chih_{kl}|_{\gamma}+{\color{black}\ub}R^{-2}|\nab_{\gamma}(\O\chih_{kl})|_{\gamma}\leq {C}e^{\p}a^{-\f12},\\
&{\color{black}\ub R^{-2}|\Omega d^A \f{\partial \chi_{kl}}{\partial \theta^A}|_{\gamma}+\ub R^{-2}|\nab_{\gamma}(\Omega d^A \f{\partial \chi_{kl}}{\partial \theta^A})|_{\gamma}\leq Ce^{\phi}a^{-\f12},}\\
&|\eta_i|_{\gamma}+|\nab_{\gamma}\eta_i|_{\gamma}+R|\alpha_1|+R|\nab_{\gamma}\a_1|_{\gamma}+R|\alpha_2|+R|\nab_{\gamma}\a_2|_{\gamma} \le Ce^{\p}a^{-\f13}.
\end{split}
\end{equation}
These estimates can be read from norms defined in Section 2.6 and the statements of Theorem 3.1 and Proposition 5.1 in \cite{AL}.  
\smallskip

\begin{itemize}
\item Estimates of $u$-derivatives:
\end{itemize}
\begin{equation}\label{A2}
\begin{split}
&|\partial_u(\Omega\chibh_{kl})|_{\gamma}+R|\partial_u\eta_i|_{\gamma}
+R^2|\partial_u\alpha_1|+R^2|\partial_u\alpha_2|\le Ce^{\p}a^{-\f13}.\\
\end{split}
\end{equation}
We can derive the estimates for $\partial_u (\O\chibh_{kl}), \partial_u \eta_i, \partial_u\a_1$ via using equations in (2.4)  and the statement of Theorem 3.1 in \cite{AL}. For the estimate for $\partial_u \a_2$, we apply $(4.9)$ in \cite{An17} and get 
\begin{equation*}
R^2|\partial_u \a_2|=\Big|\f12 R^2 \partial_u(\O^{-1}\tr_g\chi)-1+R^{-1}\ub a f(\o,\ub)\Big|\le Ce^{\p}a^{-\f13}.
\end{equation*}

\begin{itemize}
\item Estimates of $\ub$-derivatives: 
\end{itemize}
\begin{equation}\label{A3}
\begin{split}
&\ub R^{-1}|\partial_{\ub}(\Omega\chibh_{kl})|_{\gamma}+\ub|\partial_{\ub}\eta_i|_{\gamma}
+\ub R|\partial_{\ub}\alpha_1|
+\ub R|\partial_{\ub}\alpha_2|\le {Ce^{\p}a^{-\f13}}.
\end{split}
\end{equation}
The estimates for  $\partial_{\ub}(\O\chibh_{kl}), \partial_{\ub}\eta, \partial_{\ub}\a_1$ can be derived via using equations in (2.4) and the statement of Theorem 3.1 in \cite{AL}. To obtain the estimate for $\partial_{\ub} \a_2$, recall the definition of $f$ in the statement of Theorem \ref{main thm}, i.e., 
\begin{equation*}
\ub a f(\o,\ub)=\int_0^{\ub}|\chih_{0}|^2_g(\o, \ub')d\ub'.
\end{equation*}
Hence, with $R=1-u$ and $\partial_{\ub}u=0$, it deduces
$$\ub R\partial_{\ub}\a_2=\f{\ub R}{2}\partial_{\ub}(\O^{-1}\tr_g\chi)+\f{\ub}{2R}|\chih_0|_g^2(\o, \ub).$$
Using the first and the last equations of (2.4) and the estimates ensured by Theorem 3.1 in \cite{AL}, we get
\begin{equation*}
\begin{split}
&\big|\partial_{\ub}(\O^{-1}\tr_g\chi)(\o, 1-R, \ub)+|\chih|_g^2(\o, 1-R, \ub)\big|\leq \f{Ca}{R^2}a^{-\f13},\\
&\big|-|\chih|_g^2(\o, 1-R, \ub)+{R^{-2}}|\chih_0|_g^2(\o, \ub)\big|\leq \f{Ca}{R^2}a^{-\f13}.
\end{split}
\end{equation*}
Via the triangle inequality, it follows
\begin{equation*}\label{A4}
\big|\partial_{\ub}(\O^{-1}\tr_g\chi)(\o, 1-R, \ub)+R^{-2}|\chih_0|_g^2(\o, \ub)\big|\leq \f{Ca}{R^2}a^{-\f13}.
\end{equation*}
Multiplying  both sides of the above inequality by $\ub R/2$, we hence obtain
\begin{equation*}
\ub R|\partial_{\ub} \a_2|=\Big|\f{\ub R}{2}\partial_{\ub}(\O^{-1}\tr_g\chi)+\f{\ub}{2R}|\chih_0|_g^2(\o, \ub)\Big|\leq \f{C\ub a}{R^2}a^{-\f13}=Ce^{\p}a^{-\f13}. 
\end{equation*}


\begin{thebibliography}{99} 

\bibitem{An12} X. An, \textit{Formation of trapped surfaces from past null infinity}, preprint (2012), arXiv:1207.5271.

\bibitem{An19} X. An, \textit{A scale-critical trapped surface formation criterion: a new proof via signature for decay rates}, preprint (2019), arXiv:1903.02967. 

\bibitem{An17} X. An, \textit{Emergence of apparent horizon in gravitational collapse},  Ann. PDE 6 (2020), 89 pages. https://doi.org/10.1007/s40818-020-00085-9

\bibitem{AL} X. An, J. Luk, \textit{Trapped surfaces in vacuum arising dynamically from mild incoming radiation}, Adv. Theor. Math. Phys. 21 (2017), no.1,  1-120. 

\bibitem{AEM} L. Andersson, M. Eichmair, J. Metzger, \textit{Jang's equation and its applications to marginally trapped surfaces}, Complex analysis and dynamical systems IV. Part 2, 13-45, Contemp. Math., 554, Israel Math. Conf. Proc., Amer. Math. Soc., Providence, RI, 2011.

\bibitem{AMS} L. Andersson, M. Mars, W. Simon, \textit{Local existence of dynamical and trapping horizons}, Phys. Rev. Lett. 95 (2005), 111102.

\bibitem{AM} L. Andersson, J. Metzger, \textit{The area of horizons and the trapped region}, Comm. Math. Phys. 290 (2009), no.3, 941-972. 

\bibitem{AG} A. Ashtekar, G. Galloway, \textit{Some uniqueness results for dynamical horizons}, Adv. Theor. Math. Phys. 9 (2005), no.1, 1-30.

\bibitem{AK03} A. Ashtekar, B. Krishnan, \textit{Dynamical horizons and their properties}, Phys. Rev. D. 68 (2003), 104030. 

\bibitem{AK} A. Ashtekar, B. Krishnan, \textit{Isolated and dynamical horizons and their applications}, Living Rev. Relativ. 7 (2004), 10. 

\bibitem{Chr:book} D. Christodoulou, \textit{The formation of black holes in general relativity}, Monographs in Mathematics, European Mathematical Soc. (2009). 

\bibitem{Chr-Kl}D. Christodoulou, S. Klainerman, \textit{The global nonlinear stability of the Minkowski space}, Princeton mathematical series 41 (1993).

\bibitem{Dafermos} M. Dafermos, \textit{The formation of black holes in general relativity}, Ast\'erisque 352 (2013), 243-313.

\bibitem{E1} M. Eichmair, \textit{The plateau problem for marginally trapped surfaces}, J. Differential Geom. 83 (2009), no. 3, 551-583.

\bibitem{E2} M. Eichmair, \textit{Existence, regularity, and properties of generalized apparent horizons}, Comm. Math. Phys. 294 (2010), no.3, 745-760.

\bibitem{GS} G. Galloway, R. Schoen, \textit{A generalization of Hawking's black hole topology theorem to higher dimensions}, Comm. Math. Phys. 266 (2006), no. 2, 571-576. 

\bibitem{GT} D. Gilbarg, N. S. Trudinger, \textit{Elliptic Partial Differential Equations of Second Order} Grundlehren, Vol. 224, Springer-Verlag, Berlin, (1983). 

\bibitem{HL} Q. Han, F-H. Lin, \textit{Elliptic partial differential equations (2nd ed.)}, Courant lecture notes, Vol. 1, (2011).

\bibitem{KLR} S. Klainerman, J. Luk, I. Rodnianski, \textit{A fully anisotropic mechanism for formation of trapped surfaces in vacuum}, Invent. Math. 198 (2014), no. 1, 1-26.

\bibitem{KR:Scarred} S. Klainerman, I. Rodnianski, \textit{On emerging scarred surfaces for the Einstein vacuum equations}, Discrete Contin. Dyn. Syst. 28 (2010), no. 3, 1007-1031.

\bibitem{KR:Trapped} S. Klainerman, I. Rodnianski, \textit{On the the formation of trapped surfaces}, Acta Math. 208 (2012), no. 2, 211-333.

\bibitem{Le} P. Le, \textit{The intersection of a hyperplane with a lightcone in the Minkowski spacetime}, J. Differential  Geom. 109 (2018), no. 3, 497-507.

\bibitem{L-R:Interaction} J. Luk, I. Rodnianski, \textit{Nonlinear interactions of impulsive gravitational waves for the vacuum Einstein equations}, Camb. J. Math. 5 (2017), no. 4, 435-570.

\bibitem{Melvin} S. Melvin, \text{The ``salt-fountain" and thermohaline convection}, Tellus, 12, 172-175.

\bibitem{Penrose} R. Penrose, \textit{Gravitational collapse and space-time singularities}, Phys. Rev. Lett. 14 (1965), 57-59. 

\bibitem{Penrose2} R. Penrose, \textit{Singularities and Time Asymmetry}, in S. W. Hawking and W. Israel (eds), \textit{General Relativity: An Einstein Centenary Survey} (1979), Cambridge University Press, Cambridge, pp. 581-638.

\bibitem{ST} P. G. Saffman, G. I. Taylor, \textit{The penetration of a fluid into a porous medium or Hele-Shaw cell containing a more viscous liquid}, Proc. Roy. Soc. A 245 (1958) 312.

\bibitem{Schmitt} R. W. Schmitt, \textit{The ocean's salt fingers}, Scientific American, May 1995, 70-75.

\bibitem{Sc} R. Schoen, Talk given at the Miami Waves conference, January 2004. 

\bibitem{SY} R. Schoen, S. T. Yau, \textit{The existence of a black hole due to condensation of matter}, Comm. Math. Phys. 90 (1983), no.4, 575-579. 

\bibitem{Yau} S. T. Yau, \textit{Geometry of three manifolds and existence of black hole due to boundary effect}, Adv. Theor. Math. Phys. 5 (2001), no. 4, 755-767. 

\end{thebibliography}
\end{document}